\numberwithin{equation}{section}
\newtheorem{theorem}{Theorem}%[section]
\newtheorem{lemma}{Lemma}
\newtheorem{remark}{Remark}
\newtheorem{example}{Example}
\newtheorem{proposition}{Proposition}
\newtheorem{definition}{Definition}
\newtheorem{corollary}{Corollary}
\newtheorem{fig}{Figure}
\newcommand{\bthe}{\begin{theorem}}
\newcommand{\ethe}{\end{theorem}}
\newcommand{\ben}{\begin{enumerate}}
\newcommand{\een}{\end{enumerate}}
\newcommand{\bit}{\begin{itemize}}
\newcommand{\eit}{\end{itemize}}
\newcommand{\beq}{\begin{equation}}
\newcommand{\eeq}{\end{equation}}
\newcommand{\ble}{\begin{lemma}}
\newcommand{\ele}{\end{lemma}}
\newcommand{\bde}{\begin{definition}}
\newcommand{\ede}{\end{definition}}
\newcommand{\bco}{\begin{corollary}}
\newcommand{\eco}{\end{corollary}}
\newcommand{\bpr}{\begin{proposition}}
\newcommand{\epr}{\end{proposition}}
\newcommand{\brem}{\begin{remark}}
\newcommand{\erem}{\end{remark}}
\newcommand{\bproof}{\begin{proof}}
\newcommand{\eproof}{\end{proof}}
\newcommand{\bexam}{\begin{example}}
\newcommand{\eexam}{\end{example}}
\newcommand{\bfi}{\begin{fig}}
\newcommand{\efi}{\end{fig}}
\newcommand{\btab}{\begin{tab}}
\newcommand{\etab}{\end{tab}}
\newcommand{\beao}{\begin{eqnarray*}}
\newcommand{\eeao}{\end{eqnarray*}\noindent}
\newcommand{\beam}{\begin{eqnarray}}
\newcommand{\eeam}{\end{eqnarray}\noindent}
\newcommand{\barr}{\begin{array}}
\newcommand{\earr}{\end{array}}
\newcommand{\bdis}{\begin{displaymath}}
\newcommand{\edis}{\end{displaymath}\noindent}
\def\E{{\mathbb E}}
\def\R{{\mathbb R}}
\def\cals_+{{\cals_+}}
\def\cals{{\mathcal{S}}}
\newcommand{\bs}{\boldsymbol}
\newcommand{\bsx}{\boldsymbol{X}}
\newcommand{\stp}{\stackrel{P}{\rightarrow}}
\newcommand{\stD}{\stackrel{\mathcal{D}}{\rightarrow}}
\newcommand{\stw}{\stackrel{ w}{\rightarrow}}
\newcommand{\al}{{\alpha}}
\newcommand{\eps}{\varepsilon}
\newcommand{\DAG}{{\rm DAG}}
\newcommand{\an}{{\rm an}}
\newcommand{\pa}{{\rm pa}}
\newcommand{\des}{{\rm de}}
\newcommand{\An}{{\rm An}}
\newcommand{\Pa}{{\rm Pa}}
\newcommand{\Des}{{\rm De}}
\let\norm\undefined % <-- "Undefine" \norm
\DeclarePairedDelimiter\norm{\lVert}{\rVert}
\def\D{\mathcal{D}}
\title{\vspace{-.7cm}Causal discovery in heavy-tailed linear structural equation models via scalings}
\author{Mario Krali\\Institute of Mathematics\\ 
	\'Ecole Polytechnique F\'ed\'erale de Lausanne (EPFL)\\
	e-mail:  \texttt{mario.krali@epfl.ch}}
\begin{document}
\maketitle

\begin{abstract}
	Causal dependence modelling of multivariate extremes is intended to improve our understanding of the relationships amongst variables associated with rare events. Regular variation provides a standard framework in the study of extremes. This paper concerns the linear structural equation model with regularly varying noise variables. % as a means of capturing the joint dependence structure of extremes. 
	We focus on extreme observations generated from such a model and propose a causal discovery method based on the scaling parameters of its extremal angular measure. We implement the method as an algorithm, establish its consistency and evaluate it by simulation and by application to river discharge datasets. Comparison with the only alternative extremal method for such model reveals its competitive performance.
\\

\noindent
%\begin{tabbing}
{\em AMS 2020 Subject Classifications:}  primary:
\,\,\,60G70; %Stochastic processes: Extreme value theory; extremal processes
\,\,\,62D20; %Causal inference from observational studies
\,\,\,62G32;  %Statistics of extreme values; tail inference
\,\,\,62H22 %Probabilistic graphical methods
%\,\,\,05C75 %Structural characterization of families of graphs
%secondary: \,\,\,65S05 %Numerical analysis: Graphical methods
%\end{tabbing}

\noindent
\textbf{Keywords:} causal discovery;  extreme value theory; graphical model; heavy tails; linear structural equation model; multivariate regular variation	
\end{abstract}

\section{Introduction and motivation}

Extreme events can result in catastrophic consequences, of which floods, droughts and heatwaves are obvious manifestations. Extreme value theory (EVT) has become the standard tool in studying such phenomena and in helping to better prepare to mitigate their effects. Often, these events are subject to cause-and-effect types of behaviour, which raises questions on
the underlying extremal causal mechanisms.

Causal discovery and structure learning assist in identifying causal structures and serve as a stepping-stone towards causal inference, and are studied using graphical models \citep{pearl, lau}. 
The process of uncovering cause-and-effect relationships involves a large number of variables, thus giving rise to high-dimensional settings. Observations in high dimensions are notorious for posing challenges in statistics, and this is further exacerbated in the study of extremes, which manifest only rarely \citep[Ch. 7]{DHF}. %To this end, first attempts have been made ...
%The process of uncovering cause-and-effect relationships involves a large number of variables, thus posing challenges which are exacerbated in the study of extremes due to their rare manifestations \citep[Ch. 7]{DHF}. %To this end, first attempts have been made ...

The focus of this paper is on graphical modelling, and in particular on the linear structural equation model (LSEM) \citep{pearl}. An LSEM is supported on a directed acyclic graph (DAG) $\mathcal{D}=(V, E)$ with nodes $V=\{1,\ldots, d\}$ and edge set $E\subset V\smash{\times V}$, and is recursively defined as 
\begin{align}\label{p2lsem}
X_i=\sum_{j\in\textrm{pa}(i)}c_{ij}X_j+s_{ii}Z_i,\hspace{1cm}i\in V,
\end{align}
for innovation variables $Z_i$. We shall restrict ourselves to  non-negative edge weights, so $c_{ij}\geq0$ for $i\neq j$, and innovation weights $s_{ii}>0$ in \eqref{p2lsem}, and refer to the resulting model as LSEM+. {This restriction arises because of the proposed methodology, but one may nevertheless find it plausible given our focus on multivariate extremes, which are often associated with the behaviour of maxima.} In EVT, dependence between extreme observations of the variables is characterised by the limiting angular measure.

The LSEM+ is closely related to the so-called recursive max-linear model (RMLM) studied in \cite{gk}, which is known to possess a discrete limiting angular measure \citep{foug}. Unlike the RMLM, which only allows the extreme shocks to travel through the network, the LSEM+  preserves smaller shocks. However, under the framework of regularly varying innovations, the LSEM+ also has a discrete limiting angular measure.  The RMLM manifests multivariate extremes in a ray-like fashion starting from a penultimate or pre-asymptotic level, but by contrast, the LSEM+  only shows such behaviour in the limit. The finite number of data points in real datasets naturally corresponds to a penultimate setup, which makes the LSEM+ %more suitable and 
more realistic in applications.

Recent research has focused on combining extremes and methods from graphical modelling \citep{gk, engelke:hitz:18, eglearn}, causal discovery \citep{gnecco, KK, KDK, mhalla, TBK},  causality for extremes of time-series \citep{bodik}, sparsity modelling \citep{ES,cooley, meyerclust, goix, leecooley}), extreme quantile prediction \citep{exforest} and clustering \citep{Chautru, JanWan}. Another prominent facet of extreme statistics concerns conditional modelling of extremes \citep{hefftawn}. 
%Although each of these methods adds richness to the specific area of statistical analysis it is concerned with, the resulting conclusions often render their connection to other aspects of statistical analysis rather unclear.  
%Although each of these methods adds richness to the specific area of statistical analysis it is concerned with, the resulting conclusions are often difficult to link%often render their connection rather unclear
%with other aspects of statistical analysis . 
Although each of these methods enriches a specific area of statistical modelling, questions remain on how to bridge the resulting conclusions to other aspects of statistical analysis. 
For instance, conditional extremes, seemingly a well-established sub-domain, is intended to capture the phenomenon of asymptotic independence between extremes, but does not offer a viable way to incorporate the resulting conditional marginal distributions into an overall joint dependence structure. Another example is the undirected graphical structure in \cite{engelke:hitz:18}, which makes it impossible to infer causal relationships amongst the node variables.

{The choice of the LSEM+ is partly motivated by the fact that discrete limiting angular measures form a dense subset among all angular measures \citep{foug}. The idea is thus to work with a model that is able to encode extreme causal dependencies and that may potentially approximate the extremal dependence structure. Secondly, the equations of the LSEM+ are instances of linear regression, which find wide applications in many domains. 
	Apart from applications in causal discovery, extremal causal dependencies may also complement the conditional approach of~\citet{hefftawn}, providing possibly useful information in the selection of a conditioning variable based on causal dependencies.}

{Throughout our analysis we work with the standardised representation of the LSEM+~\eqref{p2lsem} in order to facilitate the study of its extremal causal dependence structure; however, because of our focus on extremes, the proposed methodology can cover a wider class of models whose extremal behaviour is asymptotically equivalent to the LSEM+. One can think here of a model $\bs Y= \bs X+ \bs V$, with the components of $\bs V$ having lighter tail indices relative to the components of the LSEM+ $\bs X$. As the vectors $\bs X$ and $\bs V$ need not have the same dependence structure, the extreme causal relations of $\bs Y$ may differ from those in the bulk~of~its~distribution.} 
 
In this paper we propose a  methodology to uncover causal dependencies between node variables using only extreme observations. Under the standard assumption of regular variation, we exploit certain asymmetries reflected in the scaling parameters of the extreme angular measure. In particular, we refine the methodology of \cite{KK}, which relies on the angular measure of the $d$-dimensional vector $\bs X$ to compute the scaling parameters. In contrast, our approach starts with the bivariate angular measure, and then augments its dimension by accounting for the node variables identified.
\cite{gnecco} develop an algorithm for the causal ordering of heavy-tailed LSEMs using the causal tail coefficient. We use their approach as a benchmark in a simulation study and data application. Building on \cite{gnecco}, and considering settings with confounders, \citet{PCD} propose testing for direct causal links between two variables by including confounders in a regression setting. Several methods have been proposed to tackle structure learning for the LSEM in the non-extremal setting, using all $n$ observations. Amongst these are LiNGAM \citep{lingam1}, the pairwise likelihood ratio method \citep{pairwiselingam}, DirectLiNGAM \citep{directlingam},  and modified DirectLiNGAM \citep{wangcausal}. \citet{drtonreview} survey these methods.

This paper is organised as follows. Section \ref{p2prelims} introduces the necessary basics from graphical modelling and gives some properties of the LSEM+. Section \ref{p2sec:RV} outlines the required notions from multivariate regular variation and scalings. In Section \ref{p2slearn} we introduce the scaling methodology used for causal discovery and show its consistency. Section \ref{p2simdata} describes a simulation study and outlines applications to river discharges from stations along the Upper Danube basin and from the Rhine basin in Switzerland. Appendix \ref{p2sec:ARV} provides the necessary regular variation background; Appendix \ref{p2proofs} contains all proofs; Appendix \ref{p2estprocedure} outlines the estimation procedure and gives the consistency of the estimated causal orderings; and Appendix \ref{p2bplot5000} contains further numerical results.

\section{Graphs and linear structural equation models}\label{p2prelims}
\subsection{Vocabulary and notation}
We first introduce terminology for directed graphs \citep{lau}.
Let $\mathcal{D}=(V, E)$ denote a directed acyclic graph (DAG) with node set $V=\{1,\dots,{d}\}$ and edge set $E\subset V\times V$. For each node $i\in V$ the sets $\pa(i)=\{j\in V: (j,i)\in E\}$, $\an(i)$ and $\des(i)$ respectively denote its parents, ancestors and descendants, and we write $\Pa{(i)}=\pa(i)\cup \{i\} $, $\An(i)=\an(i)\cup \{i\}$ and $\Des(i)=\des(i)\cup\{i\}$. 
Finally, for $I\subset V$, $\an(I)$ denotes the ancestors of all nodes in $I$, and we write $\An(I)=\an(I)\cup I$.
A node $i\in V$ is called a source node if $\pa(i)=\emptyset$, and we let $V_0$ denote  the set of all source nodes.
{We use  $j\to i$ to denote an edge from node $j$ to node $i$}. A path $p_{ji}\coloneqq[\ell_0=j\to \ell_1 \to\cdots\to \ell_{m}=i]$ from $j$ to $i$ has length $|p_{ji}|=m$, and the set of all such paths is denoted by $P_{ji}$. 
We also write $j\rightsquigarrow i$ for $p_{ji}$. 

Structural equation models (SEMs) supported on graphs provide a convenient way to capture causal dependencies between observed random variables.
The underlying graph associates each node $j$ with a random variable $X_j$, and we  
say that $X_j$ causes $X_i$ (or $j$ causes $i$) whenever there is a path $j \rightsquigarrow i$. Our definition of causal dependence is thus path-induced and founded on ancestral relations, rather than solely based on the presence of edges.

A DAG $\mathcal{D}=(V,E)$ is {well-ordered} if  $i<j$ for all $j\in \pa(i)$, which we call a causal order.

Given nodes $i,j,k\in V$, we say that $X_i$ is a confounder of $X_j$ and $X_k$ (or $i$ is a confounder of $j$ and $k$) if there exist distinct paths $i \rightsquigarrow j$ and $i\rightsquigarrow k$ which do not pass through $k$ and $ j$, respectively.

\subsection{Solution of linear structural equation models  via innovations }

The LSEM $\boldsymbol{X}$ in \eqref{p2lsem} has a solution in terms of the innovation vector $\boldsymbol{Z}$. Given a well-ordered DAG $\D=(V,E)$, we take the edge coefficient matrix $C$ to be strictly upper-triangular. Let the diagonal matrix $S$ have entries $s_{ii}$ serving as weights for the components of $\bs Z$. The matrix $C$ is nilpotent, so  matrix inversion gives
\begin{align}\label{p2Amat}
\boldsymbol{X}=(I-C)^{-1}S\boldsymbol{Z}=A\boldsymbol{Z},
\end{align}
where we call $A$ the (innovation) coefficient matrix.
%As shown in Section ..., the latter characterizes the dependence structure of the vector $\boldsymbol{X}$. 
This implies that
\begin{align} \label{p2innovrep}
X_i=\sum_{j\in \textrm{An}(i)}a_{ij}Z_j, \quad i\in V,
\end{align} 
since $a_{ij}=0$ for $j\notin\textrm{An}(i)$. 

In comparison to the LSEM in \eqref{p2lsem}, the max-linear analogue is defined via
\begin{align}\label{p2rmlm}
X_i=\bigvee_{j\in\textrm{pa}(i)}c_{ij}X_j\vee s_{ii}Z_i,\hspace{1cm}i\in V,
\end{align}
with non-negative coefficients $c_{ij}$ and with $\vee$ denoting the maximum operator.
A path analysis approach analogous to max-linear operations \citep{gk} shows that the coefficients $a_{ij}$ of the LSEM equal
\begin{align}\label{p2lsemcoef}
\hspace{-.3cm}a_{ij}=\underset{p_{ji}\in P_{ji}}{\sum}d(p_{ji}) \mbox{ for } j\in {\an}(i),\quad a_{ij}=0 \mbox{ for }  j\in V\setminus {\An}(i),\quad a_{ii}=s_{ii},
\end{align}
with path weights $d(p_{ji})\coloneqq s_{jj}c_{k_1j}\cdots c_{ik_{l-1}}$ corresponding to the product of the edge weights along the path $p_{ji}$. The coefficient $a_{ij}$ therefore equals the sum of all path weights from node $j$ to node $i$, unlike the RMLM, where $a_{ij}$ is computed only for the max-weighted path.

%So, now that we know the specific form of the entries of $A$, we can opt for a more in-depth study. The following is partly motivated by Theorem 3.10 in \cite{gk}, but modified to account for linear operations.

\subsection{Linear structural equation models with positive edge weights}

%In the remaining part of the paper we restrict the edge weights to be non-negative, i.e., $c_{ij}\geq 0$. 
While non-negative edge-weights constrain the model, positive dependence is of particular interest when studying extremes, which manifest in the behaviour of maxima and, %consequently they are typically supported on the positive orthant
 in general, are supported on the positive orthant $\mathbb{R}_+^d$. As seen in \eqref{p2lsemcoef}, non-negativity of the edge weights translates into non-negative path weights. By contrast, a negative edge coefficient $c_{ij}$ might lead to a negative  coefficient $a_{ij}$, which moves the support of $\bs X$ outside $\mathbb{R}_+^d$.

The study of maxima and peaks over thresholds in applications typically involves a pre-processing step to recenter and truncate the observations to the non-negative orthant or uses transformations of linear models \citep{cooley, leecooley}. These papers focus only on extremal dependence on the positive orthant, corresponding to setting all negative entries of a general matrix $A\in\mathbb{R}^{d\times d}$ in \eqref{p2Amat} to zero. Proposition \ref{p2scalee} shows that the matrix $A$ provides information on the support of the angular measure, which fully characterises the dependence structure of multivariate extremes. The non-negativity constraint on the matrix $C$ therefore serves as a natural remedy that ensures non-negativity of $A$ in \eqref{p2Amat}.  

We now study some properties of the innovation coefficient matrix $A$, which resemble certain characteristics of the coefficient matrix in the RMLM. The next proposition characterises whether paths $j\rightsquigarrow i$ between nodes $i$ and $j$  are of the form $j\rightsquigarrow k\rightsquigarrow i$ for some $k\in \des(j)\cap \an(i)$, which proves to be important in deriving properties of the  diagonal coefficients $a_{ii}$ of the coefficient matrix.

\begin{proposition}\label{p2allpathcond}
	Consider an LSEM+ supported on the DAG $\mathcal{D}=(V,E)$, with edge weight matrix $C\in\mathbb{R}_+^{d\times d}$ and innovation coefficient matrix $A\in\mathbb{R}_+^{d\times d}$. For $i\in V$ and $j\in \an(i)$, all paths $j\rightsquigarrow i$ are of the form $j\rightsquigarrow k\rightsquigarrow i$, where $k\in\des(j)\cap\an(i)$, if and only if 
	\begin{align}\label{p2pathdecomp}
	a_{ij}=\frac{a_{ik}a_{kj}}{a_{kk}};
	\end{align}
	otherwise $a_{ij}>{a_{ik}a_{kj}}/ a_{kk}$.
\end{proposition}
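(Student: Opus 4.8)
The plan is to argue entirely from the path-sum representation \eqref{p2lsemcoef}, which writes $a_{ij}=\sum_{p_{ji}\in P_{ji}}d(p_{ji})$. The first step is to record how path weights behave under concatenation at the intermediate node $k$. If a path $p_{ji}$ visits $k$, then since $\mathcal{D}$ is acyclic it visits $k$ exactly once, so it splits uniquely into a sub-path $p_{jk}$ from $j$ to $k$ followed by a sub-path $p_{ki}$ from $k$ to $i$. Comparing the definitions of the three path weights and recalling $a_{kk}=s_{kk}$ yields the multiplicative identity
\[
d(p_{ji})=\frac{d(p_{jk})\,d(p_{ki})}{a_{kk}},
\]
where the factor $a_{kk}=s_{kk}$ appears because the innovation weight at the starting node $k$ is counted in $d(p_{ki})$ but not in $d(p_{ji})$.

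Next I would show that concatenation sets up a bijection between $P_{jk}\times P_{ki}$ and the set $P_{ji}^{k}$ of paths from $j$ to $i$ passing through $k$. Uniqueness of the decomposition is immediate; the delicate point is that every concatenation of a $j$-to-$k$ path with a $k$-to-$i$ path is again a \emph{simple} path. This is the one place acyclicity is genuinely used: if the two sub-paths shared a node $m\neq k$, then $m\rightsquigarrow k$ along the first together with $k\rightsquigarrow m$ along the second would create a directed cycle, contradicting that $\mathcal{D}$ is a DAG. Summing the multiplicative identity over this bijection and factoring the resulting double sum gives
\[
\sum_{p\in P_{ji}^{k}}d(p)=\frac{1}{a_{kk}}\Big(\sum_{p\in P_{jk}}d(p)\Big)\Big(\sum_{p\in P_{ki}}d(p)\Big)=\frac{a_{ik}a_{kj}}{a_{kk}},
\]
where $P_{jk}$ and $P_{ki}$ are nonempty precisely because $k\in\des(j)\cap\an(i)$.

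With this in hand the proposition follows by splitting $P_{ji}$ into paths through $k$ and paths avoiding $k$:
\[
a_{ij}=\sum_{p\in P_{ji}^{k}}d(p)+\sum_{p\in P_{ji}\setminus P_{ji}^{k}}d(p)=\frac{a_{ik}a_{kj}}{a_{kk}}+\sum_{p\in P_{ji}\setminus P_{ji}^{k}}d(p).
\]
All innovation weights satisfy $s_{ii}>0$ and all edges present in $E$ carry strictly positive coefficients, so each path weight is strictly positive; hence the residual sum vanishes exactly when no path from $j$ to $i$ avoids $k$, i.e.\ when every such path has the form $j\rightsquigarrow k\rightsquigarrow i$. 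This gives the equality \eqref{p2pathdecomp}, and otherwise the residual sum is strictly positive, yielding $a_{ij}>a_{ik}a_{kj}/a_{kk}$.

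The main obstacle is the bijection step: verifying that arbitrary concatenations remain simple paths, so that factoring the double sum neither overcounts nor introduces spurious walks, is where the acyclic structure is essential. A secondary point worth stating explicitly is the convention that edges in $E$ carry strictly positive weights, since this is precisely what upgrades the non-strict inequality to the strict one in the ``otherwise'' case.
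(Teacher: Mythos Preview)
Your proof is correct and follows essentially the same route as the paper's: both split $P_{ji}$ into paths through $k$ and paths avoiding $k$, use the multiplicative identity $d([p_{jk},p_{ki}])=d(p_{jk})d(p_{ki})/a_{kk}$, factor the double sum to obtain $a_{ik}a_{kj}/a_{kk}$, and conclude from non-negativity of the residual. Your treatment is in fact slightly more careful than the paper's in making the bijection $P_{jk}\times P_{ki}\to P_{ji}^{k}$ explicit and invoking acyclicity to ensure concatenations are simple, and in flagging the implicit convention that edges in $E$ carry strictly positive weight (needed for the strict inequality).
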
	

%We will not be making use of any specific set $U\subset V$, apart from $V$ itself. 
The next corollary  is a consequence of Proposition \ref{p2allpathcond} and is analogous to Corollary~3.12 of \cite{gk}.
\begin{corollary}\label{p2cor1}
In the setting of Proposition \ref{p2allpathcond}, $a_{ij}\geq {a_{ik}a_{kj}}/{a_{kk}}$ for any nodes $i,j,k$ of $V$.
\end{corollary}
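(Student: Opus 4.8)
The plan is to reduce the general statement to Proposition \ref{p2allpathcond} by a case analysis driven by whether the right-hand side $a_{ik}a_{kj}/a_{kk}$ vanishes. Throughout I would use two structural facts already recorded in the excerpt: the diagonal entries satisfy $a_{kk}=s_{kk}>0$, and for $\ell\in V$ one has $a_{i\ell}>0$ if and only if $\ell\in\An(i)$, since $a_{i\ell}=0$ for $\ell\notin\An(i)$ by \eqref{p2lsemcoef} while all path weights are non-negative. These let me read off the sign of each factor purely from ancestral relations.

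First I would treat the case where the right-hand side is zero. Because $a_{kk}>0$, this occurs exactly when $a_{ik}=0$ or $a_{kj}=0$, that is, when $k\notin\An(i)$ or $j\notin\An(k)$. Here the claimed inequality collapses to $a_{ij}\geq 0$, which holds simply because $A\in\mathbb{R}_+^{d\times d}$. This disposes of all triples that fall outside the scope of Proposition \ref{p2allpathcond}.

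Next I would assume the right-hand side is strictly positive, so $a_{ik}>0$ and $a_{kj}>0$, giving $k\in\An(i)$ and $j\in\An(k)$; transitivity of the ancestral relation then yields $j\in\An(i)$. Before invoking the proposition I would clear the degenerate coincidences of indices: if $k=i$ or $k=j$ the right-hand side simplifies to $a_{ij}$, so the inequality holds with equality; and if $i=j$, then $k\in\An(i)$ together with $i\in\An(k)$ forces $k=i$ in a DAG, again giving equality. In the one remaining configuration $i,j,k$ are distinct, whence $k\in\an(i)$ and $j\in\an(k)$, so $k\in\des(j)\cap\an(i)$ and $j\in\an(i)$. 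This is precisely the hypothesis of Proposition \ref{p2allpathcond}, which delivers $a_{ij}=a_{ik}a_{kj}/a_{kk}$ when every path $j\rightsquigarrow i$ factors through $k$ and the strict inequality $a_{ij}>a_{ik}a_{kj}/a_{kk}$ otherwise; either way $a_{ij}\geq a_{ik}a_{kj}/a_{kk}$.

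The only delicate point, and the step I would verify most carefully, is the bookkeeping of the degenerate configurations and the boundary between $\An$ and $\an$: Proposition \ref{p2allpathcond} is stated for $j\in\an(i)$ with $k\in\des(j)\cap\an(i)$, i.e.\ for \emph{proper} ancestral relations, whereas the corollary quantifies over all triples $i,j,k\in V$. Once the vanishing right-hand side is absorbed by non-negativity of $A$ and the coinciding-index cases are matched to equalities, all the substantive content is carried by Proposition \ref{p2allpathcond}, and no further argument is needed.
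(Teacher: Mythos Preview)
Your proposal is correct and matches the paper's approach: the paper itself gives no explicit proof beyond declaring the corollary ``a consequence of Proposition~\ref{p2allpathcond}'', and your case analysis (vanishing right-hand side handled by non-negativity of $A$, coinciding indices handled by direct simplification, remaining case by the proposition) is exactly the natural way to unpack that consequence. One very small quibble: the biconditional ``$a_{i\ell}>0$ iff $\ell\in\An(i)$'' is slightly stronger than what \eqref{p2lsemcoef} guarantees when some $c_{ij}$ may equal zero, but your argument only uses the ``only if'' direction, so the proof stands.
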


We now consider the standardised innovation coefficient matrix with entries defined as $(\bar{A})_{ij}=\big({a_{ij}^\alpha}/{\sum_{k=1}^d{a_{ik}^\alpha}}\big)^{1/\alpha}$ for $\alpha>0$, which is used in deriving the extremal scalings  in Lemma \ref{p2scalcoll}.
The next lemma shows that  the diagonal entries of $\bar{A}$ are the largest in their respective columns, a property that also holds for the coefficient matrix of an RMLM and is important in establishing asymmetries in the scaling methodology in Theorems \ref{p2sourcenodes} and \ref{p2descnodes}.

\begin{lemma}\label{p2ineq}
	Let $\bar{A}\in \mathbb{R}_+^{d\times d}$ be the standardised innovation coefficient matrix of an LSEM+. Then $\bar{a}_{jj}>\bar{a}_{ij}$ whenever $i\neq j$.
\end{lemma}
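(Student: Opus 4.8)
The plan is to rewrite each standardised entry as a ratio of the corresponding unstandardised entry to the $\ell^\alpha$-norm of its row, namely $\bar a_{ij}=a_{ij}/\big(\sum_{k=1}^d a_{ik}^\alpha\big)^{1/\alpha}$, and then to compare the $i$th and $j$th rows of $A$ under that norm. Since all row-norms are positive (because $a_{ii}=s_{ii}>0$), cross-multiplying reduces the claim $\bar a_{jj}>\bar a_{ij}$ to
\[
a_{jj}\Big(\textstyle\sum_{k=1}^d a_{ik}^\alpha\Big)^{1/\alpha}> a_{ij}\Big(\textstyle\sum_{k=1}^d a_{jk}^\alpha\Big)^{1/\alpha}.
\]
So the whole argument comes down to showing that the $i$th row dominates a suitable multiple of the $j$th row entrywise, and that this domination is strict in at least one coordinate.

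First I would dispose of the trivial case $j\notin\An(i)$: here $a_{ij}=0$ by \eqref{p2lsemcoef}, so $\bar a_{ij}=0<\bar a_{jj}$, the latter being strictly positive because $a_{jj}=s_{jj}>0$. This leaves the main case $j\in\an(i)$, in which $a_{ij}>0$.

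For the main case, the key step is to invoke Corollary \ref{p2cor1} with $j$ in the role of the intermediate node, which yields, for every $k\in V$,
\[
a_{ik}\ \ge\ \frac{a_{ij}}{a_{jj}}\,a_{jk}.
\]
Writing $\lambda=a_{ij}/a_{jj}\ge 0$, this says that the $i$th row dominates $\lambda$ times the $j$th row coordinatewise. Because $x\mapsto x^\alpha$ is increasing on $[0,\infty)$ for $\alpha>0$, raising to the power $\alpha$ and summing over $k$ preserves the inequality, so that $\sum_k a_{ik}^\alpha\ge\lambda^\alpha\sum_k a_{jk}^\alpha$ and hence $\big(\sum_k a_{ik}^\alpha\big)^{1/\alpha}\ge\lambda\big(\sum_k a_{jk}^\alpha\big)^{1/\alpha}$; multiplying through by $a_{jj}>0$ gives the displayed cross-multiplied inequality, but only with $\ge$.

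The main obstacle is upgrading this to a strict inequality, and I expect the strictness to come entirely from the diagonal column $k=i$. There $a_{ii}=s_{ii}>0$, whereas $a_{ji}=0$: indeed $j\in\an(i)$ together with acyclicity of $\mathcal D$ forces $i\notin\An(j)$, so $a_{ji}$ vanishes by \eqref{p2lsemcoef}. Thus $a_{ii}>\lambda a_{ji}=0$ strictly, and since every other coordinate satisfies $a_{ik}^\alpha\ge\lambda^\alpha a_{jk}^\alpha$, the summed inequality is strict, $\sum_k a_{ik}^\alpha>\lambda^\alpha\sum_k a_{jk}^\alpha$. Propagating this strictness through the increasing $\alpha$th root and the multiplication by $a_{jj}>0$ delivers $\bar a_{jj}>\bar a_{ij}$ and completes the argument.
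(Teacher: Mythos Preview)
Your proof is correct and follows essentially the same approach as the paper: both arguments reduce the claim to the cross-multiplied inequality, invoke Corollary~\ref{p2cor1} (with $j$ as the intermediate node) to obtain the coordinatewise domination $a_{ik}\ge (a_{ij}/a_{jj})\,a_{jk}$, and extract strictness from the column $k=i$ where $a_{ii}>0$ but $a_{ji}=0$. Your presentation is slightly more explicit than the paper's in separating out the trivial case $j\notin\an(i)$ and in phrasing the domination as a single entrywise bound before summing, but the ingredients are identical.
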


\section{Multivariate regular variation} \label{p2sec:RV}

As our interest lies in the extremal behaviour of the LSEM+, we let the innovation vector $\boldsymbol{Z}\in \R_+^d$ be regularly varying with index $\alpha>0$, i.e., $\boldsymbol{Z}\in{\rm RV}_+^{d}(\alpha)$,  and with independent and standardised components, i.e., for all $i\in\{1,\dots,{d}\}$, $n\mathds{P}(n^{-1/\alpha} Z_i>z)\to z^{-\alpha}$ for  $z>0$ as $n\to\infty$ .
The LSEM+ has properties  similar to max-linear models 
with independent regularly varying innovations \citep{Wang2011, gk, KK}.
Both are multivariate regularly varying and possess a discrete angular measure $H_{\bsx}$ on the positive unit simplex $\Theta_+^{{ {d}}-1}=\{\boldsymbol{\omega}\in \mathbb{R}^{ {d}}_+: \norm{\boldsymbol{\omega}}=1\}$. The measure	$H_{\boldsymbol{X}}$ is not necessarily a probability measure but can be normalised into one, $\tilde H_{\bsx}$, following Remark \ref{p2specmass2}.
Details on multivariate regular variation, the angular representation and the angular measure are given in Appendix~\ref{p2sec:ARV}; see also \citet{sres,ResnickHeavy}.

Our methodology is based on a particular extremal dependence measure introduced in Propositions~3 and~4 of \cite{lars}; see also \citet[\S~4]{cooley} and \citet[\S~2.2]{KK}. 
%{We remark that setting $\alpha=2$ is no restriction; see Remark~\ref{p2rem.alpha} in Appendix~\ref{p2sec:ARV}.}

\bde\label{p2scaledef}
Let $\bsx\in {\rm RV}^d_+(2)$ have angular measure $H_{\bs X}$ on $ \Theta_+^{{ {d}}-1}$  and angular representation $(R,\boldsymbol{\omega})=(\norm{\bsx}, {\bs X}/{R})$. 
For $ i,j\in \{1,\ldots, {d}\}$ define the extremal  dependence measure 
\begin{align*}%\label{p2entrytp}
\sigma_{ij}^2 = \sigma_{X_{ij}}^2&\coloneqq\int_{\Theta_+^{{ {d}}-1}}\omega_i \omega_j {\rm d}H_{\boldsymbol{X}}(\boldsymbol{\omega}), \quad \boldsymbol{\omega}=(\omega_1,\dots,\omega_{ {d}})\in \Theta_+^{{ {d}}-1}.
\end{align*}
We call $\sigma_i \coloneqq\sigma_{{X}_{ii}}$ the scaling parameter, or the scaling,  of $X_i$. 
\ede

The next proposition is similar to results that characterise the angular measure $H_{\bs X}$ from \citet{foug} and \citet{cooley}, adapted to account for linear operations. In this paper we work with the regular variation index $\al=2$ and the Euclidean norm $\|\cdot\|$. In case of the LSEM+~\eqref{p2lsem}, this allows a convenient representation of both the angular measure and the scalings from the entries of the coefficient matrix $A$. 

\bpr\label{p2scalee}
Consider the LSEM+ vector ${\boldsymbol{X}}=A\boldsymbol{Z}$, where $\boldsymbol{Z}\in {\rm RV}^{d}_+(2)$ has standardised margins and $A\in \mathbb{R}^{{d}\times{d}}_+$. Then
\begin{enumerate}
	\item[(i)]
	the angular measure $H_{\bsx}$ is supported on $(\bs a_{k}/\norm{\bs a_{k}}){=(\bs a_k/{(\sum_{i=1}^{d} a_{ik}^2)^{1/2}})}$ for  $k\in V$;
	\item[(ii)]
	$\sigma_{ij}^2=(AA^\top)_{ij}=\sum_{k=1}^da_{ik} a_{jk}$ for $i,j\in V$; and
	\item[(iii)]
	each margin $X_i$ has squared scaling $\sigma^2_i=\sum_{k=1}^{d} a_{ik}^2$  for $i\in V$.
\end{enumerate}
\epr

As is common in extremes, we only work with the standardised LSEM+ vector $\bsx$, obtained from~\eqref{p2lsem} by standardizing the innovation coefficient matrix $A$. The discussion preceding Lemma \ref{p2ineq} then implies that we are working directly with $\bar{A}$, i.e., $\boldsymbol{X}=\bar{A}\boldsymbol{Z}$, whereby Proposition~\ref{p2scalee} entails unit scalings for the margins.

%\ble\label{p2ineq1}
%For an LSEM+ $\boldsymbol{X}$ with innovation coefficient matrix $A$,
%$\bar{a}_{jj}>\bar{a}_{ij}$ for $i\in V$ and $j\in \an(i).$
%\ele

We summarise the assumptions used throughout the rest of the paper.\vspace{2mm}\\
{\bf {Assumptions A:}}
\begin{enumerate}\label{p2modelassump}
	\item[(A1)]
	The innovations vector $\boldsymbol{Z}\in {\rm RV}^{d}_+(2)$ has independent and standardised components.
	\item[(A2)]
	The choice of norm is the Euclidean norm, denoted by $\|\cdot\|$.
	\item[(A3)]
	The components of $\bsx$ are standardised. 
\end{enumerate}

In preparation for Theorems \ref{p2sourcenodes} and \ref{p2descnodes} and the examples of the next section, we briefly summarise some important properties of the scalings. In particular, we make use of maxima  over the rescaled components of the vector $\boldsymbol{X}$ under Assumptions A. Lemma \ref{p2scalcoll}, which is a consequence of the discussion in Appendix \ref{p2scalings}, adapts Lemma 6 of \cite{KK} to the LSEM+ and employs Lemma~\ref{p2ineq} to characterise the scalings of the maxima $M_I=\max(X_i: i\in I)$, defined as
\begin{align*}
\sigma_{M_{\boldsymbol{I}}}^2 =\int_{\Theta_+^{{d}-1}}\underset{i\in I}{\vee}\omega_i^{2} {\rm d}H_{\boldsymbol{X}}(\boldsymbol{\omega}),
\end{align*}
in terms of the coefficient matrix $A$.

\ble[\citet{KK}]\label{p2scalcoll}
Let $\boldsymbol{X}$ be an LSEM+ satisfying Assumptions A. Then the random variable $M_{I}\coloneqq \max(X_i, i\in I)$ lies in ${\rm RV}_+(2)$ and its squared scaling equals\\
{\rm{(i)}} $\sigma_{M_{I}}^2 =\sum_{k=1}^{d}\bigvee_{i\in I} a_{ik}^2$ if $I\subsetneq V$, {then}; and\\
{\rm{(ii)}} $\sigma_{M_{I}}^2 = \sum_{k=1}^{d}  a_{kk}^2$ if $I =V$.
\ele

\section{Structure learning }\label{p2slearn}
Structure learning, or causal discovery, often relies on specific assumptions on the graphical structure supporting the random variables. These typically amount to assuming causal sufficiency  \citep[p. 45]{spirt}, which is closely related to Markovianity of the model \citep[p. 30]{pearl}, or to no `hidden confounding', which ensures that there are no unmeasured sources of error \citep[p. 62]{pearl}. %In case of the latter, the innovations can be regrouped into other innovations under .%, that is $d=D$. 
While Markovianity may not hold in practice, causing it to attract much criticism \cite[p. 252]{pearl}, it is nevertheless exploited by many methods, including the graphical lasso \citep{glasso}, which relies on conditional independence properties of the multivariate Gaussian distribution. If there are no hidden confounders in an LSEM+, we can uncover its causal structure recursively.  

\subsection{Identification of source nodes} \label{p2sources}
We first  identify the source nodes by exploiting scaling asymmetries in pairwise maxima of the node variables. Our approach differs from that of \cite{KK}, who consider maxima over the $d$ node variables of the DAG. Instead, we focus on maxima of lower-dimensional subvectors, which  are important in determining the angular measure: the approach of \citet{KK} uses the $d$-dimensional angular measure of $\bs X$, whereas the approach we propose relies on the angular measure of the corresponding subvector. We address this difference in the following remark, which discusses some disadvantages of high-dimensional angular measures.

\begin{remark}\label{p2dimtrick}
Consider the LSEM+ vector  $\bs X\in{\rm RV}^d_+(2)$ and its subvector $\bs X_{ij}=(X_i, X_j)$ with normalised angular measures $\tilde H_{\bs X}$ and $\tilde H_{\bs X_{ij}}$, respectively supported on $\Theta^{d-1}_+$ and $\Theta_+$.  Proposition~\ref{p2scalee}~(i) shows that the columns of the coefficient matrix $A$ provide the support for $\tilde H_{\bs X}$, and that in combination with representation \eqref{p2innovrep} we can index these columns according to the nodes. Similarly, we can combine the  $i$-th and $j$-th rows of $A$ into the matrix $A^{(ij)}\in \mathbb{R}^{2\times d}_+$, whose columns $\bs a^{(ij)}\in\mathbb{R}^{2}_+$ provide the support for $\tilde H_{\bs X_{ij}}$.

Suppose that $\bs a_j$ has entries such that $a_{jj}>0$ and $a_{ij}=0$ for $i\neq j$, which implies that $j\notin \an(i)$. Typical examples of such nodes $j$ are the descendants of a lower order on a DAG. Consider the two measures $\tilde H_{\bs X}$ and $\tilde H_{\bs X_{ij}}$:
\item[(i)] under $\tilde H_{\bs X}$, we invoke Proposition \ref{p2discspect} and find that the probability of observing realisations supported on the atom $\bs a_j/\norm{\bs a_j}$ equals $\norm{\bs a_j}/d=a_{jj}/d$; 
\item[(ii)] under $\tilde H_{\bs X_{ij}}$, the probability of observing realisations supported on the atom $\bs a^{(ij)}_j/\norm{\bs a^{(ij)}_j}$ equals $\norm{\bs a^{(ij)}_j}/2=a_{jj}/2$.
\end{remark}

Remark \ref{p2dimtrick} illustrates that the angular measure $\tilde H_{\bs X_{ij}}$ may be preferable to $\tilde H_{\bs X}$ for a high dimension $d$, as the former assigns higher probability to the atoms $\bs a^{(ij)}_j$, whose entries are important in the computation of the scalings.  Similar reasoning applies when there are very small, but positive, entries $a_{ij}$. We show in Theorem \ref{p2sourcenodes} that deriving the scalings from the measures $\tilde H_{\bs X_{ij}}$ over all pairs $i\neq j$ carries necessary and sufficient information to identify the source nodes, which implies that there is no loss of information from not using $\tilde H_{\bs X}$ as in Theorem 2 of \citet{KK}.  
However, the main drawback of using $\tilde H_{\bs X_{ij}}$ is the cost of computing scalings over the available $d(d-1)$ pairs in the DAG, whereas the approach of \cite{KK} only requires computations of scalings over $d$ maxima.

We use the following notation for partially rescaled maxima:
for $a>1$ we write
\begin{equation}\label{p2Mmam}
%M_{1,\dots,{d}} = \max(X_1,..,X_{p-1}, X_{d}) 
%\quad\textrm{and} \quad 
M_{i,aj} \coloneqq\max(X_i, aX_j),\quad i,j \in V.
\end{equation}
%and notationally write for the scaling of $M_{i,aj}$ as $\sigma^2_{i,aj}\coloneqq \sigma^2_{M_{i,aj}}$.
The next theorem exploits the asymmetry between the scalings of $M_{i,j}$ and $M_{i,aj}$ in providing a criterion that helps to identify the source nodes in an LSEM+.

\begin{theorem}	\label{p2sourcenodes}
	Let $\boldsymbol{X}\in {\rm RV}_+^d(2)$ be an LSEM+ with innovation coefficient matrix $A$. Then $j$ is a source node on a well ordered DAG if and only if there exists a scalar $a>1$ such that 
	\begin{align}\label{p2critinit}
	\sigma^2_{M_{i,aj}}-\sigma_{M_{ij}}^2 =a^2-1, \quad \text{for all}\quad i\neq j.
	\end{align} %\Mario{and only if this relation holds for all $a>1$}.
	If $j$ is not a source node, then $\sigma^2_{M_{i,aj}}-\sigma_{M_{ij}}^2 \leq a^2-1$, with strict inequality if $i\in\an(j).$
	\end{theorem}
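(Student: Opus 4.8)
The plan is to reduce everything to the explicit column-wise scaling formula of Lemma~\ref{p2scalcoll} and then carry out a term-by-term comparison. First I would record the two scalings in terms of the (standardised) coefficient matrix. Since $aX_j=\sum_k (a\,a_{jk})Z_k$, the subvector $(X_i,aX_j)$ is again a maximum of two linear combinations of the regularly varying innovations, so applying Lemma~\ref{p2scalcoll} to this subvector (the scaled row merely multiplies the squared entries of row $j$ by $a^2$) gives
\[
\sigma^2_{M_{ij}}=\sum_{k=1}^{d}\max(a_{ik}^2,a_{jk}^2),
\qquad
\sigma^2_{M_{i,aj}}=\sum_{k=1}^{d}\max(a_{ik}^2,a^2 a_{jk}^2).
\]
Subtracting, the problem becomes controlling $\sum_k T_k$ with $T_k=\max(a_{ik}^2,a^2a_{jk}^2)-\max(a_{ik}^2,a_{jk}^2)$, and the crucial normalisation is $\sum_k a_{jk}^2=1$, which holds under Assumptions A (unit scalings, Proposition~\ref{p2scalee}), so that the target value satisfies $a^2-1=(a^2-1)\sum_k a_{jk}^2$.

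The heart of the argument is a comparison of each $T_k$ with $(a^2-1)a_{jk}^2$. I would split into the three regimes $a_{ik}^2\le a_{jk}^2$, then $a_{jk}^2<a_{ik}^2\le a^2a_{jk}^2$, and finally $a_{ik}^2>a^2a_{jk}^2$, checking in each case that $T_k-(a^2-1)a_{jk}^2\le 0$, with the gap vanishing precisely when $a_{jk}=0$ or $a_{ik}\le a_{jk}$. Summing over $k$ then yields the general bound $\sigma^2_{M_{i,aj}}-\sigma^2_{M_{ij}}\le a^2-1$, valid for every $i\neq j$ and every $a>1$, together with the equality characterisation: for a fixed $i$, equality holds if and only if $a_{ik}\le a_{jk}$ for all $k$ with $a_{jk}>0$.

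Equipped with this, the three assertions follow by feeding in the structural information. For the forward implication, a source node $j$ has $\an(j)=\emptyset$, so by \eqref{p2innovrep} its row satisfies $a_{jk}=0$ for $k\neq j$ and $a_{jj}=1$; the equality condition then reduces to $a_{ij}\le a_{jj}$, which holds (strictly) by Lemma~\ref{p2ineq}, so equality holds for all $i\neq j$ and in fact for every $a>1$. For the converse I would argue by contraposition: if $j$ is not a source, pick any $\ell\in\an(j)$ (nonempty), so that $a_{j\ell}>0$; taking $i=\ell$ and inspecting the term $k=\ell$, the equality condition would force $a_{\ell\ell}\le a_{j\ell}$, contradicting the strict column dominance $a_{\ell\ell}>a_{j\ell}$ of Lemma~\ref{p2ineq}. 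Finally, for the strict inequality when $i\in\an(j)$, the term $k=i$ has $a_{ji}>0$ and $a_{ii}>a_{ji}$ (again Lemma~\ref{p2ineq}), so it falls into one of the last two regimes and contributes a strictly negative gap, while all other gaps are $\le 0$; hence the sum is strictly below $a^2-1$.

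The step I expect to be most delicate is the term-by-term case analysis and, in particular, pinning down the exact equality condition ``$a_{ik}\le a_{jk}$ whenever $a_{jk}>0$'', since this is what links the algebraic identity to the graph-theoretic notion of a source node through the strict diagonal dominance of Lemma~\ref{p2ineq}. A minor but necessary point is to justify the scaling formula for the rescaled maximum $M_{i,aj}$, namely that multiplying $X_j$ by $a$ simply rescales row $j$ of the coefficient matrix by $a$ before Lemma~\ref{p2scalcoll} is applied.
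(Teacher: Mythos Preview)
Your proposal is correct and follows essentially the same route as the paper: both arguments evaluate $\sigma^2_{M_{ij}}$ and $\sigma^2_{M_{i,aj}}$ via Lemma~\ref{p2scalcoll}, compare the resulting sums column by column using the three regimes determined by the relative sizes of $a_{ik}^2$, $a_{jk}^2$ and $a^2a_{jk}^2$, and conclude through the strict column dominance $a_{\ell\ell}>a_{j\ell}$ of Lemma~\ref{p2ineq}. The paper packages the case split into the index sets $S_1(i,j)$ and $S_2(i,j)$ and treats the two implications separately, whereas you first establish the general bound $\sum_k T_k\le (a^2-1)\sum_k a_{jk}^2$ together with the exact equality characterisation ``$a_{ik}\le a_{jk}$ whenever $a_{jk}>0$'' and then specialise; this makes the final strict-inequality clause for $i\in\an(j)$ and the contrapositive for the converse drop out a bit more transparently, but the substance is the same.
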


In order to illustrate the difference between this result and Theorem 2 in \cite{KK} we re-consider their Example 3 with $d=4$.

\begin{example}\label{p2theo1ex1}
Let $\boldsymbol{X}$ satisfy the setting in Theorem \ref{p2sourcenodes} with innovation coefficient matrix $A$ and corresponding {\DAG} given in Figure \ref{p2fig1}.
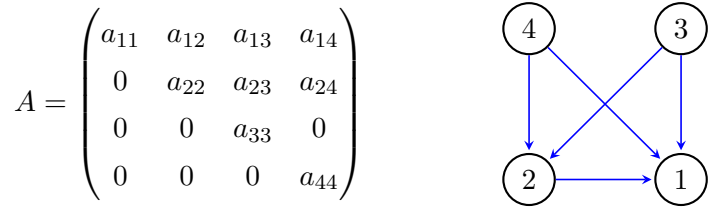
\begin{figure}[H]
	\centering
	\begin{tabular}{p{6cm}c}
		{$\displaystyle
			A={\renewcommand{\arraystretch}{1.2}
				\begin{pmatrix}
				a_{11}&a_{12} & a_{13}& a_{14} \tikzmark{lineone}\\
				0&a_{22}& a_{23}&a_{24}\tikzmark{linetwo}\\
				0&0&a_{33}&0\tikzmark{lineThree}\\
				0&0&0&a_{44}\tikzmark{lineFour}\\
				\end{pmatrix}}
			$}
		&$\vcenter{\hbox{\begin{tikzpicture}[
				> = stealth, % arrow head style
				shorten > = 1pt, % don't touch arrow head to node
				auto,
				node distance = 2cm, % distance between nodes
				semithick % line style
				]
				\tikzstyle{every state}=[
				draw = black,
				thick,
				fill = white,
				minimum size = 4mm
				]
				\node[state] (4) {$4$};
				\node[state] (3) [right of=4] {$3$};
				\node[state] (2) [below of=4] {$2$};
				\node[state] (1) [below  of=3] {$1$};

				\path[->][blue] (4) edge node {} (2);
				\path[->][blue] (3) edge node {} (2);
				\path[->][blue] (3) edge node {} (1);
				\path[->][blue] (2) edge node {} (1);
				\path[->][blue] (4) edge node {} (1);
				\end{tikzpicture}}}$
	\end{tabular}
	\caption{Innovation coefficient matrix $A$ and corresponding DAG.}\label{p2fig1}
	\end{figure}
	
	We first investigate whether node $2$ is a source node for the DAG in Figure~\ref{p2fig1}. Set $a>1$, $i=4$ and $j=2$, and compute
	\begin{align}\label{p2exineq}
	\sigma_{M_{4,a2}}^2-\sigma_{M_{4,2}}^2=a^2(a_{22}^2+a_{23}^2)+(a^2a_{24}^2)\vee a_{44}^2-(a_{22}^2+a_{23}^2+a_{44}^2)< (a^2-1).
	\end{align}
The last inequality follows on noting that if $(a^2a_{24}^2)\vee a_{44}^2=a_{44}^2$, then $(a^2a_{24}^2)\vee a_{44}^2-a_{44}^2=0$; if not, then  $(a^2a_{24}^2)\vee a_{44}^2=a^2a_{24}^2$, in which case $(a^2a_{24}^2)\vee a_{44}^2-a_{44}^2<(a^2a_{24}^2)-a_{24}^2=(a^2-1)a_{24}^2$, by Lemma \ref{p2ineq}.
The last inequality, together with Assumption (A3), whereby the normalised row-norms give $a_{22}^2+a_{23}^2+a_{24}^2=1$, produces the upper bound $(a^2-1)(a_{22}^2+a_{23}^2+a_{24}^2)=a^2-1$ in \eqref{p2exineq}. Thus node $2$ does not satisfy Theorem~\ref{p2sourcenodes}, so it is not a source node.

If $j=4$, then for $i \in\{1,2,3\}$ Lemma~\ref{p2ineq} implies that
\begin{align*}
	\sigma_{M_{i,a4}}^2-\sigma_{M_{i,4}}^2=\sum_{j\neq 4}a_{ij}^2+a^2a_{44}-\sum_{j\neq 4}a_{ij}^2-a_{44}=a^2-1.
\end{align*} 
Hence Theorem~\ref{p2sourcenodes} entails that $4$ is a source node. The same equalities hold for $j=3$, so both $3$ and $4$ are source nodes. 

Theorem 2 of \citet{KK}, by contrast, only requires the computation of the difference $\sigma_{M_{1,2,3,a4}}^2-\sigma_{M_{1,2,3,4}}^2$, which is  derived from a  four-dimensional angular measure.
\end{example}

\subsection{Identification of descendants} \label{p2desc}

In this section we assume that we have identified and reordered $p$ nodes of the DAG and grouped them into an ordered set $I\coloneqq(i_{d-p+1},\ldots, i_d)$, so that $i_{k_1}>i_{k_2}$ if $i_{k_1}\in \pa(i_{k_2})$; we write $I^c=V\setminus I$ for the set of remaining unidentified (or unordered) nodes.  We assume that there are no ancestors of $I$ amongst the unidentified nodes, i.e., $\An(I)=I$. A typical example for $I$ is the set of source nodes, which clearly satisfies these properties and can be identified by employing Theorem~\ref{p2sourcenodes}, as demonstrated in Example~\ref{p2theo1ex1}.

%In this section we identify the remaining unordered nodes in a recursive fashion, a procedure similar to Theorem \ref{p2sourcenodes}. That is, we suppose that we have identified a subset of nodes denoted by $I$, such that $j\notin \an(i)$ for all  $j\notin I$ and $i\in I$, a natural assumption given the causal order.  

Similarly to \eqref{p2Mmam}, but accounting for the non-empty ordered set $I$, we define for $a>1$,
\begin{equation}\label{p2Mmam2}
M_{i,aj, aI} \coloneqq\max(X_i, aX_j, a\bs X_I), \quad i, j \in I^c.
\end{equation}

The following theorem exploits asymmetries in the scalings of $M_{i,aj, aI}$ and provides a criterion for node $j$ to have no ancestors other than the identified ones in $I$.

\begin{theorem}\label{p2descnodes}
	Let $\boldsymbol{X}\in {\rm RV}_+^d(2)$ be an LSEM+ with innovation coefficient matrix $A$. Let $I$ denote the set of identified and ordered $p$ nodes having no ancestors outside~$I$. %, i.e., $\An(I)\cap I^c=\varnothing$. 
	Then $j\in I^c$ has no ancestors outside~$I$ %, i.e., $\an(j)\cap I^c=\emptyset$, 
	if and only if there exists a scalar $a>1$ such that
	\begin{equation}
	\sigma_{M_{i, aj, aI}}^2-\sigma_{M_{i, j, I}}^2=(a^2-1)\sigma_{M_{j, I}}^2,\quad \text{for all}\quad  i\notin I\cup\{j\}.
	\label{p2critpair}
	\end{equation}
	Otherwise, $\sigma^2_{M_{i,aj,aI}}-\sigma_{M_{ijI}}^2 \leq (a^2-1)\sigma_{M_{j, I}}^2$, with strict inequality for  $i\in I^c\cap\an(j)$.
\end{theorem}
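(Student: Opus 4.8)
The plan is to reduce both sides of \eqref{p2critpair} to sums over the columns of $A$ and then run a term-by-term comparison. Writing $b_k \coloneqq \bigvee_{m\in\{j\}\cup I} a_{mk}^2$ for the column-$k$ contribution to $\sigma_{M_{j,I}}^2$, Lemma \ref{p2scalcoll} (applied to weighted maxima, where rescaling a component $X_\ell$ by a factor simply multiplies the row entries $a_{\ell k}$ inside the column maxima) gives
$$\sigma_{M_{i,aj,aI}}^2=\sum_{k=1}^d\bigl(a_{ik}^2\vee a^2 b_k\bigr),\qquad \sigma_{M_{i,j,I}}^2=\sum_{k=1}^d\bigl(a_{ik}^2\vee b_k\bigr),\qquad \sigma_{M_{j,I}}^2=\sum_{k=1}^d b_k.$$
Consequently the object to control is $\sum_{k=1}^d T_k$ with $T_k\coloneqq(a_{ik}^2\vee a^2 b_k)-(a_{ik}^2\vee b_k)-(a^2-1)b_k$.

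First I would establish the universal inequality $T_k\le 0$ by a short case split on the position of $a_{ik}^2$ relative to $b_k$ and $a^2 b_k$: if $a_{ik}^2\le b_k$ then $T_k=0$; if $a_{ik}^2\ge a^2b_k$ then $T_k=-(a^2-1)b_k\le0$; and in the intermediate range $T_k=b_k-a_{ik}^2<0$. Summing yields $\sigma_{M_{i,aj,aI}}^2-\sigma_{M_{i,j,I}}^2\le(a^2-1)\sigma_{M_{j,I}}^2$ for every $i$ and every $a>1$, which is the ``otherwise'' bound. The same split shows $T_k=0$ if and only if $a_{ik}^2\le b_k$ or $b_k=0$, so equality in \eqref{p2critpair} for a given $i$ amounts to $a_{ik}^2\le b_k$ at every column $k$ with $b_k>0$.

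The remaining work is to translate this column condition into the ancestral statement, using two facts: $a_{mk}>0\iff k\in\An(m)$ (from \eqref{p2innovrep} and positivity of the path weights), and $a_{kk}>a_{ik}$ for $i\neq k$ (Lemma \ref{p2ineq}, valid because we work with the standardised $\bar A$). For the direction ``no ancestors outside $I$'', I assume $\an(j)\subseteq I$; together with $\An(I)=I$ this forces $\An(j)\subseteq I\cup\{j\}$, so any column with $b_k>0$ satisfies $k\in\An(j)\cup I\subseteq I\cup\{j\}$, i.e.\ $k\in\{j\}\cup I$. Then $b_k\ge a_{kk}^2>a_{ik}^2$ for all $i\neq k$ by Lemma \ref{p2ineq}, in particular for all $i\notin I\cup\{j\}$; hence $T_k=0$ throughout and \eqref{p2critpair} holds for every $a>1$. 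For the converse I argue by contraposition: if $j$ has an ancestor $i\in I^c\cap\an(j)$, I test column $k=i$. Since $\An(I)=I$ and $i\notin I$, no $m\in I$ has $i\in\An(m)$, so $b_i=a_{ji}^2$, which is positive because $i\in\an(j)$; and $a_{ii}^2>a_{ji}^2=b_i$ by Lemma \ref{p2ineq}, placing us in the strict-negative regime $T_i<0$. Thus $\sum_k T_k<0$ for every $a>1$, which gives the announced strict inequality for $i\in I^c\cap\an(j)$ and rules out \eqref{p2critpair}.

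I expect the main obstacle to be the bookkeeping that converts ``$b_k>0$'' into membership in $\An(j)\cup I$ and then invokes the hypothesis $\An(I)=I$ to discard the stray columns, since this is precisely where the assumption on $I$ enters and where the diagonal dominance of $\bar A$ must be applied to the correct index. A secondary point requiring care is the justification of the weighted-maxima scaling formula, namely that rescaling $X_j$ and $\bs X_I$ by $a$ merely inserts the factor $a^2$ inside the column maxima of Lemma \ref{p2scalcoll}; once this is in place, the remainder is the elementary case analysis above.
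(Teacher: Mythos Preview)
Your proof is correct and follows essentially the same route as the paper: both arguments reduce the scalings to column sums via Lemma~\ref{p2scalcoll}, then compare terms using the diagonal dominance of Lemma~\ref{p2ineq} together with the assumption $\An(I)=I$. Your presentation is somewhat more streamlined than the paper's, since the single quantity $T_k$ and its three-case split handle both directions uniformly, whereas the paper computes the forward direction by directly expanding the three scalings into sums over $I$, $I^c\cap\An(j)$ and the remainder, and treats the converse via the index sets $S_1(i,j,I)$, $S_2(i,j,I)$ (which are exactly your middle and top cases restricted to the relevant columns).
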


When two nodes, say $j_1$ and $j_2$, satisfy eq. (\ref{p2critpair}), then there is no causal link between them, as summarised in the next corollary.

\begin{corollary}\label{p2multnode}
	In the setting of Theorem \ref{p2descnodes}, suppose that the nodes $j_1, j_2\notin I$ satisfy (\ref{p2critpair}). Then $j_1\notin\an(j_2)$ and $j_2\notin\an(j_1)$.
\end{corollary}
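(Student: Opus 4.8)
The plan is to reduce the statement entirely to the equivalence proved in Theorem~\ref{p2descnodes}, applied separately to $j_1$ and to $j_2$. Recall that that theorem asserts that, for a node $j\in I^c$, condition \eqref{p2critpair} holds (for some $a>1$, across all relevant $i$) \emph{if and only if} $j$ has no ancestors outside $I$. The hypothesis of the corollary is that both $j_1$ and $j_2$ satisfy \eqref{p2critpair}, so I would first invoke Theorem~\ref{p2descnodes} twice to conclude that $\an(j_1)\subseteq I$ and $\an(j_2)\subseteq I$. No further regular-variation or scaling computation is needed beyond this point; the remainder is purely graph-theoretic.

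Next I would argue by contradiction for each of the two claims in turn. Suppose $j_2\in\an(j_1)$. Since $j_2\notin I$ by assumption, this would exhibit an ancestor of $j_1$ lying \emph{outside} $I$, contradicting $\an(j_1)\subseteq I$; hence $j_2\notin\an(j_1)$. Exchanging the roles of the two nodes and using $j_1\notin I$ together with $\an(j_2)\subseteq I$ gives, in exactly the same way, that $j_1\notin\an(j_2)$. This establishes both assertions of the corollary.

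I do not expect any genuine obstacle here: all the analytic work is already carried by Theorem~\ref{p2descnodes}, and the corollary is a direct combinatorial consequence of its ``no ancestors outside $I$'' characterisation. The one point worth flagging explicitly in the write-up is the role of the hypothesis $j_1,j_2\notin I$: it is precisely what converts a putative ancestral relation $j_2\in\an(j_1)$ (or $j_1\in\an(j_2)$) into an ancestor lying outside $I$, which is what contradicts the theorem. Omitting this observation would leave a gap, so I would state it plainly rather than treat it as implicit.
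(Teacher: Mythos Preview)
Your proposal is correct and is precisely the argument the paper has in mind: the paper does not even write out a separate proof of Corollary~\ref{p2multnode}, treating it as an immediate consequence of the ``no ancestors outside $I$'' characterisation in Theorem~\ref{p2descnodes}. Your two applications of the theorem together with the observation that $j_1,j_2\notin I$ are exactly what is needed.
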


We illustrate Theorem \ref{p2descnodes} by reconsidering the DAG in Example \ref{p2theo1ex1}.

\begin{example}\label{p2theo2ex2}
	Suppose that in Example \ref{p2theo1ex1} we have identified the two source nodes $3$ and $4$: hence $I=(3,4)$. Let $j=1$, and suppose we want to check whether $1\in \des(2)$. Then,
		\begin{align*}
		\sigma_{M_{2, a1, aI}}^2-\sigma_{M_{2, 1, I}}^2&=
	a^2a_{11}^2+\left(a^2a_{12}^2\right)\vee a_{22}^2+a^2a_{33}^2+a^2a_{44}^2- \sum_{k=1}^{4}a_{kk}^2\\
		&=(a^2-1)\sum_{k\neq 2}a_{kk}^2+\left(a^2a_{12}^2\right)\vee a_{22}^2-a_{22}^2\\
		&<(a^2-1)\sum_{k\neq2}a_{kk}^2+(a^2-1)a_{12}^2=	(a^2-1)\sigma_{M_{I, j}}^2,
		\end{align*}
		where the first equality applies Lemma \ref{p2scalcoll}, and the inequality follows from Lemma \ref{p2ineq} and arguments similar to those in \eqref{p2exineq}.  Since the inequality is strict, Theorem \ref{p2descnodes} implies that $1\in\des(2)$. Instead, if we take $i=1$ and $j=2$ and follow the same procedure, we reach equality (\ref{p2critpair}), indicating that $2\notin\des(1)$. This gives the causal ordering $I=(1, 2, 3, 4)$.
\end{example}

\subsection{A consistent algorithm to estimate a causal ordering}\label{p2consist}
We now combine the results of Sections~\ref{p2sources} and \ref{p2desc} into an algorithm to identify a causal ordering.
We use Theorem \ref{p2sourcenodes} to initialise Algorithm \ref{p2causordalg} (with $I=\emptyset$), and then apply Theorem \ref{p2descnodes} until the $d$ nodes have been ordered. We defer details on the estimation of the scalings to Appendix \ref{p2estprocedure}, where we also show that their consistency translates into consistency of our algorithm. We first outline the elements of Algorithm \ref{p2causordalg}:
\begin{itemize}
	\item[-]  the matrix $\Delta_{\hat I} \in \mathbb{R}^{d\times d}$, with entries \begin{align}\label{p2algupdt}
	(\Delta_{\hat I})_{ij}&=\hat{\sigma}^2_{M_{i, a j, aI}}-\hat{\sigma}^2_{M_{i, j, I}}-(a^2-1) \hat{\sigma}^2_{M_{ j, I}},\quad  i,j\in V \setminus I;\\
	(\Delta_{\hat I})_{ij}&=\infty,\quad  i{\rm \,or\,}j\in I;\nonumber
	\end{align}
	\item[-]  the term $ \eps_{\hat I}\coloneqq \eps\left| {\rm max}\left({\rm colMins}_{I^c} \left(\Delta_{\hat I}\right)\right)\right|$, where $\eps>0$ and the operator colMins takes the entrywise minimum of each column not indexed in $I$ of the matrix $\hat{\Delta}_I$; and
	\item[-] the vector ${\bs \delta}_{\hat I}=({\delta}_{\hat I,1},\ldots, {\delta}_{\hat I,d} )$.
\end{itemize}

At each iteration of the {\bf while} loop in Algorithm \ref{p2causordalg},  we update ${\Delta}_{\hat I}$ and $\eps_{\hat I}$ by accounting for the set $I$ of identified nodes.

We briefly illustrate the motivation behind Algorithm \ref{p2causordalg}, in particular for the $\eps$ term. To do so, we reconsider Example \ref{p2theo1ex1} and go through the first iteration when $I=\emptyset$.

\begin{algorithm}[t]
	\caption{Estimation of a causal order of the LSEM+ $\boldsymbol{X}$}\label{p2causordalg}
	\textbf{Input}: $(\bs X_i: i\in\{1,\ldots,n\})$  from $\bs X\in{\rm RV}^d_+(2)$ with standard margins, $a>1, \eps>0,$ \hspace*{1.5cm}$I=\emptyset, \Delta_{\hat I} = (0)_{d\times d}, {\bs \delta}_{\hat I} = (0)_{1\times d}; $\\
	\textbf{Output}: The ordered set $I$
	\begin{algorithmic}[1]
		\Procedure{}{} 
		\State \textbf{while} $| I|< d$ \textbf{do}
		\State \hspace{5mm} \textbf{Update}  $\Delta_{\hat I} $ \textbf{using }\eqref{p2algupdt}
		%										\algstore{myalg123}
		%									\end{algorithmic}
		%								\end{algorithm}
		%								
		%								\begin{algorithm} [H]                    
		%									\begin{algorithmic} [1]                   % enter the algorithmic environment
		%										\algrestore{myalg123}
		\State \hspace{5mm} \textbf{Set} ${\bs \delta}_{\hat I}= { {\rm colMins}_V (\Delta_{\hat I})} - {\rm max(colMins}_{I^c} (\Delta_{\hat I}))$
		%\State \hspace{10mm} 
		%$I_2= \underset{j\leq d}{\rm arg\, \rm sort} \tilde{\Delta}$
		\State \hspace{13.4mm} $\pi = \underset{\{\,p\,\in\, I^c\,:\, |\delta_{\hat I,p}|\, \leq\, \eps_{ \hat I} \}}{\rm arg\, \rm sort} \bs \delta_{\hat I}$
		\State \hspace{5mm} \textbf{Update $I$ by adding $\pi$}: $I\leftarrow (\pi,   I)$ 
		%\State \hspace{10mm} $I= (\underset{j \in I_2}{\rm arg\, \rm sort} ~\tilde{\Delta},  I)$
		\State\textbf{end while} 
		\State\textbf{return} $I$ 
		\EndProcedure
	\end{algorithmic}
\end{algorithm}

% \begin{algorithm}[t]
% 	\caption{Estimation of a causal order of the LSEM+ $\boldsymbol{X}$}\label{p2causordalg}
% 	\textbf{Input}: $(\bs X_i: i\in\{1,\ldots,n\})$  from $\bs X\in{\rm RV}^d_+(2)$ with standard margins, $a>1, \eps>0, \,I=\emptyset, \Delta_{\hat I} = (0)_{d\times d}, {\bs \delta}_{\hat I} = (0)_{1\times d}; $\\
% 	\textbf{Output}: The ordered set $I$
% 	\begin{algorithmic}[1]
% 		\Procedure{}{} 
% 		\State \textbf{while} $| I|< d$ \textbf{do}
% 		\State \hspace{5mm} \textbf{Update}  $\Delta_{\hat I} $ \textbf{using }\eqref{p2algupdt}
% 		%										\algstore{myalg123}
% 		%									\end{algorithmic}
% 		%								\end{algorithm}
% 		%								
% 		%								\begin{algorithm} [H]                    
% 		%									\begin{algorithmic} [1]                   % enter the algorithmic environment
% 		%										\algrestore{myalg123}
% 			\State \hspace{5mm} \textbf{Set} ${\bs \delta}_{\hat I}= { {\rm colMins}_V (\Delta_{\hat I})} - {\rm max(colMins}_{I^c} (\Delta_{\hat I}))$
% 		%\State \hspace{10mm} 
% 		%$I_2= \underset{j\leq d}{\rm arg\, \rm sort} \tilde{\Delta}$
% 		\State \hspace{13.4mm} $\pi = \underset{\{\,p\,\in\, I^c\,:\, |\delta_{\hat I,p}|\, \leq\, \eps_{ \hat I} \}}{\rm arg\, \rm sort} \bs \delta_{\hat I}$
% 		\State \hspace{13mm} $I\leftarrow (\pi,   I)$
% 		%\State \hspace{10mm} $I= (\underset{j \in I_2}{\rm arg\, \rm sort} ~\tilde{\Delta},  I)$
% 		\State\textbf{end while} 
% 		\State\textbf{return} $I$ 
% 		\EndProcedure
% 	\end{algorithmic}
% \end{algorithm}

\begin{remark}
Consider the LSEM+ $\bs X$ with DAG in Figure~\ref{p2fig1} and suppose that $I=\emptyset$. We look at the theoretical variants of the estimates used in Algorithm~\ref{p2causordalg}, i.e., the $d\times d$ matrix $\Delta_I$, computed from the theoretical scalings.  

For $j\in\{3,4\}$ and $i\neq j$, it follows from Example~\ref{p2theo1ex1} that the vector {\rm colMins}$_V(\Delta_I)$ equals $(c_1, c_2, c_3, c_4)$, with $c_3=c_4=0$, $c_1$ and $c_2$ strictly negative, and $\eps_I=0$. Line 4 of the algorithm gives $\bs \delta_I=(c_1, c_2, 0,0)$. For finitely many observations, however, the estimated vector $\bs \delta_{\hat I}=(\hat{c}_1,\hat{c}_2,\hat{c}_3,\hat{c}_4)$ is almost surely different from $\bs \delta_I$, and,  if $\eps>0$, then $\eps_{\hat I}>0$. If $\eps=0$ we may then only select one of the source nodes between $3$ and $4$, but if we let $\epsilon>0$, we allow for a  difference between $\hat{c}_3$ and $\hat{c}_4$, whereby we may select nodes 3 and 4 as source nodes in line 5 of the algorithm.  The error term preserves consistency in selecting the correct nodes: to see why, note that $\hat{c}_3$, $\hat{c}_4$ and $\eps_{\hat I}$  approach zero in probability, and $\hat{c}_1$ and $\hat{c}_2$ respectively converge to $c_1$ and $c_2$. The introduction of the $\eps$ term therefore enables the  selection of more than one source node at a time. When $I\neq\emptyset$, a similar reasoning applies to the nodes with ancestors in $I$.
\end{remark}

The methodology proposed in \citet[\S~7]{KK} relies on fixed error terms $\eps_1$ and $\eps_2$ which, if too small, may prevent their algorithm from re-ordering the nodes.
In contrast, Algorithm~\ref{p2causordalg} updates $\eps_{\hat I}$ for each iteration, ensuring that a causal ordering is~returned.

The next proposition, proved in Appendix~\ref{p2consistencyy}, establishes consistency of Algorithm \ref{p2causordalg}.

\begin{proposition}\label{p2consistencyy}
	Let $\boldsymbol{X}\in {\rm RV}_+^d(2)$ be an LSEM+ with coefficient matrix $A$. Let $I=(i_1,\ldots, i_d)$ denote the output of Algorithm \ref{p2causordalg}. Then $I$ is a consistent estimator of a causal ordering of the DAG supporting $\boldsymbol{X}$.
\end{proposition}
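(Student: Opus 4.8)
The plan is to prove consistency by induction over the iterations of the \textbf{while} loop, reducing the claim to two asymptotic properties of the selection step and feeding in the consistency of the scaling estimators established in Appendix~\ref{p2estprocedure}. Throughout I work with the population matrix $\Delta_I$ obtained by replacing each $\hat\sigma^2$ in \eqref{p2algupdt} by its true counterpart, and exploit that the selection quantities are continuous functions of finitely many scalings.

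First I would analyze the population quantities for a fixed ancestrally closed set $I$ (i.e.\ $\An(I)=I$). For any $j\in I^c$ a direct computation gives $(\Delta_I)_{jj}=0$, since $M_{j,aj,aI}=a\,M_{j,I}$ and $M_{j,j,I}=M_{j,I}$. By Theorem~\ref{p2descnodes} (and Theorem~\ref{p2sourcenodes} when $I=\emptyset$), the column minimum ${\rm colMins}_j(\Delta_I)=\min_{i\in I^c}(\Delta_I)_{ij}$ equals $0$ exactly when $j$ has no ancestor outside $I$, and is strictly negative otherwise, because any $i\in I^c\cap\an(j)$ contributes $(\Delta_I)_{ij}<0$. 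Writing $S(I)$ for the set of nodes in $I^c$ with no ancestors outside $I$, the fact that the sub-DAG on $I^c$ always has a source shows $S(I)\neq\emptyset$ whenever $I^c\neq\emptyset$. Hence ${\rm colMins}_{I^c}(\Delta_I)$ vanishes on $S(I)$ and is bounded above by some $-\gamma(I)<0$ off $S(I)$; minimising $\gamma(I)$ over the finitely many ancestrally closed $I\subsetneq V$ gives a single deterministic gap $\gamma>0$.

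Next I would transfer this to the estimators. Since $\hat\sigma^2\stp\sigma^2$ entrywise and all quantities in \eqref{p2algupdt}, the column minima, their maximum, $\bs\delta_{\hat I}$ and $\eps_{\hat I}$ are continuous in finitely many scalings, the continuous mapping theorem yields ${\rm colMins}_{I^c}(\Delta_{\hat I})\stp{\rm colMins}_{I^c}(\Delta_I)$. In particular $\max({\rm colMins}_{I^c}(\Delta_{\hat I}))\stp 0$, so $\eps_{\hat I}\stp 0$, while for $p\notin S(I)$ one has $\delta_{\hat I,p}\stp -\gamma(I)\le-\gamma<0$. The two properties I need are then: (i) the node attaining $\max({\rm colMins}_{I^c}(\Delta_{\hat I}))$ has $\delta_{\hat I}$-entry equal to $0$ by construction, hence passes the threshold and lies in the selected set, so $\pi\neq\emptyset$, and this maximiser belongs to $S(I)$ with probability tending to one through the gap $\gamma$; and (ii) every $p\notin S(I)$ satisfies $|\delta_{\hat I,p}|>\eps_{\hat I}$ eventually, since $|\delta_{\hat I,p}|\to\gamma(I)>0$ while $\eps_{\hat I}\to 0$, so no wrong node is selected. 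A union bound over the finitely many pairs $(I,p)$ makes (i)--(ii) hold simultaneously for all ancestrally closed $I\subsetneq V$ on an event of probability tending to one.

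Finally I would run the induction on this event. The invariant is that the current $I$ is ancestrally closed and its tuple is a valid partial causal order. Prepending a nonempty $\pi\subseteq S(I)$ preserves ancestral closure, because the ancestors of nodes in $S(I)$ already lie in $I$; it also preserves validity of the order, since all ancestors of each added node sit further back in the tuple, while Corollary~\ref{p2multnode} guarantees that the simultaneously added nodes of $S(I)$ have no mutual causal links, so their internal ${\rm arg\,sort}$ ordering is admissible. As $|I|$ strictly increases each iteration, the loop terminates after at most $d$ steps with a valid causal order, giving consistency. The main obstacle is the thresholding step: the data-dependent cutoff $\eps_{\hat I}$ and the $\delta$-values of the nodes in $S(I)$ both vanish, so one cannot hope to capture all of $S(I)$ and must instead secure a correct nonempty selection (the maximiser) while excluding wrong nodes via the deterministic gap $\gamma$, which has to be made uniform over the random, adaptively chosen sequence of sets $I$.
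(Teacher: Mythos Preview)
Your argument is correct and shares the paper's skeleton: induction over iterations of the \textbf{while} loop, consistency of the scaling estimators from Appendix~\ref{p2estprocedure}, and the separation supplied by Theorems~\ref{p2sourcenodes} and~\ref{p2descnodes} between nodes with and without remaining ancestors. The paper proceeds step by step, defining an error event $E^s$ at iteration $s$, decomposing $\mathbb{P}(E^r)$ by conditioning on $\{\cup_{s\le r-1}E^s\}^c$, and invoking the inductive hypothesis that this complementary event has probability tending to one; on that event the current $I$ is ancestrally closed, and one again shows $\eps_{\hat I}\stp 0$ while the $\delta$-entry of any node with a remaining ancestor converges to a strictly negative constant.

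Your route differs in how you handle the adaptivity of $I$. Rather than conditioning step by step, you extract a single deterministic gap $\gamma>0$ by minimising over the finitely many ancestrally closed $I\subsetneq V$, then take a union bound over all such $I$ to obtain one high-probability event on which the induction runs deterministically. This is a cleaner bookkeeping device: it avoids the conditional decomposition, and the observation that the arg-max of ${\rm colMins}_{I^c}(\Delta_{\hat I})$ always has $\delta$-entry zero, hence is automatically selected, makes the non-emptiness of $\pi$ explicit (the paper's proof leaves this implicit). The trade-off is that your argument front-loads the structural work on $\Delta_I$ (including the diagonal computation $(\Delta_I)_{jj}=0$), whereas the paper's presentation stays closer to the sequential nature of the algorithm. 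Both approaches are valid and rest on the same estimates.
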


\section{Simulation study and data application \protect\footnote{A statistical package containing code for the proposed methodology is under development and will soon be available on \texttt{github.com/mariokrali}\,.}} \label{p2simdata}

\subsection{Methods}
In this section we evaluate the empirical performance of Algorithm \ref{p2causordalg} by simulation %run in R (\cite{R.ref})
and an application to two river discharge datasets. We compare  our method to EASE \citep{gnecco}, the only alternative method that can infer causal orderings from extreme observations for the LSEM. EASE is  based on the causal tail coefficient,
$$\Gamma_{ij}=\lim\limits_{u\to 1^-}\mathbb{E}[ F_j(X_j)|  F_i(X_i)>u ],
$$
where $F_i$ and  $F_j$ denote the distribution functions of $X_i$ and  $X_j$. The $\Gamma_{ij}$ are computed for all available pairs to estimate a causal ordering. To implement EASE we use the R package \texttt{causalXtremes} \citep{gnecco}. %An alternative approach for structure learning can be found in \cite{TBK}; however, we do not use their method as it only targets max-linear models supported on directed spanning trees, unlike the general DAGs that we study here.

\subsection{Simulation setup}\label{p2simsetup}
The simulations are carried over random DAGs which are used to support the LSEM+'s. To give a variety of such models, we consider several configurations of the dimension $d$ and the sparsity $p$ of the DAGs, and of the parameters of the LSEM+, namely the index of regular variation $\alpha$ and the sample size $n$. For every combination of $d\in\{20, 40, 50\}$, $p\in\{0.05, 0.1\}$, $\alpha\in\{2,3\}$ and $n\in\{1000, 5000\}$, we generate 50 random DAGs and compare the two methods.

We construct each DAG via its upper-triangular adjacency matrix $C\in\mathbb{R}_+^{d\times d}$, with edges drawn independently from a Bernoulli distribution with success probability $p$. The edge weights in $C$ and the innovation weights $s_{ii}$ of $S$ in \eqref{p2lsem} are drawn as independent $\mathcal{U}[0.1, 1.5]$ random variables. 

%For a fixed matrix $C$, with innovation coefficient matrix $A$, we sample the margins of 
The innovation vector $\boldsymbol{Z}$ has independent $|t_\alpha|$-distributed components, where $t_\alpha$ denotes a Student-$t$ distribution with $\alpha$ degrees of freedom.

%For every combination, we evaluate the performance of Algorithm \ref{p2causordalg} over 50 i.i.d. DAGs. 
As choice of input parameters for Algorithm~\ref{p2causordalg} we consider $\epsilon \in\{0.1, 0.4\}$ and fix the scalar parameter $a=1.3$. Appendix \ref{p2ascal} investigates its stability for other choices of $a$.

To estimate the scaling parameters for Algorithm~\ref{p2causordalg}, we employ the empirical probability integral transform to first standardise the original observations to standard Fr\'echet(2) margins, and then use the $k$ largest thresholded observations. This procedure is described in Appendix~\ref{p2estprocedure}.
We apply the one-tailed version of EASE with the prespecified choice $k=n^{\lfloor 0.4\rfloor}$, which was found to yield the best performance in \cite{gnecco}.

\subsection{Evaluation metrics}
To evaluate the performance of Algorithm \ref{p2causordalg} we use the Structural Intervention Distance (SID) metric \citep{sid} which was used to assess the performance of EASE in \cite{gnecco}. This  metric focuses on causal relations in a DAG and employs intervention distributions via parental adjustments between pairs of nodes. This renders SID the natural choice in evaluating differences between causal orderings. The metric can be used to quantify the distance between two DAGs, or a DAG and a completed partially directed acyclic graph (CPDAG). We use the implementation of the metric in the package \texttt{causalXtremes} \citep{gnecco}, where SID is normalised to take values in $[0,1]$, with lower values indicating a smaller distance and thus a better performance. A SID equal to zero indicates valid causal orderings on both DAGs. In the simulation study we compute the SID between the true DAG and the fully connected DAG corresponding to the estimated causal ordering. The latter is obtained by adding the edges $j\to i$ if $j>i$.

\subsection{Simulation results}\label{p2simres}
Figures~\ref{p2runs1000alt} and \ref{p2vsruns1000} summarise our findings. The performances of both algorithms are affected by the choice of the parameters. In particular, both Algorithm \ref{p2causordalg} and EASE perform worse in higher-dimensional and denser regimes, which give rise to more complex dependence structures with more edges and parental adjustments. The difference in performance  can be seen by comparing SID between the settings with $p=0.05$ and $p=0.1$ in Figure~\ref{p2runs1000alt}.% indicates  for both methods in the denser regime. 

Figure~\ref{p2vsruns1000} shows that Algorithm~\ref{p2causordalg} performs better in the sparser regime for a suitable choice of threshold $k$, whereas EASE is better in the denser, higher-dimensional setup and for the heavier tail, i.e., $(d, p, \alpha)=(50, 0.1, 2)$.
%Regardless of the dimension $d$, the SID for Algorithm \ref{p2causordalg} is also lower than the one for EASE in the sparser setup .

\begin{figure}[htbp]\centering
	\hspace{-0.4cm}\includegraphics[ height=5cm, width=13cm, clip]{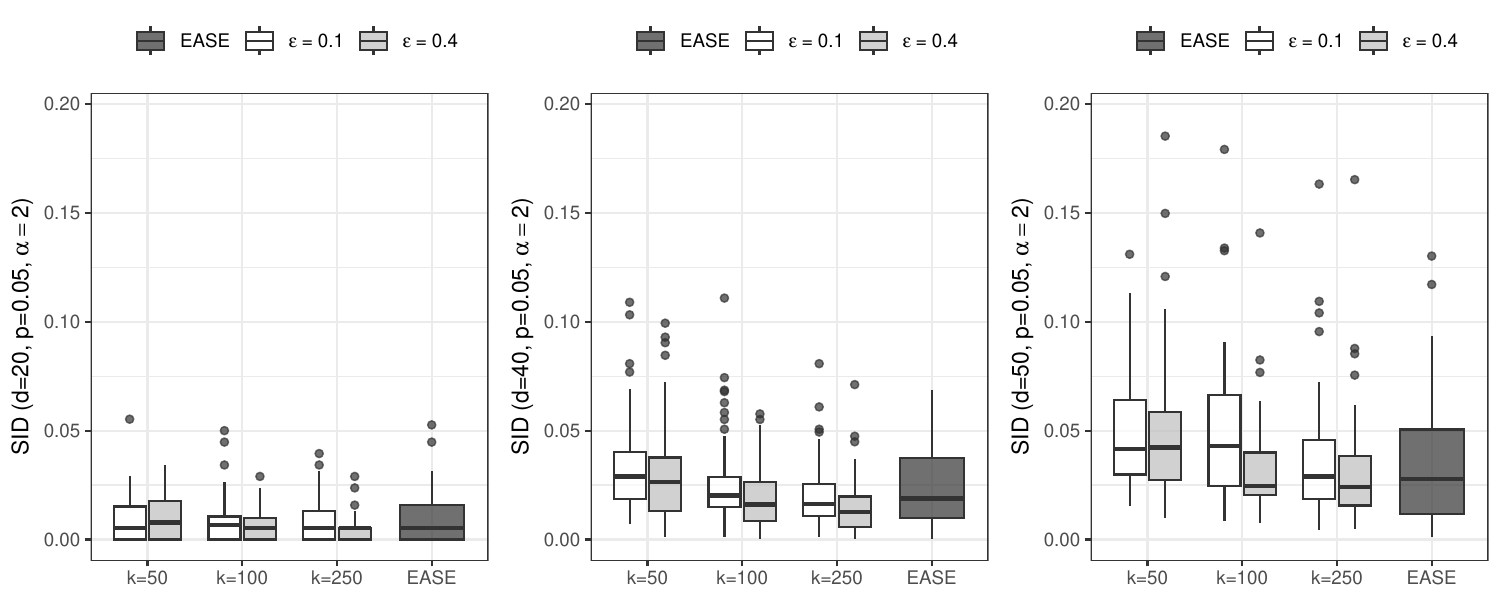}
	
	\hspace{-0.4cm}\includegraphics[ height=4.3cm, width=13cm, clip]{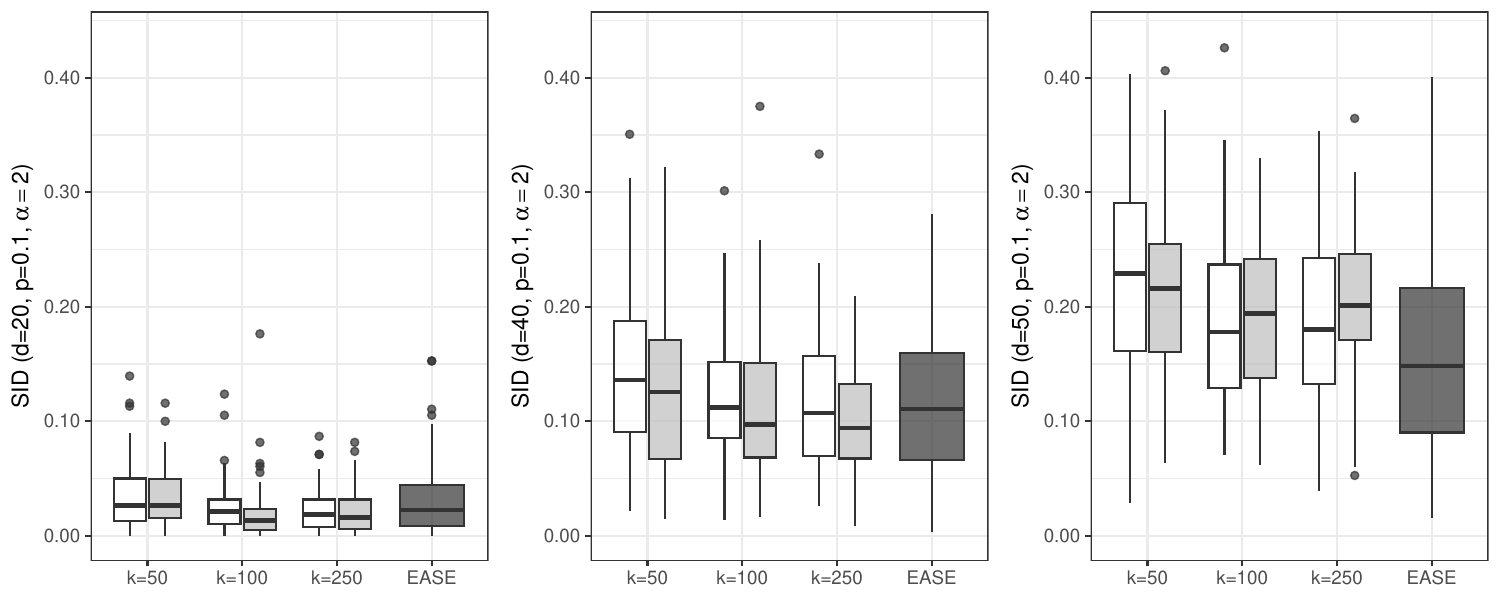}
	
	\hspace{-0.4cm}\includegraphics[ height=4.3cm, width=13cm, clip]{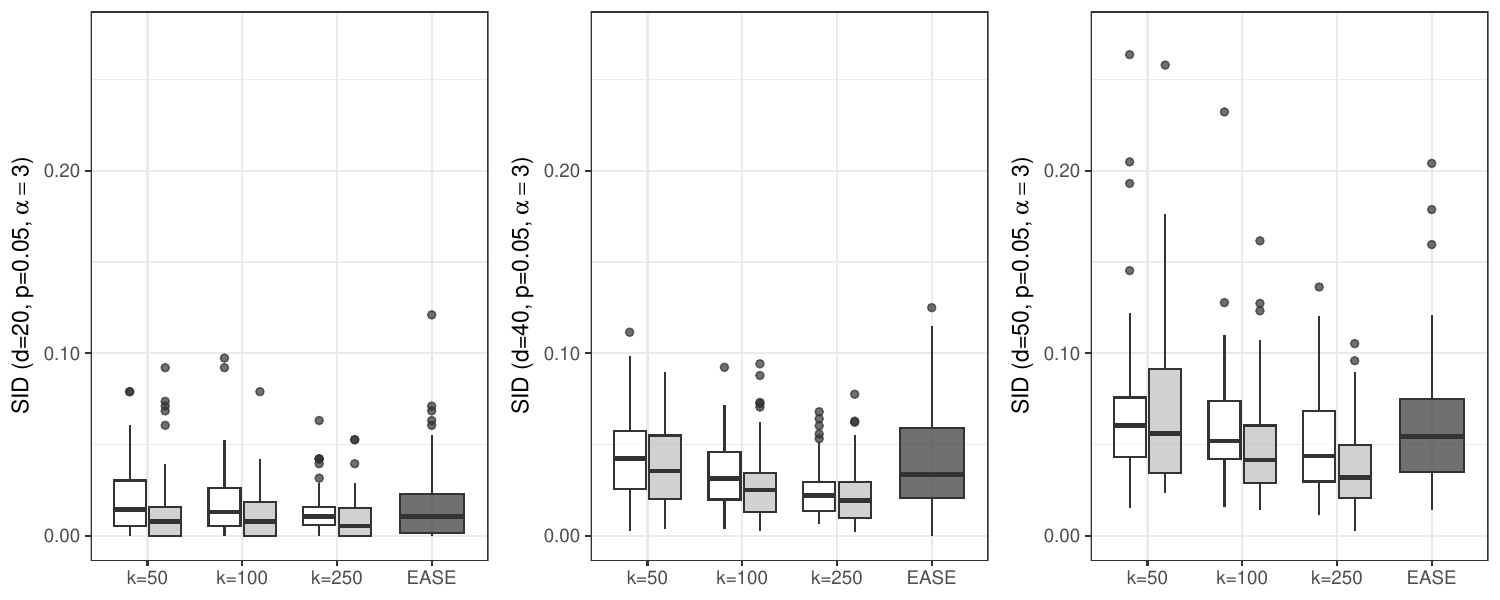}
	
	\hspace{-0.4cm}\includegraphics[ height=4.3cm, width=13cm, clip]{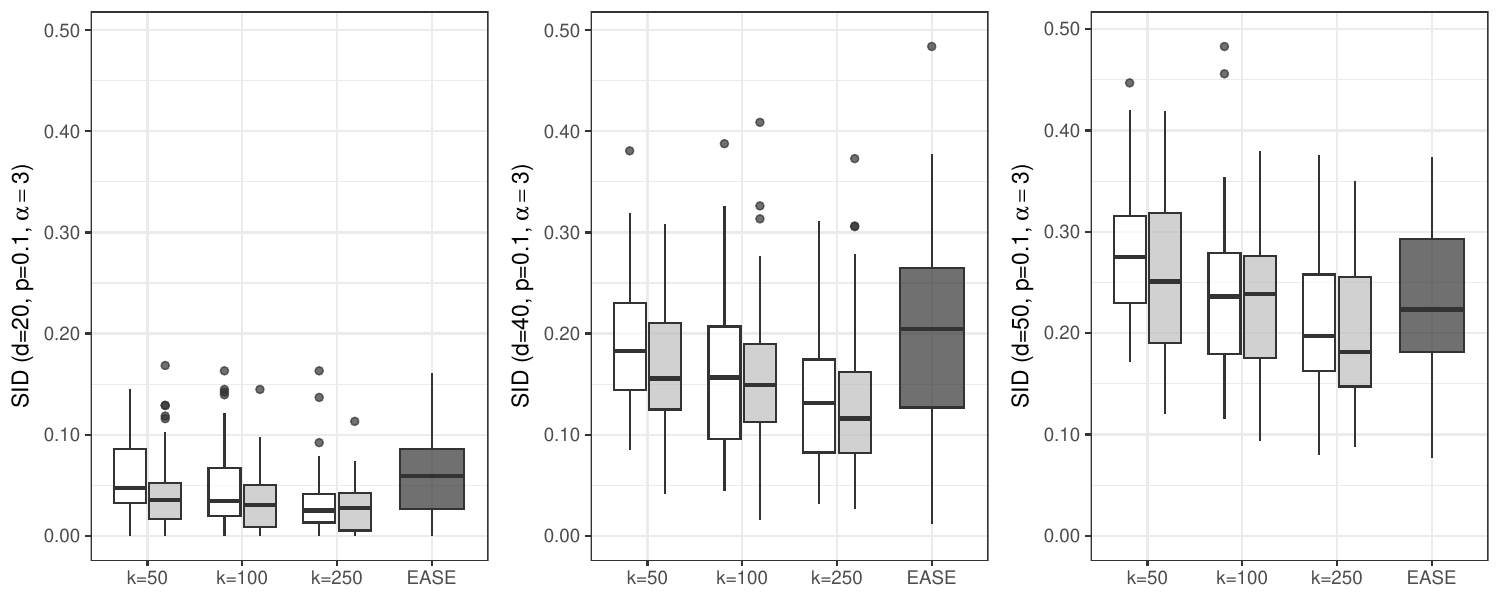}
	
	\caption{Boxplots of the SID of the causal orderings of Algorithm \ref{p2causordalg} and of EASE evaluated for 50 random DAGs for each configuration of $(d, p, \alpha)$ and for $n=1000$.} \label{p2runs1000alt}
\end{figure}

\begin{figure}[htbp]\centering
	\hspace{-0cm}\includegraphics[ height=4cm, width=12cm, clip]{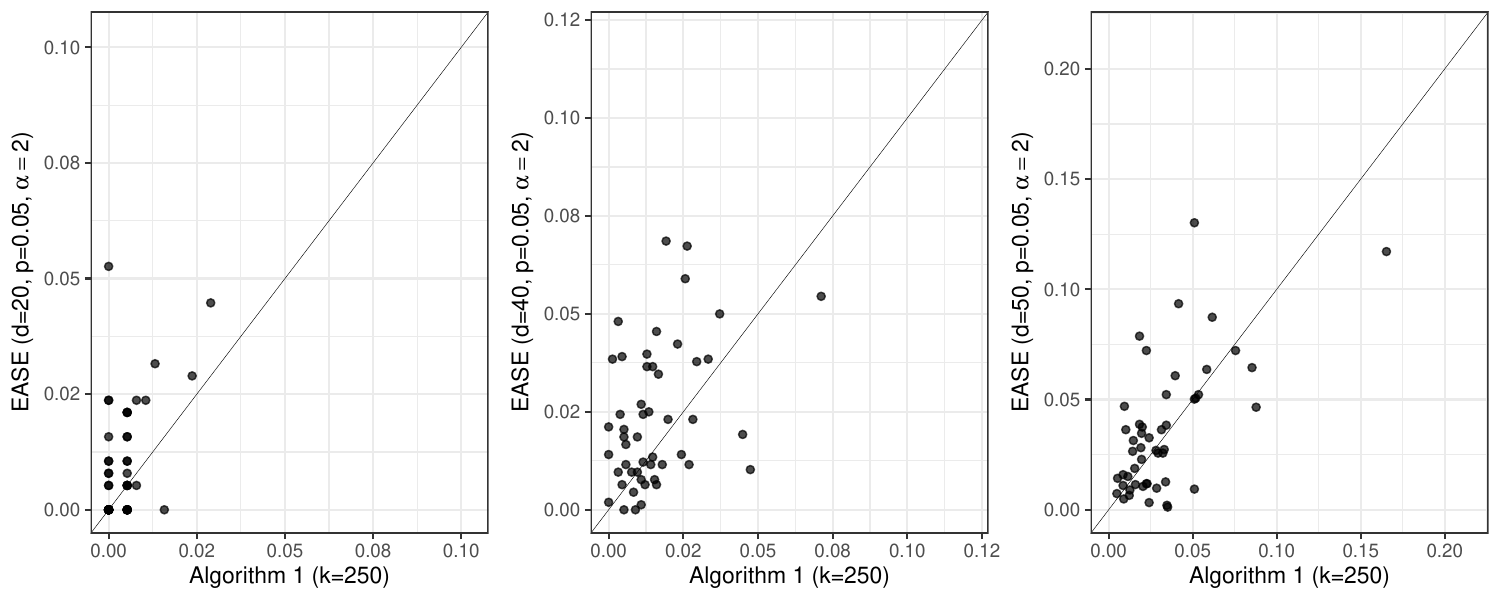}
	
	\hspace{-0cm}\includegraphics[ height=4cm, width=12cm, clip]{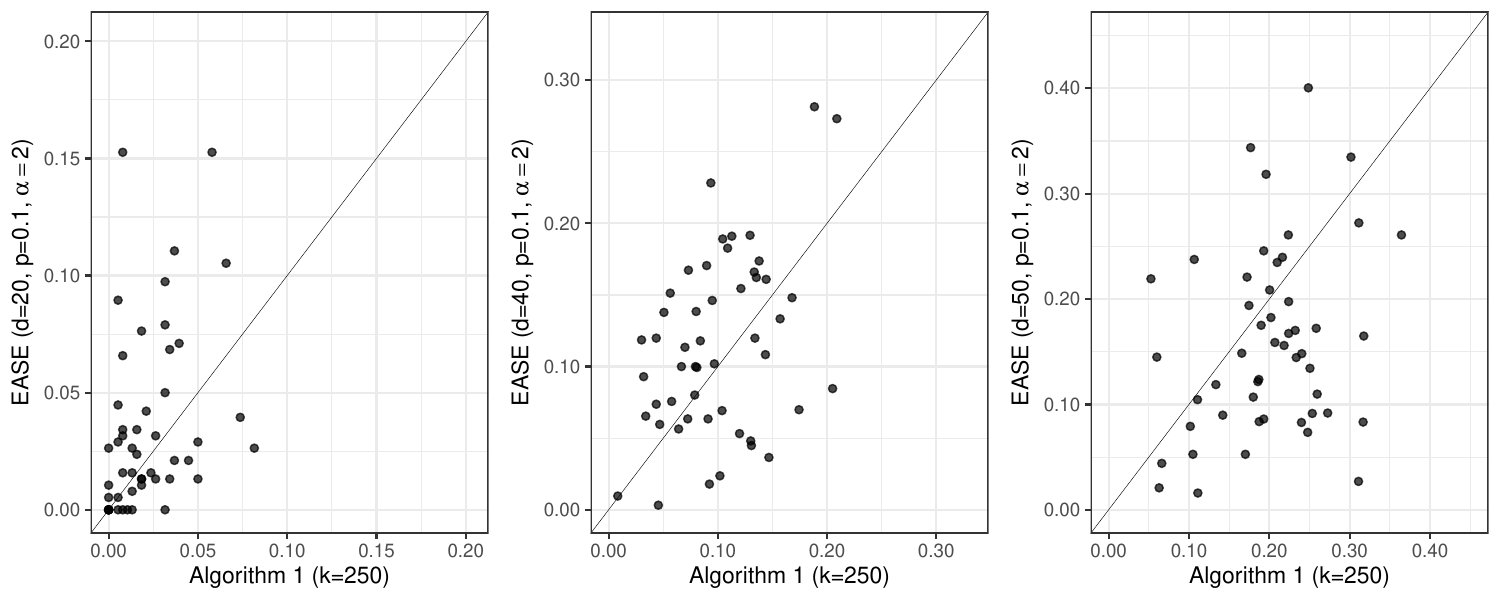}
	
	\hspace{-0cm}\includegraphics[ height=4cm, width=12cm, clip]{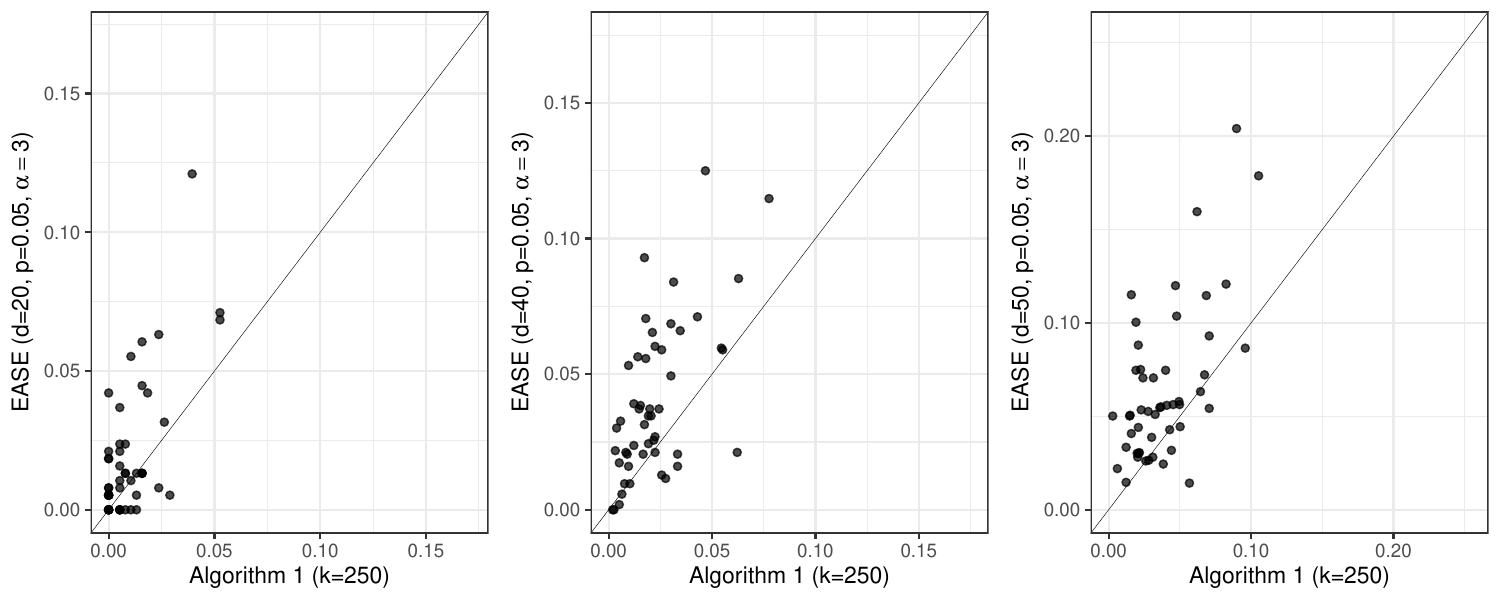}
	
	\hspace{-0cm}\includegraphics[ height=4cm, width=12cm, clip]{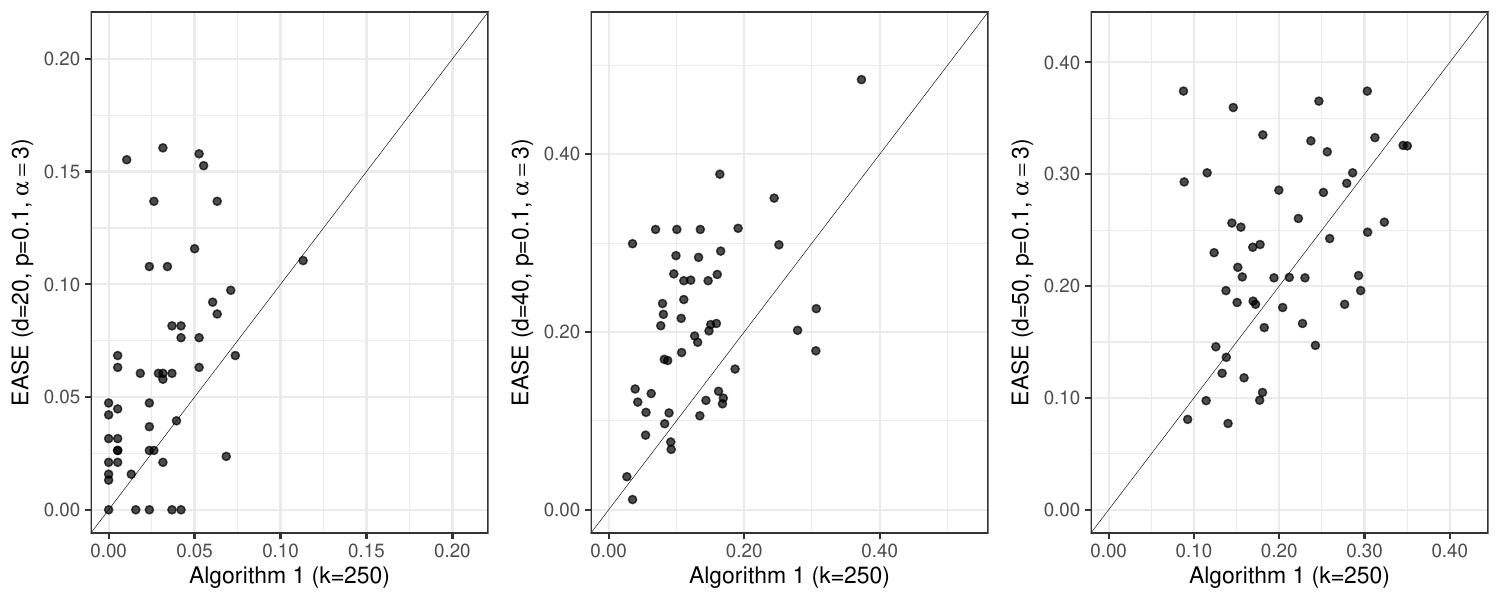}
	
	\caption{Scatterplots of the SID of the causal orderings of Algorithm~\ref{p2causordalg} and of EASE evaluated for 50 random DAGs for each configuration of $(d, p, \alpha)$ and for $n=1000$. }\label{p2vsruns1000}
\end{figure} 

The performances of the algorithms are affected by the regular variation index $\alpha$, which influences the rate of convergence of the respective componentwise maxima to their limiting Fr\'echet distributions \citep[Prop.~2.12]{sres}. In particular, a lower value of $\alpha$ corresponds to a quicker convergence rate and leads to better performance of both methods. 
However, a comparison between the configurations $(d,p,\alpha)=(50,0.1, 2)$ and $(d,p,\alpha)=(50,0.1, 3)$ shows a larger difference in SID for EASE, which is less robust to lighter tails.

Finally, the results for $n=5000$ are provided in Appendix \ref{p2bplot5000} and show better performance in both methods,  with Algorithm~\ref{p2causordalg} still outperforming EASE in the sparser configuration.
We also see greater stability of the former for different thresholds  $k$. EASE retains a mild advantage over  Algorithm \ref{p2causordalg} in the denser setup and for $\alpha=2$, perhaps because of the  faster convergence of the tail dependence with higher $n$, and the fact that EASE accounts for hidden confounders in pairwise dependencies. Algorithm~\ref{p2causordalg} performs better for $\alpha=3$, even in the denser regime.

As our method estimates the scalings from the angular measure of a vector of dimension up to $d$, as described in Appendix \ref{p2estprocedure}, it also requires more thresholded observations $k$ to attain a better performance. This can be seen for the larger values of $k$ when $n=1000$. 

\subsection{Data application}
We now attempt to infer causal orderings for datasets of river discharges from the upper Danube and the Rhine basin in Switzerland. %For each of the river networks our goal is to infer a causal ordering from the observed data. 
Given the results in the simulation study and  the sparse regime of the networks, we fix $a=1.3$ and $\eps=0.4$ in Algorithm~\ref{p2causordalg}. The procedure for estimating the scalings is identical to that in the simulation study.

The DAG adjacency matrices of the networks are constructed based on the physical flow connections between the gauging stations. %For the Rhine, we construct the adjacency matrix starting from the flow connectivity in \cite{asadiregio}. 

To evaluate the robustness of the algorithms when the assumption of independent and identically distributed observations fails, we study the stability of the SID metric by analysing observations from both unprocessed and declustered datasets, following a multivariate declustering procedure introduced by \cite{engelke2015} and described in Section \ref{p2declustprocess}. This leaves only those observations that are approximately independent in time. To assess uncertainty in the causal orderings, we evaluate the SID metric over 100 bootstrap replications.

\subsection{Upper Danube basin}
The Danube dataset, available from \url{http://gdk.bayern.de}, or as a supplemental file of \cite{engelke2015}, has become a benchmark for testing methodologies in extremal graphical modelling \citep{engelke:hitz:18, gnecco, mhalla, TBK}. The data span the months June--August from 1960 to 2010, thus eliminating any seasonality, and consists of $n=4692$ daily discharges at $d=31$ gauging stations, depicted in Figure~\ref{p2danubedag}.

\begin{figure}[htbp]\centering
\vspace{-.25cm}	\includegraphics[ height=7cm, width=7cm, clip]{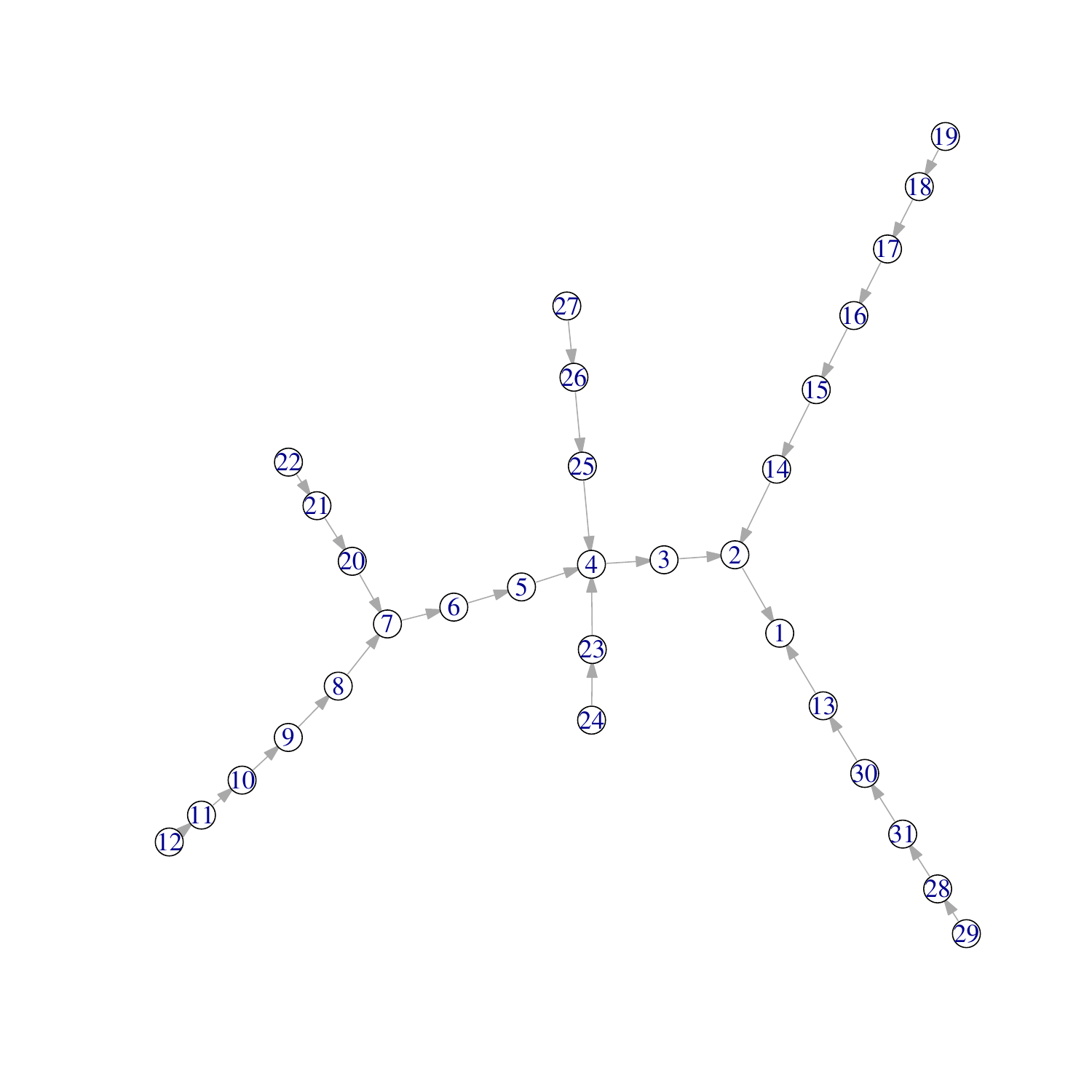}\vspace{-1cm}
	\caption{Upper Danube basin network induced by physical flow connections \citep{engelke2015}.} \label{p2danubedag}
\end{figure}

\subsubsection{Unprocessed Danube data}
The analysis of \cite{engelke2015} indicates large variations in discharges across the stations at different altitudes. Since our methodology requires ${\rm RV}_+(2)$ random variables, we employ the empirical probability integral transform  to obtain standard Fr\'echet(2) margins (see  Appendix~\ref{p2estprocedure}). %Table~\ref{p2danube} and 
Figure~\ref{p2danubebootst} summarises the performances of the methods.

\subsubsection{Declustered Danube data}\label{p2declustprocess}
Both Algorithm~\ref{p2causordalg} and EASE yield consistent causal orderings for independent observations from an LSEM+. Independence, however, is questionable, as high discharges typically persist for a number of days, and thus cluster in time. To account for this we apply the multivariate declustering procedure of \citet{engelke2015}, which sequentially selects non-overlapping windows  of $l$ days of observations for each of the 50 summer periods. These windows are initially taken around the highest observations in each series and across all stations. Each window retains only the largest observation; we repeat this procedure until there are no time-windows with $l$ consecutive observations.
This processing step with time windows of width $l=9$ yields $n=428$ approximately independent observations.

Figure~\ref{p2danubebootst} provides the SID for the declustered Danube data. We assess its uncertainty for 100 bootstrap replicates, which show that, despite the dependence in time, Algorithm~\ref{p2causordalg} performs better than EASE for both the unprocessed and the declustered data. %This contrasts to the one observed in the declustered dataset on the right and in the simulation study.

\begin{figure}[t]\centering
\includegraphics[ height=5cm, width=13cm, clip]{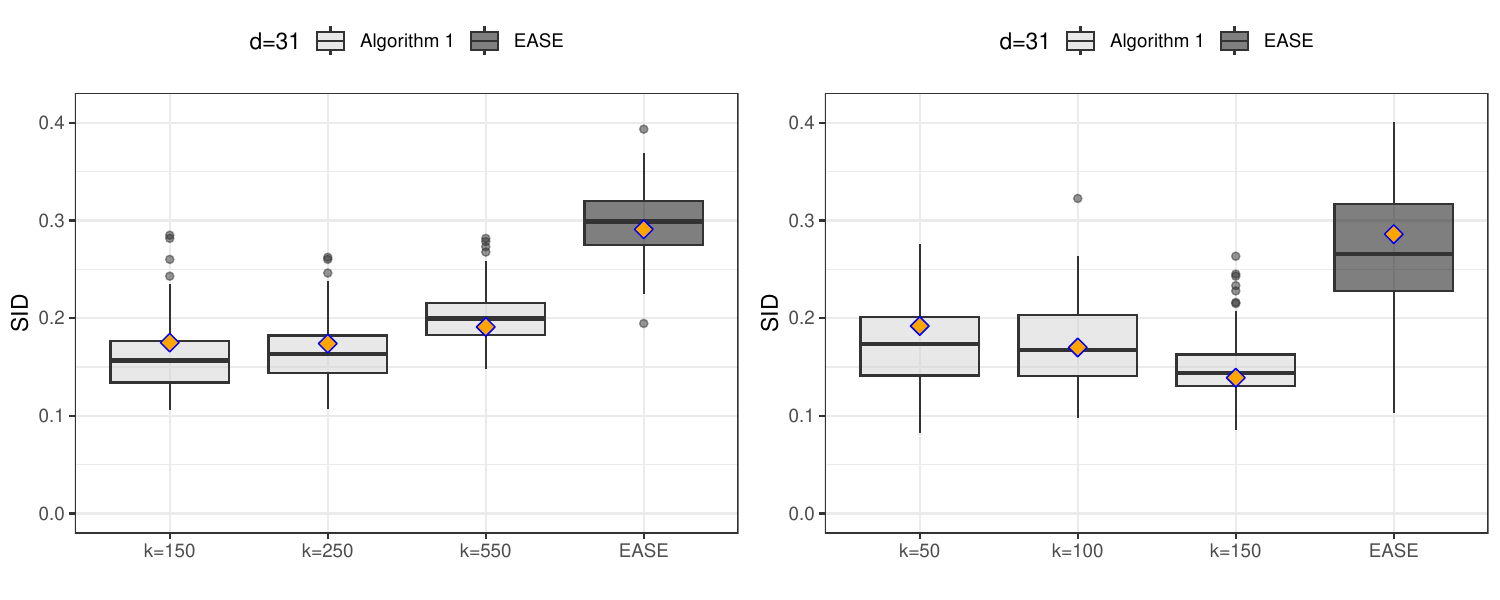}
	
	\caption{Boxplots of the SID of the causal orderings of Algorithm~\ref{p2causordalg} and of EASE %evaluated
	for 100 bootstrap replicates of the unprocessed (left) and declustered (right) Danube datasets. The orange diamonds show the SID evaluated for the observed data for each method.}\vspace{-2mm} \label{p2danubebootst}
\end{figure}

\subsection{Rhine basin}
This dataset consists of daily river discharge amounts from $d=68$ gauging stations along the Rhine basin in Switzerland and was studied in \cite{asadiregio}. The data range from 1913 to 2014, but we only select those  $n=2024$ observations when measurements are available across all 68 stations.

To construct the adjacency matrix of the DAG underlying the graphical model of the river network, we start from the physical flow connections in \cite{asadiregio}, and put edges only between neighbouring stations, with direction corresponding to downstream flow. Station $67$ has the highest in-degree due to incoming flows from most of the stations; see Figure \ref{p2rhinedag}.

\begin{figure}[htpb]\centering
\vspace{-.7cm}	\includegraphics[ height=8cm, width=8cm, clip]{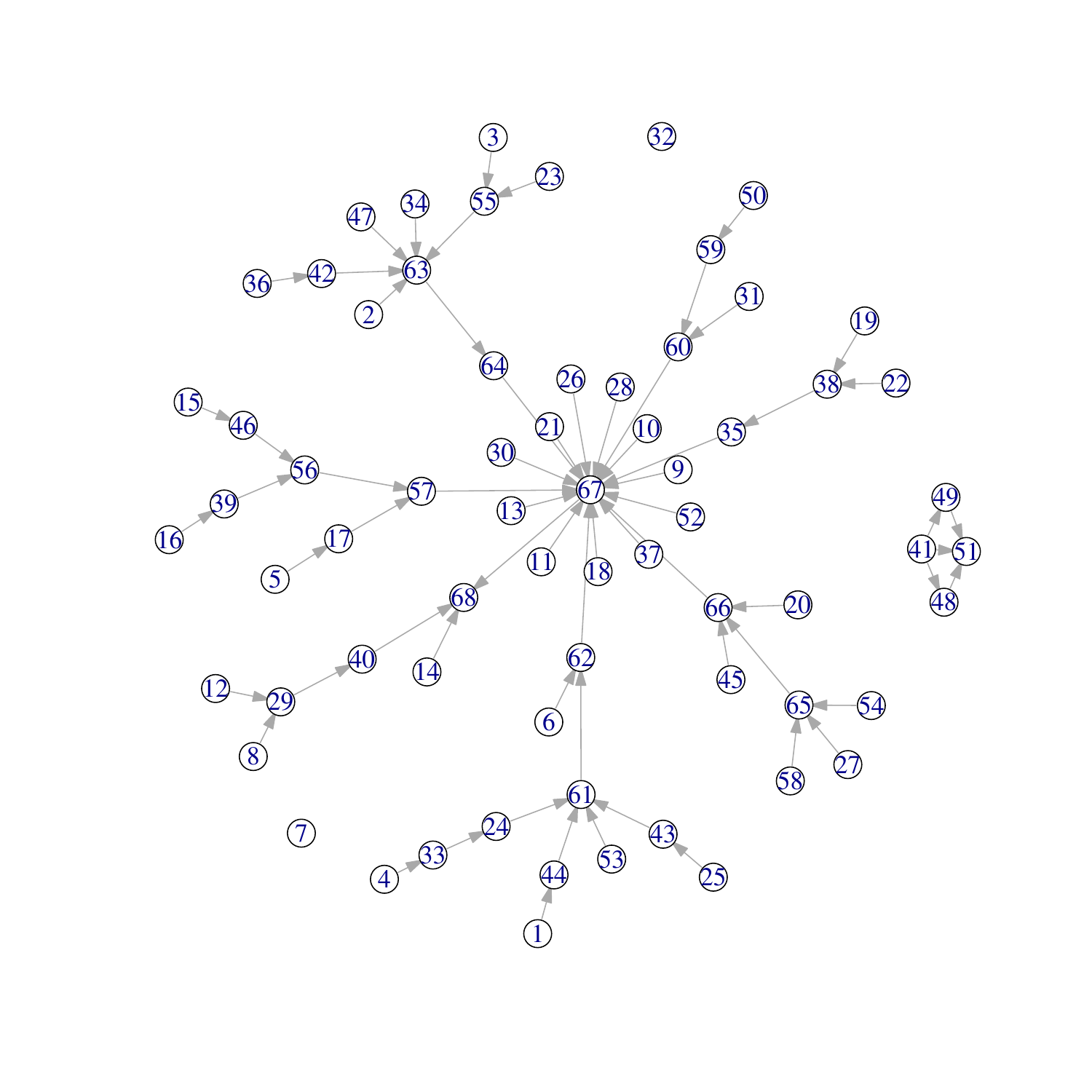}
\vspace{-1cm}	\caption{Rhine basin network induced by physical flow connections \citep{asadiregio}.} \label{p2rhinedag}
\end{figure}

Preliminary assessment of the pairwise dependencies amongst the stations reveals correlations above $0.99$ for three pairs of stations in close proximity, namely (48, 49), (63, 64) and (67,~68). As our methodology exploits asymmetries in scalings, such strong dependence complicates the identification of a causal direction, especially when coupled with high dimensionality and time dependence.

To see how robust Algorithm~\ref{p2causordalg} is to the presence of strong dependence amongst pairs in the river network, we conduct a parallel analysis by removing stations 48, 63 and 67. The new DAG is obtained by putting edges only between the $65$ neighbouring stations. For instance, for $k\in\pa(j)$ and $j\in \pa(i)$, we add edges $k\to i$ after removing node $j\in\{48, 63, 67\}$.
%We remark that removing station 48, adds a hidden confounding effect to the DAG, since station $41$ becomes a confounder for both 49 and 51. 
Since none of the removed nodes is a confounder, the reduced DAG has no hidden confounding.

\subsubsection{Unprocessed Rhine data}
In this section we work with the unprocessed discharges. We evaluate the performance of Algorithm \ref{p2causordalg} and EASE for both the full river network consisting of $d=68$ stations and the reduced one of $d=65$ stations.

The left-hand panel in Figure~\ref{p2fig:rhine_declust} indicates superior performance for EASE when analysing the full river network. The uncertainty boxplots for Algorithm~\ref{p2causordalg} show a pattern similar to the Danube data but with higher variation of  SID, possibly due to the higher dimensionality.

The effect of time dependence of high discharges is likely to be more pronounced in Algorithm \ref{p2causordalg}, which relies on a larger number $k$ of thresholded observations to approximate higher-dimensional angular measures.

The results with $d=65$ stations in %Table~\ref{p2tab2} and 
Figure~\ref{p2fig:rhine_declust} show a similar performance from Algorithm~\ref{p2causordalg}, yet reveal a drastic worsening for EASE. This is surprising given that the latter also works  in settings with hidden nodes. Nevertheless, EASE relies on comparisons between pairwise causal tail coefficients, and leaving out nodes seems to make their algorithm prone to high variability, thus pointing to high sensitivity to the coefficients $\Gamma_{ij}$.

\subsubsection{Declustered Rhine data}
  In this analysis we decluster the observations using time windows of width $l=9$. The bottom panel in Figure~\ref{p2fig:rhine_declust} contains the results for the full $(d=68)$ and the reduced $(d=65)$ networks, and show an improved performance of Algorithm~\ref{p2causordalg} relative to unprocessed data, reflected in a reduction of the SID and of its uncertainty. For $d=68$ and $k=60$, Algorithm \ref{p2causordalg} performs like EASE, and its performance remains robust across both networks. Interestingly, we see the same pattern as in the top panel: the performance of EASE is adversely affected in the reduced network. Although declustering seemingly leads to an improvement of SID for the latter, it also increases its variability.  %The pattern from Algorithm \ref{p2causordalg} remains relatively unaffected between the left and right boxplot of Figure \ref{p2fig:rhine_declust}, with slight improvements in both the SID and its uncertainty.

\begin{figure}[t]\centering
\includegraphics[ height=5cm, width=13cm, clip]{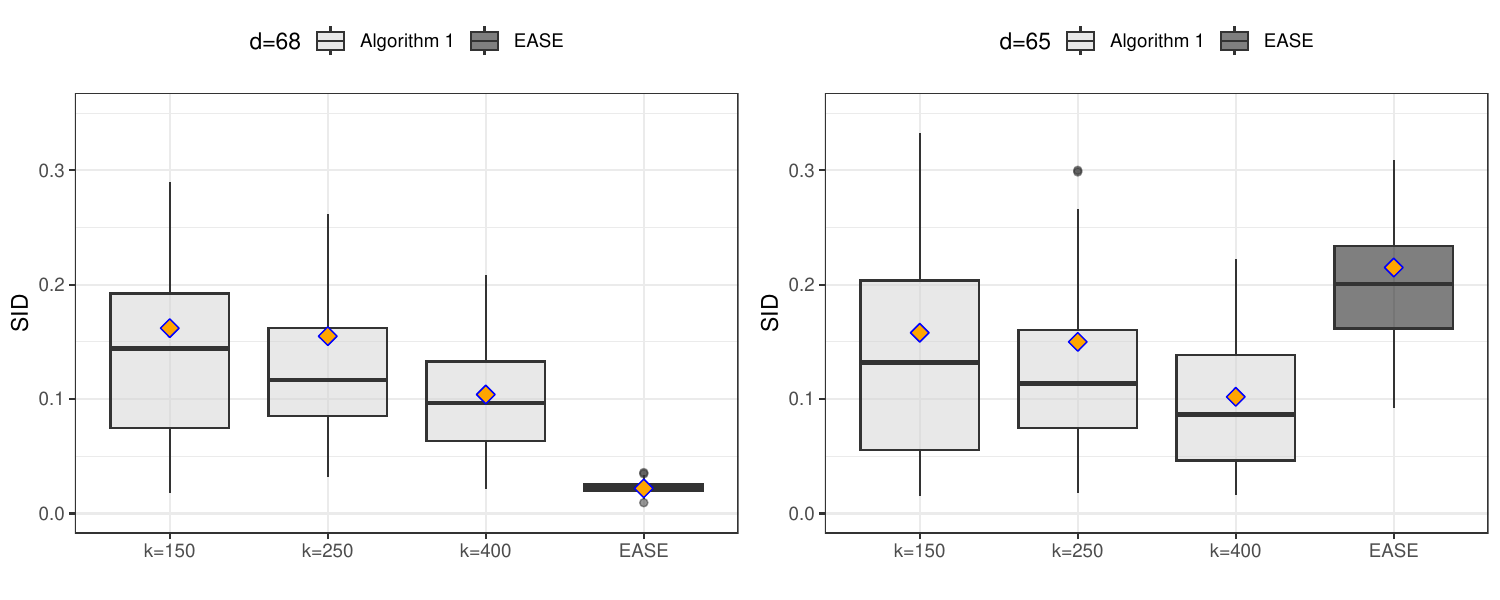}
\includegraphics[ height=5cm, width=13cm, clip]{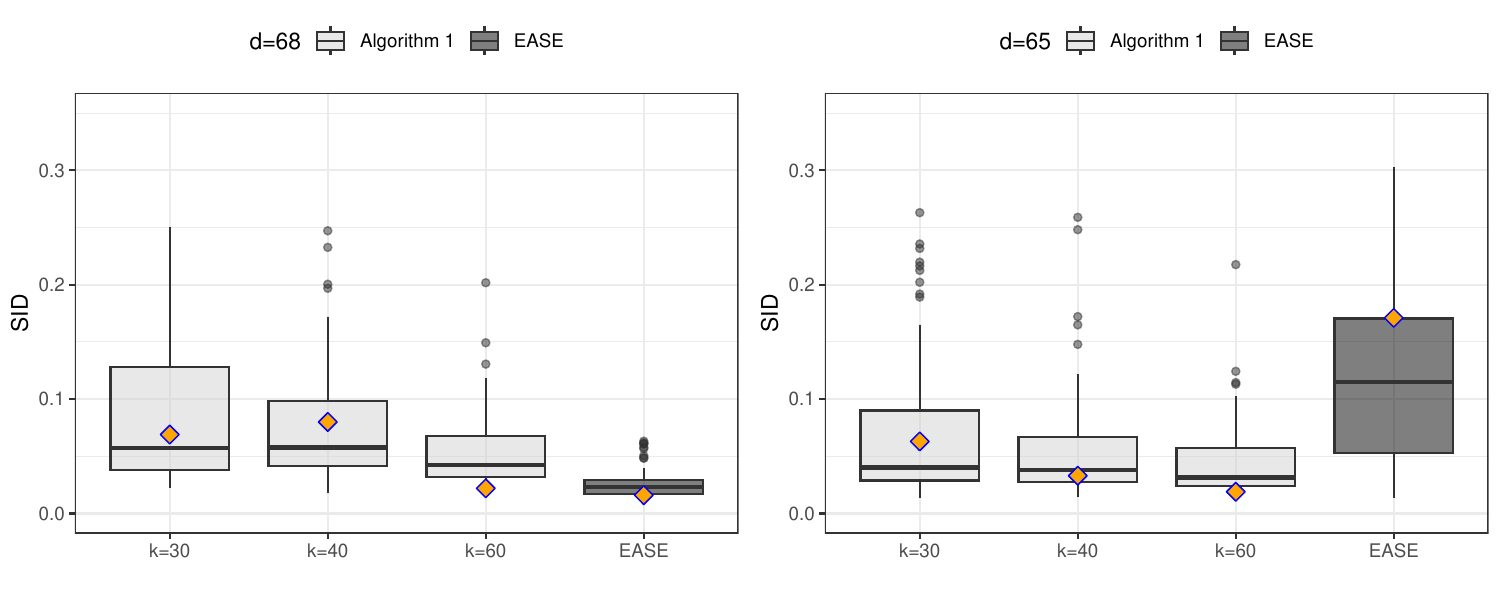}
	\caption{Boxplots of the SID of the causal orderings of Algorithm~\ref{p2causordalg} and of EASE% evaluated
		 for 100 bootstrap replicates for the unprocessed (top panel) and declustered (bottom panel) Rhine datasets with $d=68$ (left-hand panel) and $d=65$ (right-hand panel) stations. The orange diamonds show the SID evaluated for the observed data for each method.}\vspace{-2mm}
\label{p2fig:rhine_declust}
\end{figure}

%\subsubsection{Weekly maxima}

\section{Conclusion}

In this work we propose a scaling methodology for causal discovery in heavy-tailed linear structural equation models, based on scaling parameters derived from the angular measure that characterises multivariate extremal dependence. Following consistency of the estimators of the scalings we show that our causal discovery algorithm yields a consistent causal ordering. We employ the SID metric \citep{sid}, derived from structural interventions, to evaluate the performance of our algorithm in a simulation study and a data application to river discharges.  Comparison with EASE \citep{gnecco} shows that our methodology is competitive with the latter and can be more robust. 

Causal dependence modelling in extremes is in its infancy, and high-dimensional multivariate extremal analysis remains challenging. The data applications pinpoint vulnerabilities of both methods, with EASE being highly sensitive to pairwise dependencies of the node variables and to lighter tails, and Algorithm~\ref{p2causordalg} being sensitive to the increasing dimension of the angular measure in denser graphs. Interesting directions for future work include investigating the possibilities for further reduction in the dimension of the angular measure, which may aid in constructing better estimators for the scalings.

\subsection*{Acknowledgments}

I thank Anthony Davison for comments and suggestions that have improved the manuscript. I am grateful to the EPFL Doctoral School of Mathematics for the financial support.

\bibliography{texrefs}

\begin{thebibliography}{39}
\providecommand{\natexlab}[1]{#1}
\providecommand{\url}[1]{\texttt{#1}}
\expandafter\ifx\csname urlstyle\endcsname\relax
  \providecommand{\doi}[1]{doi: #1}\else
  \providecommand{\doi}{doi: \begingroup \urlstyle{rm}\Url}\fi

\bibitem[Asadi et~al.(2015)Asadi, Davison, and Engelke]{engelke2015}
P.~Asadi, A.~C. Davison, and S.~Engelke.
\newblock Extremes on river networks.
\newblock \emph{Annals of Applied Statistics}, 9\penalty0 (4):\penalty0
  2023--2050, 2015.

\bibitem[Asadi et~al.(2018)Asadi, Engelke, and Davison]{asadiregio}
P.~Asadi, S.~Engelke, and A.~C. Davison.
\newblock Optimal regionalization of extreme value distributions for flood
  estimation.
\newblock \emph{Journal of Hydrology}, 556:\penalty0 182--193, 2018.

\bibitem[Beirlant et~al.(2004)Beirlant, Goegebeur, Segers, and
  Teugels]{beirlant}
J.~Beirlant, Y.~Goegebeur, J.~Segers, and J.~Teugels.
\newblock \emph{Statistics of Extremes: Theory and Applications}.
\newblock Wiley, Chichester, 2004.

\bibitem[Bodik et~al.(2024)Bodik, Palu{\v{s}}, and Pawlas]{bodik}
J.~Bodik, M.~Palu{\v{s}}, and Z.~Pawlas.
\newblock Causality in extremes of time series.
\newblock \emph{Extremes}, 27:\penalty0 67--121, 2024.

\bibitem[Chautru(2015)]{Chautru}
E.~Chautru.
\newblock Dimension reduction in multivariate extreme value analysis.
\newblock \emph{Electronic Journal of Statistics}, 9:\penalty0 383--418, 2015.

\bibitem[Cooley and Thibaud(2019)]{cooley}
D.~Cooley and E.~Thibaud.
\newblock Decompositions of dependence for high-dimensional extremes.
\newblock \emph{Biometrika}, 106\penalty0 (3):\penalty0 587--604, 2019.

\bibitem[{de~Haan} and Ferreira(2006)]{DHF}
L.~{de~Haan} and A.~Ferreira.
\newblock \emph{Extreme Value Theory: An Introduction}.
\newblock Springer, New York, 2006.

\bibitem[Drton and Maathuis(2017)]{drtonreview}
M.~Drton and M.~H. Maathuis.
\newblock Structure learning in graphical modeling.
\newblock \emph{Annual Review of Statistics and Its Application}, 4:\penalty0
  365--393, 2017.

\bibitem[Einmahl et~al.(2012)Einmahl, Krajina, and Segers]{einmahl2012m}
J.~H.~J. Einmahl, A.~Krajina, and J.~Segers.
\newblock An {M}-estimator for tail dependence in arbitrary dimensions.
\newblock \emph{Annals of Statistics}, 40\penalty0 (3):\penalty0 1764--1793,
  2012.

\bibitem[Engelke and Hitz(2020)]{engelke:hitz:18}
S.~Engelke and A.~S. Hitz.
\newblock Graphical models for extremes (with discussion).
\newblock \emph{Journal of the Royal Statistical Society, Series B},
  82\penalty0 (4):\penalty0 871--932, 2020.

\bibitem[Engelke and Volgushev(2022)]{ES}
S.~Engelke and S.~Volgushev.
\newblock Structure learning for extremal tree models.
\newblock \emph{Journal of the Royal Statistical Society, Series B},
  84\penalty0 (5):\penalty0 2055--2087, 2022.

\bibitem[Engelke et~al.(2022)Engelke, Lalancette, and Volgushev]{eglearn}
S.~Engelke, M.~Lalancette, and S.~Volgushev.
\newblock Learning extremal graphical structures in high dimensions.
\newblock arXiv:2111.00840, 2022.

\bibitem[Foug\`eres et~al.(2013)Foug\`eres, Mercadier, and Nolan]{foug}
A.~L. Foug\`eres, C.~Mercadier, and J.~P. Nolan.
\newblock Dense classes of multivariate extreme value distributions.
\newblock \emph{Journal of Multivariate Analysis}, 116:\penalty0 109--129,
  2013.

\bibitem[Friedman et~al.(2008)Friedman, Hastie, and Tibshirani]{glasso}
J.~Friedman, T.~Hastie, and R.~Tibshirani.
\newblock Sparse inverse covariance estimation with the graphical lasso.
\newblock \emph{Biostatistics}, 9\penalty0 (3):\penalty0 432--441, 2008.

\bibitem[Gissibl and Kl\"uppelberg(2018)]{gk}
N.~Gissibl and C.~Kl\"uppelberg.
\newblock Max-linear models on directed acyclic graphs.
\newblock \emph{Bernoulli}, 24\penalty0 (4A):\penalty0 2693--2720, 2018.

\bibitem[Gnecco et~al.(2021)Gnecco, Meinshausen, Peters, and Engelke]{gnecco}
N.~Gnecco, N.~Meinshausen, J.~Peters, and S.~Engelke.
\newblock Causal discovery in heavy-tailed models.
\newblock \emph{Annals of Statistics}, 49\penalty0 (3):\penalty0 1755--1778,
  2021.

\bibitem[Gnecco et~al.(2023)Gnecco, Terefe, and Engelke]{exforest}
N.~Gnecco, E.~M. Terefe, and S.~Engelke.
\newblock Extremal random forests.
\newblock \emph{Journal of the American Statistical Association}, 2023.
\newblock URL \url{https://doi.org/10.1080/01621459.2023.2300522}.

\bibitem[Goix et~al.(2017)Goix, Sabourin, and Cl{\'e}men{\c c}on]{goix}
N.~Goix, A.~Sabourin, and S.~Cl{\'e}men{\c c}on.
\newblock Sparse representation of multivariate extremes with applications to
  anomaly detection.
\newblock \emph{Journal of Multivariate Analysis}, 161:\penalty0 12--31, 2017.

\bibitem[Heffernan and Tawn(2004)]{hefftawn}
J.~E. Heffernan and J.~A. Tawn.
\newblock A conditional approach for multivariate extreme values (with
  discussion).
\newblock \emph{Journal of the Royal Statistical Society, Series B},
  66\penalty0 (3):\penalty0 497--546, 2004.

\bibitem[Hyv{\"a}rinen and Smith(2013)]{pairwiselingam}
A.~Hyv{\"a}rinen and S.~M. Smith.
\newblock Pairwise likelihood ratios for estimation of non-gaussian structural
  equation models.
\newblock \emph{Journal of Machine Learning Research}, 14\penalty0
  (1):\penalty0 111--152, 2013.

\bibitem[Jan{\ss}en and Wan(2020)]{JanWan}
A.~Jan{\ss}en and P.~Wan.
\newblock $ k $-means clustering of extremes.
\newblock \emph{Electronic Journal of Statistics}, 14\penalty0 (1):\penalty0
  1211--1233, 2020.

\bibitem[Kl\"uppelberg and Krali(2021)]{KK}
C.~Kl\"uppelberg and M.~Krali.
\newblock Estimating an extreme {B}ayesian network via scalings.
\newblock \emph{Journal of Multivariate Analysis}, 181\penalty0 (1), 2021.
\newblock \doi{https://doi.org/10.1016/j.jmva.2020.104672}.

\bibitem[Krali et~al.(2023)Krali, Davison, and Kl\"uppelberg]{KDK}
M.~Krali, A.~C. Davison, and C.~Kl\"uppelberg.
\newblock Heavy-tailed max-linear structural equation models in networks with
  hidden nodes.
\newblock arXiv:2306.15356, 2023.

\bibitem[Larsson and Resnick(2012)]{lars}
M.~Larsson and S.~I. Resnick.
\newblock Extremal dependence measure and extremogram: the regularly varying
  case.
\newblock \emph{Extremes}, 15:\penalty0 231--256, 2012.

\bibitem[Lauritzen(1996)]{lau}
S.~L. Lauritzen.
\newblock \emph{Graphical Models}.
\newblock Clarendon Press, Oxford, 1996.

\bibitem[Lee and Cooley(2022)]{leecooley}
J.~Lee and D.~Cooley.
\newblock Partial tail correlation for extremes.
\newblock arXiv: 2210.02048, 2022.

\bibitem[Meyer and Wintenberger(2023)]{meyerclust}
N.~Meyer and O.~Wintenberger.
\newblock Multivariate sparse clustering for extremes.
\newblock \emph{Journal of the American Statistical Association}, 119\penalty0
  (547):\penalty0 1911--1922, 2023.

\bibitem[Mhalla et~al.(2020)Mhalla, Chavez-Demoulin, and Dupuis]{mhalla}
L.~Mhalla, V.~Chavez-Demoulin, and D.~J. Dupuis.
\newblock Causal mechanism of extreme river discharges in the upper {D}anube
  basin network.
\newblock \emph{Journal of the Royal Statistical Society, Series C},
  69\penalty0 (4):\penalty0 741--764, 2020.

\bibitem[Pasche et~al.(2023)Pasche, Chavez-Demoulin, and Davison]{PCD}
O.~C. Pasche, V.~Chavez-Demoulin, and A.~C. Davison.
\newblock Causal modelling of heavy-tailed variables and confounders with
  application to river flow.
\newblock \emph{Extremes}, 26:\penalty0 573--594, 2023.

\bibitem[Pearl(2009)]{pearl}
J.~Pearl.
\newblock \emph{Causality: Models, Reasoning, and Inference}.
\newblock Cambridge University Press, Cambridge, second edition, 2009.

\bibitem[Peters and B{\"u}hlmann(2015)]{sid}
J.~Peters and P.~B{\"u}hlmann.
\newblock Structural intervention distance for evaluating causal graphs.
\newblock \emph{Neural computation}, 27\penalty0 (3):\penalty0 771--799, 2015.

\bibitem[Resnick(1987)]{sres}
S.~I. Resnick.
\newblock \emph{Extreme Values, Regular Variation, and Point Processes}.
\newblock Springer, New York, 1987.

\bibitem[Resnick(2007)]{ResnickHeavy}
S.~I. Resnick.
\newblock \emph{Heavy-Tail Phenomena: Probabilistic and Statistical Modeling}.
\newblock Springer, New York, 2007.

\bibitem[Shimizu et~al.(2006)Shimizu, Hoyer, Hyv{\"a}rinen, Kerminen, and
  Jordan]{lingam1}
S.~Shimizu, P.~O. Hoyer, A.~Hyv{\"a}rinen, A.~Kerminen, and M.~Jordan.
\newblock A linear non-{G}aussian acyclic model for causal discovery.
\newblock \emph{Journal of Machine Learning Research}, 7:\penalty0 2003--2030,
  2006.

\bibitem[Shimizu et~al.(2011)Shimizu, Inazumi, Sogawa, Hyvarinen, Kawahara,
  Washio, Hoyer, and Bollen]{directlingam}
S.~Shimizu, T.~Inazumi, Y.~Sogawa, A.~Hyvarinen, Y.~Kawahara, T.~Washio, P.~O.
  Hoyer, and K.~Bollen.
\newblock {DirectLiNGAM}: A direct method for learning a linear non-{G}aussian
  structural equation model.
\newblock \emph{Journal of Machine Learning Research}, 12:\penalty0 1225--1248,
  2011.

\bibitem[Spirtes et~al.(2000)Spirtes, Glymour, and Scheines]{spirt}
P.~Spirtes, C.~Glymour, and R.~Scheines.
\newblock \emph{Causation, Prediction, and Search}.
\newblock MIT Press, Cambridge, MA, second edition, 2000.

\bibitem[Tran et~al.(2024)Tran, Buck, and Kl\"uppelberg]{TBK}
N.~Tran, J.~Buck, and C.~Kl\"uppelberg.
\newblock Estimating a directed tree for extremes.
\newblock \emph{Journal of the Royal Statistical Society, Series B},
  86\penalty0 (3):\penalty0 771--792, 2024.

\bibitem[Wang and Stoev(2011)]{Wang2011}
Y.~Wang and S.~A. Stoev.
\newblock Conditional sampling for spectrally discrete max-stable random
  fields.
\newblock \emph{Advances in Applied Probability}, 43\penalty0 (2):\penalty0
  461--483, 2011.

\bibitem[Wang and Drton(2020)]{wangcausal}
Y.~S. Wang and M.~Drton.
\newblock High-dimensional causal discovery under non-gaussianity.
\newblock \emph{Biometrika}, 107\penalty0 (1):\penalty0 41--59, 2020.

\end{thebibliography}
\bibliographystyle{plainnat}

\newpage

\hspace{-.575cm}{\huge{\textbf{Appendix}}}

\begin{appendices}
\section{Multivariate regular variation}\label{p2sec:ARV}

\subsection{Definitions and results for regularly varying vectors}\label{p2sec:3.1}

We use two equivalent definitions of multivariate regular variation from Theorem 6.1 of \citet{ResnickHeavy}.

\begin{definition} \label{p2mrv} %[Multivariate regular variation]\\
	(i)\, A random vector $\boldsymbol{X}\in\mathbb{R}^{d}_+$ is {multivariate regularly varying} if there exists a sequence $b_n\to \infty$  as $n\to\infty$ such that
\begin{align}\label{p2eq:mrva}
n\mathbb{P}(\boldsymbol{X}/{b_n}\in \cdot)\overset{v}{\to}{\nu_{\boldsymbol{X}}}(\cdot), \hspace{5mm}n\to\infty,
\end{align}
where $\overset{v}{\to}$ denotes vague convergence in $M_+([0, \infty]^{d}\setminus\{\boldsymbol{0}\})$, the set of non-negative Radon measures on $[0, \infty]^{d}\setminus\{\boldsymbol{0}\}$, and $\nu_{\boldsymbol{X}}$ is called the {exponent measure} of $\boldsymbol X$.\\
(ii)\,
A random vector  $\boldsymbol{X}\in\mathbb{R}^{ {d}}_+$ is {multivariate regularly varying} if for any norm $\|\cdot\|$ there exists a finite measure $H_{\boldsymbol{X}}$ on the positive unit sphere $\Theta_+^{{ {d}}-1}=\{\boldsymbol{\omega}\in \mathbb{R}^{ {d}}_+: \norm{\boldsymbol{\omega}}=1\}$ and a sequence $b_n\to \infty$ as $n\to\infty$ such that the angular representation $(R,\boldsymbol{\omega})\coloneqq(\norm{\boldsymbol{X}}, \boldsymbol{X}/\norm{\boldsymbol{X}})$ of $\boldsymbol X$ satisfies
\begin{align}\label{p2eq:mrvb}
n\mathbb{P}\left(\left({R}/{b_n},\boldsymbol{\omega}\right)\in \cdot\right)\overset{v}{\to} \nu_\alpha\times H_{\boldsymbol{X}}(\cdot), \hspace{5mm}n\to\infty,
\end{align}
in $M_+((0,\infty]\times\Theta_+^{{ {d}}-1})$, ${\rm d}\nu_\alpha(x)=\alpha x^{-\alpha-1}{\rm d}x$ for some $\alpha>0$, and for Borel subsets $C\subseteq \Theta_+^{{ {d}}-1}$,
\begin{align*}
H_{\boldsymbol{X}}(C)\coloneqq\nu_{\boldsymbol{X}}\big(\{\boldsymbol{x}\in\mathbb{R}^{ {d}}_+\setminus\{\boldsymbol{0}\}: \norm{\boldsymbol{x}}\geq 1, \boldsymbol{x}/\norm{\boldsymbol{x}} \in {C}\}\big).
%\label{p2specexp}
\end{align*}
The measure $H_{\boldsymbol{X}}$ is called the {angular measure of $\boldsymbol{X}$}, we write $\boldsymbol{X}\in {\rm RV}^{ {d}}_+(\alpha)$, and $\alpha$ is called the {index of regular variation}.
\end{definition}
	
The angular measure	$H_{\boldsymbol{X}}$ may not be a probability measure because it carries information on the scalings of the components of $\bs X$. This motivates its normalisation.
	
\brem
(i) \, \label{p2specmass2} As $H_{\bsx}$ is a finite measure, it can be normalised to a probability measure by defining 
$$\tilde{H}_{\bsx}(\cdot)\coloneqq \frac{H_{\boldsymbol{X}}(\cdot)}{H_{\boldsymbol{X}}(\Theta_+^{{d}-1})}.$$ 
(ii) \, \label{p2enx} Let $(R, \bs \omega)$ denote the angular representation of $\bs X \in {\rm RV^d_+(\alpha)}$ as in Definition~\ref{p2mrv} (b).
Let $f\colon\Theta_+^{{d}-1}\to\mathbb{R}_+$ be a continuous function. Since $f$ is compactly supported (on $\Theta_+^{{d}-1}$), by vague convergence we have 
\begin{align}\label{p2empdist}
\mathbb{E}_{\tilde{H}_{\bsx}}[f(\boldsymbol{\omega})]& \coloneqq\lim\limits_{x\to \infty} \mathbb{E}[f(\boldsymbol{\omega})\mid R>x] = \int_{\Theta_+^{{d}-1}} f(\bs{\omega}) d\tilde{H}_{\bsx}(\bs{\omega}).
\end{align}
\erem

\subsection{Regularly varying linear structural equation models}\label{p2sec:3.2}

	In this section we mostly follow \citet{ResnickHeavy}, but the results are similar to those in \citet[Section 6]{einmahl2012m} or \citet{cooley}.
The latter consider linear operations between extremes, but apply an additional transformation to map observations to the positive orthant.

	Initially, for a given component $Z_i$ of $\boldsymbol{Z}$, for simplicity denoted by $Z$, we consider the vector $\boldsymbol{a} Z=(a_1Z,\ldots,a_dZ)$, where $a_i\geq 0$ for $i\in\{1,\ldots,d\}$. %We obtain the following Lemma.
	
	\begin{lemma}\label{p2a.z}
		Let ${Z}\in {\rm RV}_+(\alpha)$ be a standardised random variable, and let $\boldsymbol{a}\in\mathbb{R}^d_+$. Then $\boldsymbol{a} Z$ has exponent measure 
		\begin{align*}
		\nu_{\boldsymbol{a} Z}([\boldsymbol{0},\boldsymbol{x}]^c)=\underset{i\in\{1,\ldots,d\}}{\max}\,\frac{a_i^{\alpha}}{x_i^{\alpha}},\quad \boldsymbol{x}\in \mathbb{R}_+^d\setminus\{\bs 0\}.
		\end{align*}
	\end{lemma}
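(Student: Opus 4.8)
The plan is to compute the exponent measure of $\boldsymbol{a}Z$ directly from the one-dimensional regular variation of $Z$, by pulling back the complement of a rectangle to an event concerning the single variable $Z$. First I would fix $\boldsymbol{x}\in\mathbb{R}_+^d\setminus\{\boldsymbol 0\}$ and observe that the event $\{\boldsymbol{a}Z\in[\boldsymbol 0,\boldsymbol x]^c\}$ means that at least one coordinate exceeds its bound, i.e. $a_iZ>x_i$ for some $i$. Since $Z\geq 0$ and $a_i\geq 0$, each such constraint $a_iZ>x_i$ is equivalent to $Z>x_i/a_i$ when $a_i>0$ (and is impossible, or vacuous, in the degenerate cases $a_i=0$, which I would handle by the convention that such coordinates impose no constraint). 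Taking the union over $i$ then collapses to the single event $\{Z>\min_{i:\,a_i>0} x_i/a_i\}$, because a union of one-sided threshold events on the same variable $Z$ is governed by the smallest threshold.

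Next I would apply the regular variation of the standardised $Z$: by definition $n\mathbb{P}(Z/b_n>z)\to z^{-\alpha}$ along the normalising sequence $b_n$. Applying this with $z=\min_{i:\,a_i>0}x_i/a_i$ gives
\begin{align*}
\nu_{\boldsymbol aZ}([\boldsymbol 0,\boldsymbol x]^c)
=\Big(\min_{i:\,a_i>0}\frac{x_i}{a_i}\Big)^{-\alpha}
=\max_{i:\,a_i>0}\Big(\frac{x_i}{a_i}\Big)^{-\alpha}
=\max_{i\in\{1,\ldots,d\}}\frac{a_i^{\alpha}}{x_i^{\alpha}},
\end{align*}
where the middle step uses that $z\mapsto z^{-\alpha}$ is decreasing, so the minimum over thresholds becomes a maximum over their reciprocal powers, and the last step absorbs the $a_i=0$ terms, which contribute $0$ and hence do not affect the maximum. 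This establishes the claimed formula.

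The argument is essentially a set-theoretic reduction followed by a one-line application of the scaling relation, so there is no serious analytic obstacle. The main point requiring care is the bookkeeping for vanishing coefficients $a_i=0$: I would note that such coordinates place no constraint on the complement-of-rectangle event and correspondingly contribute a zero term $a_i^\alpha/x_i^\alpha=0$ to the maximum, so the convention is consistent on both sides. A secondary technical point is that the convergence in \eqref{p2eq:mrva} is vague, so strictly one should check that the relevant rectangle complement is a $\nu_{\boldsymbol aZ}$-continuity set (or pass through sets bounded away from $\boldsymbol 0$); this is routine since the boundary has measure zero for the limit, and the reduction to a single threshold event makes the limit identification immediate.
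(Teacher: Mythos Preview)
Your proof is correct and follows essentially the same route as the paper: reduce the rectangle-complement event to the single threshold $\{Z>\min_i x_i/a_i\}$ and apply one-dimensional regular variation, with the paper citing Lemma~6.1 of \citet{ResnickHeavy} for the continuity-set reduction that you mention explicitly. Your handling of the degenerate case $a_i=0$ is slightly more careful than the paper's, but otherwise the arguments coincide.
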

	\begin{proof}
		From Lemma 6.1 in \cite{ResnickHeavy},  convergence on the rectangles $[\boldsymbol{0},\boldsymbol{x}]^c$ for $\boldsymbol{x}\in \mathbb{R}_+^d$ which are continuity points of $\nu_{\boldsymbol{a} Z}$ is equivalent to convergence in $M_+([0, \infty]^{ {d}}\setminus\{\boldsymbol{0}\})$. Therefore it suffices to consider only such rectangular sets, for which
		\begin{align*}
		n\mathbb{P}\left(\boldsymbol{a}Z/b_n\in [\boldsymbol{0},\boldsymbol{x}]^c\right)&=n\mathbb{P}\left(\overset{d}{\underset{i=1}{\cup}}\{a_iZ>b_nx_i\}\right)\\
		&=n\mathbb{P}\left(\overset{d}{\underset{i=1}{\cup}}\left\{Z>b_n\frac{x_i}{a_i}\right\}\right)\\
		&=n\mathbb{P}\left(Z>b_n\underset{i\in\{1,\ldots,d\}}{\min}\,\frac{x_i}{a_i}\right)\\
		&=\left(\underset{i\in\{1,\ldots,d\}}{\min}\,\frac{x_i}{a_i}\right)^{-\alpha}=\underset{i\in\{1,\ldots,d\}}{\max}\,\frac{a_i^{\alpha}}{x_i^{\alpha}},
		\end{align*}
		proving the claim.
	\end{proof}
	We turn our attention to the following result from  \citet[Proposition 7.4]{ResnickHeavy}.
	\begin{proposition}\label{p2sums}
		Let the independent vectors $\boldsymbol{X}, \boldsymbol{Y}\in {\rm RV}^d_+(\alpha)$ be defined on the same probability space and with the same normalising sequence $b_n$. Then
		\begin{align}
		n\mathbb{P}\left(\frac{\boldsymbol{X}+\boldsymbol{Y}}{b_n}\in \cdot\right)\overset{v}{\to}\nu_{\boldsymbol{X}}(\cdot)+\nu_{\boldsymbol{Y}}(\cdot)
		\end{align}
		in $M_+([0, \infty]^{d}\setminus\{\boldsymbol{0}\})$.
	\end{proposition}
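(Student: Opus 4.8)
The statement is Proposition~\ref{p2sums}, which asserts that the exponent measure of a sum of two independent regularly varying vectors sharing the same index and normalising sequence is the sum of the individual exponent measures. The guiding intuition is the \emph{single big jump} heuristic for heavy tails: an extreme value of $\bs X+\bs Y$ typically arises because exactly one of $\bs X$ or $\bs Y$ is extreme while the other remains of order~$1$, so the two sources of extremes contribute additively and never simultaneously. My plan is to verify vague convergence by testing against a suitable class of sets or functions and showing that the cross term — the event that \emph{both} summands are large — is asymptotically negligible.

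\textbf{Key steps.} First I would reduce vague convergence in $M_+([0,\infty]^{d}\setminus\{\bs 0\})$ to convergence on a convergence-determining class, using Lemma~6.1 of \citet{ResnickHeavy} as invoked in the proof of Lemma~\ref{p2a.z}; concretely, it suffices to establish convergence of $n\mathbb{P}((\bs X+\bs Y)/b_n\in \cdot)$ on sets bounded away from the origin, such as complements of rectangles $[\bs 0,\bs x]^c$ that are continuity sets of the limit. Second, for a fixed such set $B$ bounded away from $\bs 0$, I would split the event $\{(\bs X+\bs Y)/b_n\in B\}$ according to where the mass sits: introduce a truncation level and decompose into the region where $\bs X$ is large and $\bs Y$ is moderate, the symmetric region with roles reversed, and the remainder where both are large. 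Third, using independence and the marginal regular variation of each vector, the first two regions contribute $\nu_{\bs X}(B)$ and $\nu_{\bs Y}(B)$ respectively in the limit (here the fact that a moderate $\bs Y$ does not move $\bs X/b_n$ across the boundary of $B$ is handled by letting the truncation level grow slowly and appealing to the continuity-set assumption). Fourth, I would show the cross term vanishes: the probability that both $\|\bs X\|/b_n$ and $\|\bs Y\|/b_n$ exceed a positive constant is, by independence, of order $[n\mathbb{P}(\|\bs X\|>\eps b_n)]\cdot[n\mathbb{P}(\|\bs Y\|>\eps b_n)]/n$, which is $O(1/n)\to 0$ since each bracketed term is $O(1)$.

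\textbf{Main obstacle.} The delicate point is the boundary bookkeeping in the third step. Because I am adding a moderate-sized $\bs Y$ to an extreme $\bs X$, the event $\{(\bs X+\bs Y)/b_n\in B\}$ is not literally the same as $\{\bs X/b_n\in B\}$, and the perturbation could in principle push mass across $\partial B$. Controlling this requires a sandwiching argument: for any $\delta>0$ one inflates and deflates $B$ by $\delta$ to bound the probability between $\nu$-measures of slightly enlarged and shrunken sets, then lets the truncation threshold tend to infinity and $\delta\to 0$, using that $B$ is a continuity set so that the $\nu_{\bs X}$- and $\nu_{\bs Y}$-measures of the $\delta$-collars shrink to zero. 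Assembling the pieces — the two additive main terms, the vanishing cross term, and the vanishing collars — yields $n\mathbb{P}((\bs X+\bs Y)/b_n\in B)\to \nu_{\bs X}(B)+\nu_{\bs Y}(B)$, which is exactly the claimed identity of measures. I would expect the cleanest write-up to lean on the standard result that the sum of independent regularly varying vectors with a common index and normalisation is again regularly varying with exponent measure the sum, quoting \citet[Proposition~7.4]{ResnickHeavy} directly, but the sketch above is how I would establish it from first principles if needed.
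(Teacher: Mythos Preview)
Your proposal is correct, and in fact goes further than the paper does: the paper does not prove Proposition~\ref{p2sums} at all but simply attributes it to \citet[Proposition~7.4]{ResnickHeavy}, precisely the citation you anticipate in your final sentence. Your single-big-jump decomposition with the $\delta$-collar sandwich for boundary control is the standard route to that result and would constitute a valid self-contained proof, but for the purposes of matching the paper you need only the citation.
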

	
	Using Lemma~\ref{p2a.z} and Proposition \ref{p2sums}, we obtain the following.
	
	\begin{proposition}\label{p2discspect}
		Consider $A\in \mathbb{R}_+^{d\times d}$ and $\boldsymbol{Z}\in {\rm RV}^d_+(\alpha)$. Then $A\boldsymbol{Z}$ is multivariate regularly varying %\in {\rm RV}^d_+(\alpha)$ 
		and its angular measure on ${\Theta_+^{d-1}}$ equals
		\begin{align*}
		{H}_{A\boldsymbol{Z}}(\cdot)=\sum_{j=1}^d\norm{\boldsymbol{a}_j}^\alpha\delta_{\big\{\frac{\boldsymbol{a}_j}{\norm{\boldsymbol{a}_j}}\big\}}(\cdot),
		\end{align*}
		where $\boldsymbol{a}_j$ is the $j$-th column of the matrix $A$.
		\begin{proof}
			We may write 
			\begin{align*}
			A\boldsymbol{Z}=\sum_{j=1}^{d}(\boldsymbol{a}_jZ_j)^\top.
			\end{align*}
			As in Lemma~\ref{p2a.z}, we need only prove convergence on the rectangles $[\boldsymbol{0},\boldsymbol{x}]^c$ for $\boldsymbol{x}\in \mathbb{R}_+^d\setminus\{\bs 0\}$. Proposition \ref{p2sums} applied to the vectors $\boldsymbol{a}_j Z_j$ gives
			\begin{align*}
			\nu_{A\boldsymbol{Z}}([\boldsymbol{0},\boldsymbol{x}]^c)=\sum_{j=1}^{d}\underset{i\in\{1,\ldots,d\}}{\max}\,\frac{a_{ij}^{\alpha}}{x_i^{\alpha}}.
			\end{align*}
			
			The angular decomposition from Definition~\ref{p2mrv}(b) gives
			\begin{align*}
			\nu_{A\boldsymbol{Z}}([\boldsymbol{0},\boldsymbol{x}]^c)=\int_{\Theta_+^{d-1}}\underset{i\in\{1,\ldots,d\}}{\bigvee}\left(\frac{x_i}{\omega_i}\right)^{-\alpha}H_{A\boldsymbol{Z}}({\rm{d} \bs{\omega}}),
			\end{align*}
and it follows that 	${H}_{A\boldsymbol{Z}}(\cdot)=\sum_{j=1}^d\norm{\boldsymbol{a}_j}^\alpha\delta_{\left\{{\boldsymbol{a}_j}/{\norm{\boldsymbol{a}_j}}\right\}}(\cdot)$.
		\end{proof}
	\end{proposition}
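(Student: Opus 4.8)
The plan is to realise $A\bs Z$ as a sum of independent, one-component-driven vectors and then transport the regular variation of each summand through Proposition~\ref{p2sums}. First I would write the matrix--vector product columnwise as $A\bs Z=\sum_{j=1}^d\bs a_j Z_j$, where $\bs a_j$ is the $j$-th column of $A$ and the $Z_j$ are the independent, standardised components of $\bs Z$ (this is exactly the decomposition already recorded at the start of the proof). Since each $Z_j$ is standardised and lies in ${\rm RV}_+(\alpha)$, so that $n\P(n^{-1/\alpha}Z_j>z)\to z^{-\alpha}$, every summand $\bs a_j Z_j$ is multivariate regularly varying by Lemma~\ref{p2a.z} and they all share the common normalising sequence $b_n=n^{1/\alpha}$. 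Proposition~\ref{p2sums}, applied inductively over $j=1,\dots,d$ to the mutually independent summands, then yields that $A\bs Z$ is multivariate regularly varying with exponent measure $\nu_{A\bs Z}=\sum_{j=1}^d\nu_{\bs a_j Z_j}$.

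Next I would make this explicit on rectangles. As in the proof of Lemma~\ref{p2a.z}, convergence on the complements $[\bs 0,\bs x]^c$ with $\bs x\in\R_+^d\setminus\{\bs 0\}$ that are continuity sets is equivalent to vague convergence in $M_+([0,\infty]^d\setminus\{\bs 0\})$ by Lemma~6.1 of \citet{ResnickHeavy}, so it suffices to evaluate $\nu_{A\bs Z}$ on such sets. Lemma~\ref{p2a.z} gives $\nu_{\bs a_j Z_j}([\bs 0,\bs x]^c)=\max_{i}a_{ij}^\alpha/x_i^\alpha$, and summing over $j$ produces
\begin{align*}
\nu_{A\bs Z}([\bs 0,\bs x]^c)=\sum_{j=1}^d\ \max_{i\in\{1,\dots,d\}}\frac{a_{ij}^\alpha}{x_i^\alpha}.
\end{align*}

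Finally I would identify the angular measure by inverting the polar decomposition of Definition~\ref{p2mrv}(ii). The candidate is the discrete measure $H_{A\bs Z}=\sum_{j=1}^d\norm{\bs a_j}^\alpha\,\delta_{\{\bs a_j/\norm{\bs a_j}\}}$; substituting it into $\nu_{A\bs Z}([\bs 0,\bs x]^c)=\int_{\Theta_+^{{d}-1}}\bigvee_i(x_i/\omega_i)^{-\alpha}\,H_{A\bs Z}(\dd\bs\omega)$ and evaluating the integral against each atom $\bs\omega=\bs a_j/\norm{\bs a_j}$ turns the integrand into $\bigvee_i\big(a_{ij}/(x_i\norm{\bs a_j})\big)^{\alpha}=\norm{\bs a_j}^{-\alpha}\max_i a_{ij}^\alpha/x_i^\alpha$, which the mass $\norm{\bs a_j}^\alpha$ collapses back to $\max_i a_{ij}^\alpha/x_i^\alpha$, reproducing the displayed expression for $\nu_{A\bs Z}$ term by term. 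Since the values on all such rectangles determine the finite measure on $\Theta_+^{{d}-1}$, this matching identifies $H_{A\bs Z}$ uniquely. The main obstacle I anticipate is the bookkeeping in this last step: one must check that the $\max$ over coordinates survives the passage from Cartesian to polar form without producing cross terms between distinct columns, which is precisely what the atomicity of the candidate $H_{A\bs Z}$ guarantees, and one should note that proportional columns simply superpose their masses on a single atom while any zero column contributes no mass.
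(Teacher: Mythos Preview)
Your proposal is correct and follows essentially the same route as the paper: decompose $A\bs Z$ columnwise, invoke Lemma~\ref{p2a.z} and Proposition~\ref{p2sums} to obtain $\nu_{A\bs Z}$ on rectangles, and then match against the polar representation of Definition~\ref{p2mrv}(ii) to read off the discrete angular measure. Your write-up is in fact slightly more thorough than the paper's, since you explicitly verify the candidate $H_{A\bs Z}$ atom by atom and address the edge cases of proportional and zero columns, which the paper leaves implicit.
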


	\subsection{Scalings}\label{p2scalings}
	In this section we work under Assumptions A and consider the LSEM+ vector $\boldsymbol{X}=A\boldsymbol{Z}$ with $\bs Z \in {\rm RV}^d_+(2)$ and with exponent and angular measures $\nu_{\bsx}$ and $H_{\bsx}$. We use Definition \ref{p2mrv} (b) and recall that the scalings for the $i$-th margin of $\bsx$ can be expressed as
	\begin{align*}
	\lim\limits_{n\to \infty} n\mathbb{P}({{X_i}}/b_{n}>x)&=\nu_{\boldsymbol{X}}\left({\left\{\boldsymbol{x}\in\R_+^{ {d}} : \frac{\boldsymbol{x}}{\norm {\boldsymbol{x}}}\in \Theta_+^{{d}-1}, x_i > x\right\}}\right)\\
	&=\int_{\{\boldsymbol{\omega}\in \Theta_+^{ {d}-1}\}} \int_{\{r> x/{\omega_i}\}} 2 r^{-3}{\rm d}r  {\rm d}H_{{\boldsymbol{X}}}(\boldsymbol{\omega})\\
	&=x^{-2}\int_{\{\boldsymbol{\omega}\in \Theta_+^{ {d}-1}\}} \omega_i^{2} {\rm d}H_{{\boldsymbol{X}}}(\boldsymbol{\omega})\\
	&=x^{-2}\sigma_{i}^2,
	\end{align*}
	with $\sigma_{i}^2$ obtained by setting $x=1$.  Proposition~\ref{p2discspect}  gives 
	\begin{align*}
	\sigma_{i}^2 =\int_{\Theta_+^{{d}-1}}\omega_i^2 {\rm d}H_{\boldsymbol{X}}(\boldsymbol{\omega})
	= \sum_{k=1}^{d}\norm{\bs a_k}^2 \frac{a_{ik}^2}{\norm{\bs a_k}^2}=\sum_{k=1}^{d}a_{ik}^2,\quad i\in\{1,\ldots,d\}.
	\end{align*}

	Following the discussion on page 330 of \cite{beirlant}, who assume standardisation with $\alpha=1$, a particularly useful choice of the norm is $\norm{\cdot}_\alpha$ (without loss of generality for $\alpha\geq 1$). Taking $\alpha=2$ and the Euclidean norm $\norm{\cdot}$ gives
	\begin{align}\label{p2specmass}
	H_{\boldsymbol{X}}(\Theta_+^{{d}-1}) =\int_{\Theta_+^{{d}-1}}\sum_{i=1}^{d}\omega_i^2 {\rm d}H_{\boldsymbol{X}}(\boldsymbol{\omega})
	= \sum_{i=1}^{d}\sigma_{i}^2 =\sum_{i=1}^{d}\sum_{j=1}^{d}a_{ij}^2=d,
	\end{align}
	which follows from Assumption (A3) on the standardised rows of $A$ by the $\norm{\cdot}$ norm.

	Finally, for $M_I=\max(X_i: i\in I)$,  Proposition \ref{p2discspect} gives
	\begin{align*}
	\sigma_{M_{\boldsymbol{I}}}^2 =\int_{\Theta_+^{{d}-1}}\underset{i\in I}{\vee}\omega_i^{2} {\rm d}H_{\boldsymbol{X}}(\boldsymbol{\omega})
	= \sum_{k=1}^{d}\norm{a_k}^2\left( \underset{i\in I}{\vee}\frac{a_{ik}^2}{\norm{a_k}^2}\right)=\sum_{k=1}^{d}\underset{i\in I}{\vee}a_{ik}^2.
	\end{align*}
	
	%If we fix $\alpha=2$ and the Euclidean norm $\norm{\cdot}$, 
	An application of Lemma~\ref{p2ineq} to the computed scalings gives Lemma \ref{p2scalcoll}.

\section{Proofs for Sections \ref{p2prelims} and \ref{p2slearn}}	\label{p2proofs}
	
	\begin{proof}[\textbf{Proof of Proposition \ref{p2allpathcond}.}]
		We mimic the steps used in the proof of Theorem 3.10 of \cite{gk}, but adjust for linear operations.  
		
		By (\ref{p2lsemcoef}) and the definition of path weights we have that $d(p_{jk})=s_{jj}c_{jl_1}\cdots c_{l_{v-1}i}$ for some path length $v$. Likewise we obtain $d(p_{ki})$ for the path $k\rightsquigarrow i$. Then, for the path $j\rightsquigarrow k \rightsquigarrow i$, composed of $p_{jk}$ and $p_{ki}$ and denoted by $[p_{jk},p_{ki}]$, we compute the path weight $$d([p_{jk},p_{ki}])=d(p_{jk})d(p_{ki})/s_{kk}=d(p_{jk})d(p_{ki})/a_{kk}.$$
		For ease of notation we write $\mathcal{K}_{ji}$ for the set of paths from $i$ to $j$ that pass through $k$, and $\mathcal{K}_{ji}^c$ for those which do not. We note that we can also set the node indices $i$ or $j$ to $k$.
		These two sets form a partition of the paths from $j$ to $i$, so we may write
		\begin{align*}
		a_{ij}=\underset{p_{ji}\in \mathcal{K}_{ji}}{\sum}d(p_{ji})+{\underset{p_{ji}\in \mathcal{K}_{ji}^c}{\sum}d(p_{ji})}&=\underset{\{p_{jk}\in\mathcal{K}_{jk},\, p_{ki}\in \mathcal{K}_{ki}\}}{\sum}d([p_{jk},p_{ki}])+{\underset{p_{ji}\in \mathcal{K}_{ji}^c}{\sum}d(p_{ji})}\\
		&=\underset{\{p_{jk}\in\mathcal{K}_{jk},\, p_{ki}\in \mathcal{K}_{ki}\}}{\sum}d(p_{jk})d(p_{ki})/a_{kk}+{\underset{p_{ji}\in \mathcal{K}_{ji}^c}{\sum}d(p_{ji})}\\
		&=a_{kk}^{-1}\underset{p_{jk}\in\mathcal{K}_{jk}}{\sum}d(p_{jk})\underset{p_{ki}\in \mathcal{K}_{ki}}{\sum}d(p_{ki})+{\underset{p_{ji}\in \mathcal{K}_{ji}^c}{\sum}d(p_{ji})}\\
		&=a_{kj}a_{ik}/a_{kk}+{\underset{p_{ji}\in \mathcal{K}_{ji}^c}{\sum}d(p_{ji})}.
		\end{align*}
		
		Now suppose that $a_{ij}>a_{ik}a_{kj}/a_{kk}$. Because $C$ and $A$ are non-negative, this can happen if and only if $\mathcal{K}_{ji}^c$ is non-empty. Otherwise, $a_{ij}=a_{ik}a_{kj}/a_{kk}$, thus proving the proposition. 
	\end{proof}
	
	\begin{proof}[\textbf{Proof of Lemma \ref{p2ineq}.}] First note that the set  $\An(i)\setminus\An(j)$ contains $i$ and thus is non-empty. We then obtain
		$a_{jj}^\alpha\sum_{k\in \An(i)\setminus\An(j)}a_{ik}^\alpha>0$, which is equivalent to
		\begin{align*}
		a_{jj}^\alpha \sum_{l\in\An(j)}\frac{a_{ij}^\alpha a_{jl}^\alpha}{a_{jj}^\alpha}+a_{jj}^\alpha\sum_{k\in \An(i)\setminus\An(j)}a_{ik}^\alpha&>a_{ij}^\alpha\sum_{l\in\An(j)} a_{jl}^\alpha,
		\end{align*}
		and re-arranging the terms gives 
		\begin{align*}
%\quad{\text{which we can re-arranging into}}\\
		\frac{a_{jj}^\alpha}{\sum_{l\in\An(j)} a_{jl}^\alpha}&>\frac{a_{ij}^\alpha}{\sum_{l\in\An(j)}\frac{a_{ij}^\alpha a_{jl}^\alpha}{a_{jj}^\alpha}+\sum_{k\in \An(i)\setminus\An(j)}a_{ik}^\alpha}.
		\end{align*}
		We apply Corollary~\ref{p2cor1} to the inequality in the last display and find
		\begin{align*}
		\frac{a_{jj}^\alpha}{\sum_{l\in\An(j)} a_{jl}^\alpha}&>\frac{a_{ij}^\alpha}{\sum_{l\in\An(j)}{a_{il}^\alpha}+\sum_{k\in \An(i)\setminus\An(j)}a_{ik}^\alpha},
		\end{align*}
		which corresponds to
		$\bar{a}_{jj}^\alpha>\bar{a}_{ij}^\alpha$ or equivalently $\bar{a}_{jj}>\bar{a}_{ij}$,
		and thus proves the claim of the lemma.
	\end{proof}

		\begin{proof}[\textbf{Proof of Theorem \ref{p2sourcenodes}.}]
			($\Rightarrow$) Suppose that $j$ is a source node.   Then $\an(j)=\emptyset$ by definition, so $a_{jk}=0$ for all $k\neq j$, and $a_{jj}=1$.
			
			Consider $M_{ij}$ for some $i\neq j$. By Lemma~\ref{p2scalcoll}, its squared scaling is 
			\[ \sigma_{M_{ij}}^2=\sum_{k\in \An(i)\cup\an(j)\setminus\{j\}} a_{ik}^2+a_{ij}^2\vee a_{jj}^2=\sum_{k\in \An(i)\cup\an(j)\setminus\{j\}} a_{ik}^2+  a_{jj}^2.
			\]
			Choose $a>1$, and consider $M_{i,aj}$. A second application of Lemma~\ref{p2scalcoll} gives
			 \[\sigma_{M_{i,aj}}^2=\sum_{k\in \An(i)\cup\an(j)\setminus\{j\}} a_{ik}^2+  a^2 a_{jj}^2.\] 
			Finally, we compute the difference $$\sigma_{M_{i,aj}}^2-\sigma_{M_{ij}}^2=(a^2-1)  a_{jj}^2=a^2-1.$$
			
			($\Leftarrow$) $\hspace{5mm}$ For the other direction we argue by contradiction.  Suppose that $j$ is not a source node, so there exists a node $i$ such that $i\in \an(j)$ and $a_{ji}>0$, and that equality \eqref{p2critinit} holds. Lemma~\ref{p2ineq} implies that $a_{ji}<a_{ii}$. We proceed similar to the first part of the proof and compute 
			\[ \sigma_{M_{ij}}^2=\sum_{k\in \An(j)\cup\an(i)\setminus\{i\}} a_{jk}^2\vee a_{ik}^2+a_{ji}^2\vee a_{ii}^2=\sum_{k\in \An(j)\cup\an(i)\setminus\{i\}} a_{jk}^2\vee a_{ik}^2+  a_{ii}^2.
			\]
			
			Likewise,  the squared scaling for $M_{i,aj}$ equals
			\begin{align*}\sigma_{M_{i,aj}}^2=\sum_{k\in \An(i)\cup\an(i)\setminus\{i\}} a^2\,a_{jk}^2\vee a_{ik}^2+a^2\,a_{ji}^2\vee a_{ii}^2.
			\end{align*}
			
			We first define the sets $S_1(i,j)=\{k\in\An(i)\cup\An(j):  a_{jk}\geq a_{ik}\}$ and $S_2(i,j)=\{k\in \An(i)\cup\An(j):  a_{jk}<a_{ik} \,\,{\text{and}}\,\, a\, a_{jk}\geq a_{ik}\}$. Then 
			\begin{align*} \sigma_{M_{i,aj}}^2-\sigma_{M_{ij}}^2&=\sum_{k\in S_1(i,j)} (a^2-1) a_{jk}^2+\sum_{k\in S_2(i,j)} (a^2\,a_{jk}^2-a_{ik}^2)\\
			&< \sum_{k\in S_1(i,j)} (a^2-1) a_{jk}^2+\sum_{k\in S_2(i,j)} (a^2-1)\,a_{jk}^2\\
			&\leq \sum_{\substack{k\in \An(i)\cup\An(j)}}  (a^2-1)\,a_{jk}^2=(a^2-1)\sigma_{M_{j}}^2=(a^2-1),
			\end{align*} 
			where the inequality from the first to the second lines follows by an argument similar to \eqref{p2exineq} in Example \ref{p2theo1ex1}, because the index $k=i$ lies in the set $S_2(i,j)$ in the second summand. We have a contradiction of \eqref{p2critinit}, so $j$ cannot be a source node.
		\end{proof}

		\begin{proof}[\textbf{Proof of Theorem \ref{p2descnodes}.}]
			($\Rightarrow$) Suppose that node $j$ is such that $\an(j)\cap I^c=\emptyset$. Choose an arbitrary node $i\notin I\cup\{j\}$ and consider the squared scalings of $M_{i, j, I}$, and $M_{i, aj, aI}$. Note that  since $I$ has no ancestors outside $I$, $a_{rk}=0$ for all nodes $r\in I$ and $k\in I^c$.
			
			Using the properties of $I$, Lemmas~\ref{p2ineq} and~\ref{p2scalcoll}, and following the steps in the proof of Theorem~\ref{p2sourcenodes}, we compute
			\begin{align*}
			\sigma_{M_{i, aj, aI}}^2&=a^2\sum_{k\in I}a_{kk}^2+\sum_{k\in I^c\cap \An(j)}a^2\,a_{jk}^2 +\sum_{k\in I^c\cap \An(j)^c\cap \An(i)}a_{ik}^2,\nonumber\\
			\sigma_{M_{i, j, I}}^2&=\sum_{k\in I}a_{kk}^2+\sum_{k\in I^c\cap \An(j)}a_{jk}^2 +\sum_{k\in I^c\cap \An(j)^c\cap \An(i)}a_{ik}^2,\nonumber\\
			\sigma_{M_{i, aj, aI}}^2-	\sigma_{M_{i, j, I}}^2&=(a^2-1)\left(\sum_{k\in I}a_{kk}^2+\sum_{k\in I^c\cap \An(j)}a_{jk}^2\right)=(a^2-1)\sigma_{M_{j, I}}^2,
			\end{align*}
			where the last line corresponds to~\eqref{p2critpair}.
			
			%	Since $k^*$ is arbitrary, the above holds $\forall k^*\notin\{\nu^{-1}(p),...,\nu^{-1}(p-h+1), m^*\}.$\\
			($\Leftarrow$)  Suppose now that~\eqref{p2critpair} holds  and that there exists $i\in I^c\cap\an(j)$. For notational simplicity, we define the sets $S_1(i,j,I)=\{k\in \left(\An(j)\cup\An(i)\right)\cap I^c:  a_{jk}\geq a_{ik}\}$ and $S_2(i,j,I)=\{k\in \left(\An(j)\cup\An(i)\right)\cap I^c:  a_{jk}<a_{ik} \,\,{\text{and}}\,\, a\,a_{jk}\geq a_{ik}\}$. We then compute
			\begin{align*}
			\sigma_{M_{i, aj, aI}}^2-	\sigma_{M_{i, j, I}}^2-(a^2-1)\sum_{k\in I}a_{kk}^2&=\sum_{ k\in S_1(i,j,I)} (a^2-1) a_{jk}^2+\sum_{k\in S_2(i,j,I)} (a^2\,a_{jk}^2-a_{ik}^2)\\
			&< \sum_{ k\in S_1(i,j,I)}  (a^2-1) a_{jk}^2+\sum_{k\in S_2(i,j,I)} (a^2-1)\,a_{jk}^2\\
			&\leq \sum_{\substack{k\in \An(j)\cap I^c}}  (a^2-1)\,a_{jk}^2,
			\end{align*}
			where the first equality follows because of Lemmas~\ref{p2ineq} and~\ref{p2scalcoll}, and properties of the set $I$.  The inequality in the seond line follows by Lemmas~\ref{p2ineq} and~\ref{p2scalcoll}, and the inequality $a_{ii}>a_{ji}$ for $i\in S_2(i, j, I)$.
			Rearranging the terms in the first and last line of the last display we obtain
			\begin{align*}
			\sigma_{M_{i, aj, aI}}^2-	\sigma_{M_{i, j, I}}^2<(a^2-1)\sum_{k\in I}a_{kk}^2+\sum_{\substack{k\in \An(j)\cap I^c}}  (a^2-1)\,a_{jk}^2= (a^2-1)\sigma_{M_{j,I}}^2,
			\end{align*}
			which contradicts the initial equality~\eqref{p2critpair}. Hence $j$ cannot be such that $\an(j)\cap I^c=\emptyset$.
		\end{proof}

		\section{The empirical angular measure}\label{p2estprocedure}
		In this section we outline the statistical theory used in estimating the scaling parameters in Sections \ref{p2slearn} and \ref{p2simdata}. The material starts from Section 9.2 in \citet{ResnickHeavy}, and is similar to  Section 6 of \cite{KK}. We write $\stD$ to denote convergece in distribution,~$\stp$ for convergence in probability and $\stw$ for weak convergence.
		
	Given $n$ independent replicates $\boldsymbol{X}_1,\dots,\boldsymbol{X}_n$ of $\boldsymbol{X}\in {\rm RV}^{d}_+(2)$, we compute
	\begin{align}\label{p2pol.est}
	R_\ell\coloneqq\norm{\boldsymbol{X}_\ell}_2,\quad\quad {\boldsymbol{\omega}_\ell=({\omega}_{\ell1},\ldots,{\omega}_{\ell {d}})\coloneqq\frac{\boldsymbol{X}_\ell}{R_\ell}},\quad\quad \ell\in\{1,\dots,n\},
	\end{align}
	which gives the angular representations. 
	Based on the limit relation \eqref{p2empdist}, $\tilde{H}_{\bsx,{n}/{k}}$ in equation (9.32) in \citet{ResnickHeavy} serves as a consistent estimator for the standardised angular measure $\tilde{H}_{\bsx}$, where
	$$\tilde{H}_{\bsx,\frac{n}{k}} (\cdot) = \frac{\sum_{\ell=1}^{n}\mathds{1}{\{(R_\ell/b_{\frac{n}{k}},{\boldsymbol{\omega}}_\ell) \in [1,\infty] \times \cdot\}}}{\sum_{\ell=1}^{n}\mathds{1}{\{R_\ell/b_{\frac{n}{k}}\geq1\}}}\stw \tilde{H}_{\bsx} (\cdot),\quad k, n\to\infty\quad \text{and}\quad k/n\to 0.$$
	%as $n\to\infty$, $k\to\infty$, $k/n\to 0$.
	Since $R^{(k)}/b_{\frac{n}{k}}\stp 1$ \citep[discussion below  (9.32)]{ResnickHeavy}, with $R^{(k)}$ the $k$-th largest radius among $R_1,\ldots,R_n$, replacing ${b}_{{n}/{k}}$ with $R^{(k)}$ implies $\sum_{\ell=1}^{n}\mathds{1} {\{R_\ell\geq R^{(k)}\}}=k$, so
	$$\tilde{H}_{\bsx, n/k} (\cdot) = \frac{1}{k}\sum_{\ell=1}^{n} \mathds{1}{\{R_\ell\geq R^{(k)}, \boldsymbol{\omega}_\ell\in\cdot \}}.
	$$
	We then construct for $\E_{\tilde{H}_{\bsx}} [f(\bs{\omega})]$ the estimator
	\begin{align}\label{p2specemp}
	\hat{\mathbb{E}}_{\tilde{H}_{\boldsymbol{X}}}[f({\boldsymbol{\omega}})]=\frac{1}{k}\sum_{\ell=1}^{n}f({\boldsymbol{\omega}}_\ell)\mathds{1}{\{R_\ell \geq R^{(k)} \}}.
	\end{align}
	
	\subsection{Estimation of the scaling}
	
	To estimate the squared scalings ${\sigma}_{M_S}^2$ of $M_{S}$ for $S\subseteq\{1,\dots,{d}\}$, \citet{KK} apply \eqref{p2specemp} with
	 $f:\Theta_+^{{d}-1}\to\R_+$ defined via the continuous function
	$f(\boldsymbol{\omega})=d\left({\bigvee}_{r \in S}\omega_{ r}^2\right)$.
	In contrast, we only employ the angular measure of the components of $\bs X$ in $S$, which is of cardinality smaller than $d$.
	%\bde\label{p2estscaling2}% [Non-parametric scaling estimators]
	
	Given the independent replicates ${\boldsymbol{X}}_{1},\dots,{\boldsymbol{X}}_{n}$ of the vector ${\boldsymbol{X}}\in {\rm RV}^{d}_+(2)$ and an ordered vector of indices $I=(I_1,\ldots,I_{|I|})\subsetneq V$ such that $i,j\notin I$, we define
	\begin{enumerate}
	\item[-]\, $\tilde{\boldsymbol{X}}= (X_i, X_j, \bs X_I)\in {\rm RV}^{|I|+2}_+(2)$, and the independent replicates $\tilde{\boldsymbol{X}}_{1},\dots,\tilde{\boldsymbol{X}}_{n}$ with  angular representations $(\tilde{R}_{\ell}, \tilde{\boldsymbol{\omega}}_{\ell})$ for $\ell\in\{1,\dots,n\}$;
	\item[-]\, $\tilde{\boldsymbol{X}}_a= (X_i, aX_j, a\bs X_I)\in {\rm RV}^{|I|+2}_+(2)$, and the independent replicates $\tilde{\boldsymbol{X}}_{a,1},\dots,\tilde{\boldsymbol{X}}_{a,n}$ with angular representations $(\tilde{R}_{a,\ell}, \tilde{\boldsymbol{\omega}}_{a,\ell})$ for  $\ell\in\{1,\dots,n\}$;
		\end{enumerate}
	\begin{enumerate}
		\item[-]\,$\hat{\sigma}_{M_{i,aj,aI}}^2$:
	for a threshold $1\le k\le n$, we employ the non-parametric scaling estimators from the angular measure of $\tilde{\boldsymbol{X}}_{a}$  to estimate ${\sigma}_{M_{i,aj,aI}}^2$ as 
	\begin{align}\label{p2estscal}
	\hat{\sigma}_{M_{i,aj,aI}}^2 
	=\frac{{1+a^2(|I|+1)}}{k}\sum_{\ell=1}^{n}\underset{r\in\{1,\ldots,|I|+2\}}{\bigvee}{\tilde \omega}_{a,\ell r}^2\mathds{1}{\left\{\tilde{R}_{a,\ell} \geq \tilde{R}_a^{(k)}\right\}};
	\end{align}
	%where $\bs\omega_{a, \ell}= \bs X_{i,aj,aI}/\norm{\bs X_{i,aj,aI}}$.
	
	\item[-]\, $\hat{\sigma}_{M_{i,j,I}}^2$: For $\hat{\sigma}_{M_{i,j,I}}^2$ we use the angular measure of the vector $\tilde{\bs X}_a$, but with the modification
		\begin{align}\label{p2estscala}
		\hat{\sigma}_{M_{i,j,I}}^2 
		=\frac{{1+a^2(|I|+1)}}{k}\sum_{\ell=1}^{n}\underset{r\in\{1,\ldots,|I|+2\}}{\bigvee}{\omega_{\ell r}^2}\mathds{1}{\left\{\tilde{R}_{a,\ell} \geq \tilde{R}_a^{(k)}\right\}},
		\end{align}
			where $\omega_{\ell r}= \tilde{\omega}_{a,\ell r}/a$, for $r\in\{2,\ldots,|I|+2\}$, and $\omega_{\ell 1}=\tilde{\omega}_{a, \ell 1}$. 
	\end{enumerate}	
			Concerning the last point, the angular measure $\tilde{H}_{\tilde{\bs X}_a}$ has atoms $\tilde{\boldsymbol{a}}_k$ with entries %$\tilde{a}_{rk}=a\cdot{a}_{rk}$ for indices $r\in I\cup\{j\}$, and  $\tilde{a}_{ik}={a}_{ik}$. 
			$\tilde{a}_{1k}={a}_{ik}$, $\tilde{a}_{2k}=a\,{a}_{jk}$ and  $\tilde{a}_{r+2,k}=a\,{a}_{I_rk}$ for $r\geq1$.
			As ${H}_{\tilde{\bs X}_a}(\Theta^{|I|+1})=1+a^2(|I|+1)$ by \eqref{p2specmass}, the estimator \eqref{p2estscala} estimates the theoretical quantity
			\begin{align*}
			(1+a^2(|I|+1))\mathbb{E}_{\tilde{H}_{\tilde{\bs X}_a}}\left(\underset{r\in\{1,\ldots,|I|+2\}}{\bigvee}{\omega_{\ell r}^2}\right)&=\sum_{k=1}^{d}\norm{\tilde{\bs a}_k}^2\underset{r\in\{1,\ldots,|I|+2\}}{\bigvee}\frac{{a}_{rk}^2}{\norm{\tilde{\boldsymbol{a}}_k}^2}=\sum_{k=1}^{d}\underset{r\in\{1,\ldots,|I|+2\}}{\bigvee}{a}_{rk}^2={\sigma}_{M_{i,j,I}}^2,
			\end{align*}
			which can be used to establish consistency of $\hat{\sigma}_{M_{i,j,I}}^2$, as shown in Section \ref{p2consistency}.
Alternatively, one can estimate ${\sigma}_{M_{i,j,I}}^2$ directly from the angular measure of $\tilde{\boldsymbol{X}}$, using 
		\begin{align}\label{p2estscalinit}
		\hat{\sigma}_{M_{i,j,I}}^2 
		=\frac{{2+|I|}}{k}\sum_{\ell=1}^{n}\underset{r\in\{1,\ldots,|I|+2\}}{\bigvee}{\omega_{\ell r}^2}\mathds{1}{\left\{\tilde{R}_{\ell} \geq \tilde{R}^{(k)}\right\}}.
		\end{align}
	The estimators \eqref{p2estscalinit} and  \eqref{p2estscala} perform similarly in the inital step of the algorithm, when $I=\emptyset$. However, for $I\neq \emptyset$, due to the increasing dimension of the angular measure, we use only \eqref{p2estscal} and \eqref{p2estscala} to estimate the scalings involved in the matrix ${\Delta}_{\hat I}$ in Algorithm~\ref{p2causordalg}.
		
	\subsection{Consistency of the scaling estimates}	\label{p2consistency}	
	Results from \citet[Theorem 1]{lars}  were used in \citet[Theorem 4]{KK} to show consistency and asymptotic normality of the estimators in \eqref{p2estscal}, \eqref{p2estscala}, and \eqref{p2estscalinit}, under a mild condition.		
	We use the angular representation \eqref{p2pol.est}, and focus on the general form of the estimator $\hat{\mathbb{E}}_{\tilde{H}_{\boldsymbol{X}}}[f({\boldsymbol{\omega}})]$ in \eqref{p2specemp}.
	
	Let the radial component $R$ of $\bsx$ have distribution function $F$. We restate a version of Theorem 4 from \citet{KK}. 
	\begin{theorem}[Central Limit Theorem]\label{p2clt}
		Let $\boldsymbol{X}\in {\rm RV}^{d}_+(2)$ with angular representation $(R,\bs\omega)$
		and let $\boldsymbol{X}_1,\dots,\boldsymbol{X}_n$ be independent replicates of $\boldsymbol{X}$. Choose $k$ such that $k=o(n)$ and $k\to\infty$ as $n\to\infty$. Let $f\colon \Theta_+^{{d}-1}\to {\mathbb{R}_+}$ be a continuous function. 
		Assume that 
		\begin{align}
		\lim\limits_{n\to \infty}\sqrt{k}\left(\frac{n}{k}\mathbb{E}\left[f({{\boldsymbol{\omega}_1}})\mathds{1}{\left\{R_1\geq b_{\frac{n}{k}}t^{-1/ \alpha}\right\}}\right] 
		- \mathbb{E}_{\tilde{H}_{\bsx}}[f({\boldsymbol{\omega}}_1)]\frac{n}{k}\bar{F}\left(b_{\frac{n}{k}} t^{-1/\alpha}\right) \right)=0
		\label{p2assump}
		\end{align}
		holds locally uniformly for $t\in[0, \infty)$, and
		$\sigma^2\coloneqq\emph{Var}_{\tilde{H}_{\bsx}}(f({\boldsymbol{\omega}}))>0$.
		Then
		\begin{align*}
		\sqrt{k}\left(\hat{\mathbb{E}}_{\tilde{H}_{\boldsymbol{X}}}\left[f({\boldsymbol{\omega}})\right]-\mathds{E}_{\tilde{H}_{\boldsymbol{X}}}[f({\boldsymbol{\omega}})]\right)\overset{\D}{\to} \mathcal{N}(0, \sigma^2), \hspace{5mm} n\to\infty.
		\end{align*}
	\end{theorem}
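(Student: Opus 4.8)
The plan is to reduce the self-normalised estimator \eqref{p2specemp} to a tail sum with a \emph{deterministic} threshold, to treat that sum by a triangular-array central limit theorem, and to argue that the error from using the data-driven threshold $R^{(k)}$ is asymptotically negligible. Write $u_n=b_{n/k}$ for the deterministic level, chosen (as in the excerpt) so that $\tfrac nk\bar F(u_n)\to 1$ and $R^{(k)}/u_n\stp 1$. Since $\sum_{\ell=1}^n\mathds{1}\{R_\ell\ge R^{(k)}\}=k$, the estimator is genuinely self-normalised, and writing $\mathbb{E}_{\tilde{H}_{\bsx}}[f]=\mathbb{E}_{\tilde{H}_{\bsx}}[f]\,\tfrac1k\sum_\ell\mathds{1}\{R_\ell\ge R^{(k)}\}$ I would first record the exact decomposition
\begin{align*}
\hat{\mathbb{E}}_{\tilde{H}_{\bsx}}[f]-\mathbb{E}_{\tilde{H}_{\bsx}}[f]
&=\frac1k\sum_{\ell=1}^n\big(f(\boldsymbol{\omega}_\ell)-\mathbb{E}_{\tilde{H}_{\bsx}}[f]\big)\mathds{1}\{R_\ell\ge u_n\}\\
&\quad+\frac1k\sum_{\ell=1}^n\big(f(\boldsymbol{\omega}_\ell)-\mathbb{E}_{\tilde{H}_{\bsx}}[f]\big)\big(\mathds{1}\{R_\ell\ge R^{(k)}\}-\mathds{1}\{R_\ell\ge u_n\}\big),
\end{align*}
where the centring by $\mathbb{E}_{\tilde{H}_{\bsx}}[f]$ is the decisive gain from self-normalisation.

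For the first (main) term, set $h=f-\mathbb{E}_{\tilde{H}_{\bsx}}[f]$; since $f$ is continuous on the compact simplex $\Theta_+^{d-1}$, the summands $\tfrac{1}{\sqrt k}h(\boldsymbol{\omega}_\ell)\mathds{1}\{R_\ell\ge u_n\}$ form a row-wise i.i.d.\ triangular array of uniformly bounded variables, so I would apply the Lindeberg--Feller theorem. The variance equals $\tfrac nk\var\big(h(\boldsymbol{\omega}_1)\mathds{1}\{R_1\ge u_n\}\big)$, whose dominant part $\tfrac nk\mathbb{E}[h^2\mathds{1}\{R_1\ge u_n\}]=\mathbb{E}[h^2\mid R_1\ge u_n]\,\tfrac nk\bar F(u_n)$ converges to $\mathbb{E}_{\tilde{H}_{\bsx}}[h^2]=\var_{\tilde{H}_{\bsx}}(f)=\sigma^2$ by the angular limit \eqref{p2empdist}, while the squared-mean correction is $O(k/n)\to 0$; the Lindeberg condition is immediate as each summand is $O(1/\sqrt k)$. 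The centring is handled by assumption \eqref{p2assump} at $t=1$ applied to $h$ (for which $\mathbb{E}_{\tilde{H}_{\bsx}}[h]=0$), which forces $\sqrt k\,\tfrac nk\mathbb{E}[h(\boldsymbol{\omega}_1)\mathds{1}\{R_1\ge u_n\}]\to 0$ and so removes any asymptotic bias. Hence $\sqrt k$ times the main term converges to $\mathcal{N}(0,\sigma^2)$.

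The delicate step, and where I expect the real difficulty, is to show that $\sqrt k$ times the remainder tends to $0$ in probability. The remainder is the increment, between the random level $R^{(k)}$ and the deterministic level $u_n$, of the \emph{centred} tail empirical sum indexed by $h$. Its deterministic drift is eliminated by the \emph{local uniformity} in \eqref{p2assump}, which (again using $\mathbb{E}_{\tilde{H}_{\bsx}}[h]=0$) gives $\sqrt k\,\tfrac nk\mathbb{E}[h(\boldsymbol{\omega}_1)\mathds{1}\{R_1\ge u_n t^{-1/\alpha}\}]\to 0$ uniformly for $t$ near $1$; its stochastic part is controlled through the functional central limit theorem for the tail empirical process of \citet[Theorem~1]{lars}. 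That result yields weak convergence of $t\mapsto\sqrt k\big(\tfrac1k\sum_\ell h(\boldsymbol{\omega}_\ell)\mathds{1}\{R_\ell\ge u_n t^{-1/\alpha}\}-\mathbb{E}[\,\cdot\,]\big)$ to a Gaussian process with continuous paths, hence asymptotic equicontinuity; combined with $R^{(k)}/u_n\stp 1$, the increment over the shrinking radial band then vanishes in probability. It is worth noting that without the centring by $\mathbb{E}_{\tilde{H}_{\bsx}}[f]$ this increment would contribute a non-vanishing $\mathbb{E}_{\tilde{H}_{\bsx}}[f]\,\sqrt k\big(\tfrac1k\sum_\ell\mathds{1}\{R_\ell\ge u_n\}-1\big)$ term, so the self-normalisation is exactly what renders the remainder negligible and fixes the limiting variance at $\var_{\tilde{H}_{\bsx}}(f)$.

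Finally I would combine the main term and the vanishing remainder by Slutsky's theorem to obtain $\sqrt k\big(\hat{\mathbb{E}}_{\tilde{H}_{\bsx}}[f]-\mathbb{E}_{\tilde{H}_{\bsx}}[f]\big)\stD\mathcal{N}(0,\sigma^2)$. The principal obstacle is the remainder control of the previous paragraph, namely transferring from the deterministic threshold $u_n$ to the random $R^{(k)}$; the remaining ingredients are a bounded-summand Lindeberg argument and the first-order bias bound \eqref{p2assump}.
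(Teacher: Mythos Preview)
Your proposal is correct and follows the standard route underlying the cited references. Note, however, that the paper does not supply its own proof of this theorem: it explicitly restates it as a version of Theorem~4 in \citet{KK}, which in turn rests on Theorem~1 of \citet{lars}. Your sketch is precisely the argument behind those results---decompose into a deterministic-threshold tail sum plus a random-threshold remainder, apply a triangular-array CLT to the former with bias killed by \eqref{p2assump} at $t=1$, and control the latter via the functional CLT/asymptotic equicontinuity of the tail empirical process together with $R^{(k)}/b_{n/k}\stp 1$---so there is no substantive difference to report.
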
		
	
	We apply the theorem  to the vector $\tilde{\bs X}_a=(X_i, aX_j, aX_I)$ for $I=(I_1,\ldots, I_{|I|})\subsetneq V$ with $i,j\notin I$  and for the functions $f$ used in the estimators \eqref{p2estscal}, \eqref{p2estscala} and \eqref{p2estscalinit}. This gives consistency of the estimators, similar to \citet[Theorem 5]{KK}, which we adapt below. 
	
	\begin{theorem}
		\label{p2scalclt}
		Let $\boldsymbol{X}=A\boldsymbol{Z}$ be an LSEM+ with innovation coefficient matrix $A$ and let $\boldsymbol{X}_1,\dots,\boldsymbol{X}_n$ be independent replicates of $\boldsymbol{X}$. Take the subvector $\tilde{\bs X}_a$  for $I\subsetneq \{1,\dots,{d}\}$ with $i,j\notin I$ and consider the independent replicates $\tilde{\boldsymbol{X}}_{a,1},\dots,\tilde{\boldsymbol{X}}_{a,n}$ with angular representations $(\tilde{R}_{a,\ell}, \tilde{\boldsymbol{\omega}}_{a,\ell})$ for  $\ell\in\{1,\dots,n\}$. Choose $k$ such that $k=o(n)$ and $k\to\infty$ as $n\to\infty$.
		Furthermore, assume that (\ref{p2assump}) holds for $\tilde{\bs X}_a$ and $f\left(\boldsymbol{\tilde\omega_a}\right)={\left(1+a^2(|I|+1)\right)}\left({\vee}_{r\in\{1,\ldots,|I|+2\}}\tilde\omega_{a,r}^2\right)$ and that 
		$\emph{Var}_{\tilde{H}_{\tilde\bsx_a}}(f(\tilde{\bs\omega}_a))>0$. 
		Then
		$$	\sqrt{k}\left(\hat{\mathbb{E}}_{\tilde{H}_{\tilde{\bsx}_{a}}}\left[f\left({\tilde{\bs\omega}_a}\right)\right]-\mathds{E}_{\tilde{H}_{\tilde{\bsx}_{a}}}\left[\left({\tilde{\bs\omega}_a}\right)\right]\right)\overset{\mathcal{D}}{\to}  {\mathcal{N}}(0, v^2),\quad n\to\infty,$$ 
		with variance
		\begin{align*}
		v^2%&=%\lim_{n\to\infty} \, k\emph{Var}\left(\hat{\mathbb{E}}_{\tilde{H}_{\tilde{\bs{X}}_a}}[f({\tilde{\bs\omega}_a})]\right) 
		%{\left(1+a^2(|I|+1)\right)}^2\, \emph{Var}_{\tilde{H}_{{\tilde{\bs{X}}_a}}}\left(\underset{r\in\{1,\ldots,|I|+2\}}{\bigvee}{\tilde{\omega}_{a,r}}^2\right)\\
		 = {\left(1+a^2(|I|+1)\right)}\left(\sum_{p=1}^{{d}}\underset{r\in\{1,\ldots,|I|+2\}}{\bigvee}\frac{\tilde a_{rp}^4}{\norm {\tilde{\bs a}_p}^2_2}\right)-
		\left(\sum_{p=1}^{d}\underset{r\in\{1,\ldots,|I|+2\}}{\bigvee}\tilde a_{rp}^2\right)^2,
		\end{align*}
		where the vector $\tilde{\bs{a}}_p\in \mathbb{R}^{2+|I|}_+$ has entries %$\tilde{a}_{rk}=a\cdot{a}_{rk}$ for indices $r\in I\cup\{j\}$, and  $\tilde{a}_{ik}={a}_{ik}$. 
		$\tilde{a}_{1k}={a}_{ik}$, $\tilde{a}_{2k}=a\,{a}_{jk}$ and  $\tilde{a}_{r+2,k}=a\,{a}_{I_rk}$ for $r\geq1$.
		
		%has entries $\tilde{a}_{rp}=a\cdot{a}_{rp}$ for indices $r\in I\cup\{j\}$, and  $\tilde{a}_{ip}={a}_{ip}$.
	\end{theorem}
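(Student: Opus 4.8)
The plan is to reduce the claim to a single application of Theorem~\ref{p2clt} to the rescaled subvector $\tilde{\bs X}_a$ and the stated function $f$, and then to evaluate the resulting limiting variance $\sigma^2=\var_{\tilde H_{\tilde\bsx_a}}(f(\tilde{\bs\omega}_a))$ in closed form using the discreteness of the angular measure. First I would verify the two standing prerequisites of Theorem~\ref{p2clt}. Since $\tilde{\bs X}_a=(X_i,aX_j,a\bs X_I)$ arises from $\bs X=A\bs Z$ by selecting $|I|+2$ rows and rescaling, it is itself a non-negative linear image of $\bs Z$; Proposition~\ref{p2discspect} therefore gives $\tilde{\bs X}_a\in{\rm RV}^{|I|+2}_+(2)$ with discrete angular measure supported on the atoms $\tilde{\bs a}_p/\norm{\tilde{\bs a}_p}$ carrying mass $\norm{\tilde{\bs a}_p}^2$, where $\tilde{\bs a}_p$ has the entries stated in the theorem. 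The map $\tilde{\bs\omega}\mapsto\bigvee_r\tilde\omega_r^2$ is continuous on the compact simplex $\Theta_+^{|I|+1}$, so $f$ is continuous and bounded there, meeting the requirements on the integrand.

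With condition~\eqref{p2assump} assumed for $\tilde{\bs X}_a$ and $\var_{\tilde H_{\tilde\bsx_a}}(f(\tilde{\bs\omega}_a))>0$ by hypothesis, Theorem~\ref{p2clt} immediately delivers the asserted central limit theorem with limiting variance $\sigma^2=\var_{\tilde H_{\tilde\bsx_a}}(f(\tilde{\bs\omega}_a))$, and it remains only to show $\sigma^2=v^2$. Writing $c\coloneqq 1+a^2(|I|+1)$, I would first read off from~\eqref{p2specmass} the total mass $H_{\tilde\bsx_a}(\Theta_+^{|I|+1})=c$, obtained by summing the squared scalings $1,a^2,\dots,a^2$ of the $|I|+2$ rescaled components. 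Since $f=c\,\bigvee_r\tilde\omega_r^2$ carries precisely this constant and $\tilde H_{\tilde\bsx_a}=H_{\tilde\bsx_a}/c$, the factor cancels in the first moment, which evaluates at the atoms to
\begin{align*}
\mathbb{E}_{\tilde H_{\tilde\bsx_a}}[f]
=\int_{\Theta_+^{|I|+1}}\bigvee_r\tilde\omega_r^2\,{\rm d}H_{\tilde\bsx_a}
=\sum_{p=1}^{d}\norm{\tilde{\bs a}_p}^2\bigvee_r\frac{\tilde a_{rp}^2}{\norm{\tilde{\bs a}_p}^2}
=\sum_{p=1}^{d}\bigvee_r\tilde a_{rp}^2=\sigma_{M_{i,j,I}}^2,
\end{align*}
confirming in passing that the estimator targets the intended scaling.

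For the second moment the same substitution, together with the elementary identity $\big(\bigvee_r\tilde a_{rp}^2\big)^2=\bigvee_r\tilde a_{rp}^4$, gives
\begin{align*}
\mathbb{E}_{\tilde H_{\tilde\bsx_a}}[f^2]
=c\int_{\Theta_+^{|I|+1}}\Big(\bigvee_r\tilde\omega_r^2\Big)^2{\rm d}H_{\tilde\bsx_a}
=c\sum_{p=1}^{d}\frac{\bigvee_r\tilde a_{rp}^4}{\norm{\tilde{\bs a}_p}^2},
\end{align*}
and subtracting the square of the first moment yields exactly the stated expression for $v^2$. The argument is a transcription of Theorem~5 of~\citet{KK} to the rescaled subvector, so the principal difficulty is organisational rather than conceptual: one must keep the normalising constant $c$ attached to the correct power of $f$ when moving between $H_{\tilde\bsx_a}$ and $\tilde H_{\tilde\bsx_a}$, and one must be clear that condition~\eqref{p2assump} is a hypothesis legitimately imposed on $\tilde{\bs X}_a$. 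The latter point is reassuring because $\tilde{\bs X}_a$ inherits both its radial tail and its finite collection of angular atoms from $\bs X$, so the second-order regularity required by~\eqref{p2assump} transfers directly once assumed for the ambient vector.
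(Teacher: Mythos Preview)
Your proposal is correct and follows precisely the route the paper intends: the paper does not give a standalone proof of Theorem~\ref{p2scalclt} but states it as a direct application of Theorem~\ref{p2clt} to $\tilde{\bs X}_a$, ``similar to [Theorem~5 of \citet{KK}], which we adapt below.'' Your write-up supplies exactly this application together with the explicit variance computation via the discrete angular measure from Proposition~\ref{p2discspect}, which the paper leaves implicit.
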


\subsection{Proof of consistency of Algorithm \ref{p2causordalg}}		
\begin{proof}[Proof of Proposition \ref{p2consistencyy}]
	Let $S$ be the total number of iterations within the \textbf{while} loop of Algorithm \ref{p2causordalg}, and note that $S\leq d$, because $I$ is augmented by at least one element at each iteration. For each  iteration $s$ within the while loop, we let $\pi^s$ denote the output of the vector $\pi$ in line 6 of Algorithm~\ref{p2causordalg}, and write $I^s=(\pi^s, \pi^{s-1},\ldots, \pi^1)$ for the ordered vector of nodes estimated up to iteration~$s$. Likewise, we use $\hat{\Delta}^s$ to denote the matrix ${\Delta}_{\hat I^s}$ in line 3.
	We want to show that the probability of an incorrect causal ordering converges to zero, namely, 
	$$\mathbb{P}(\exists i,j\in V, j\in\an(i):I_j< I_i )\to 0,\quad n\to\infty.$$
	%		\begin{align*}
	%		\mathbb{P}(\exists (i,j)\in V\times V, i\in\an(j): \pi_i< \pi_j)\leq \sum_{(i,j)\in V\times V,\, i\in\an(j)}\mathbb{P}(\pi_i<\pi_j)
	%		\end{align*}
	Clearly, $I=I^S$ is an incorrect causal ordering if only if there is an incorrect ordering, or error, in the $s$-th step of the algorithm, for some $s\leq S$, that is, if  $\pi^s_j<\pi^s_i$ for $i,j\in V\setminus I^s$ such that $j\in\an(i)$.
	
	We proceed via induction over the number of steps, $s$,  and show that the probability of obtaining an error at a particular step converges to zero.
	
	Consider the probability of an error at step $s=1$. Due to the existence of a causal ordering of the nodes supported on a DAG, Algorithm \ref{p2causordalg} should select only amongst the source nodes in this step. Theorem \ref{p2sourcenodes} establishes that $\Delta_{ij}^s=0$ for all nodes $j\in V_0$ and $j\in \an(i)$, so $\hat{\Delta}^s_{ij}\overset{P}{\to}0$ by consistency of the estimates, whereas  $\Delta_{ji}^s=c_{ji}<0$, which implies that $\hat{\Delta}^s_{ji}\overset{P}{\to}c_{ji}$. The continuous mapping theorem applied to the minimum function gives that $\min_{i \in V\setminus\{j\}}\hat{\Delta}^s_{ij}\overset{P}{\to}0$. By the existence of source nodes, it follows that the error term $\eps_{\hat I^s}$ converges in probability to zero, i.e.,  $\eps_{\hat I^s}\overset{P}{\to}0$. 
	Consider the event
	$$E^s\coloneqq \{\textrm{select node $i\in V\setminus I^s$, such that $\exists j\in \an(i)\cap V\setminus I^s$, and $\pi^s_i> \pi^s_j$}\}.
	$$
	Then, the probability of an error at step $s=1$ equals
	\begin{align}
	\mathbb{P}(E^s)&\leq \mathbb{P}\left( \underset{\left\{i,j\in V: j\in \an(i)\right\}}\cup \left\{\min_{j \in V\setminus\left\{i\right\}}\hat{\Delta}^s_{ji}> \min_{i \in V\setminus\left\{j\right\}}\hat{\Delta}^s_{ij}\right\}\cap\left\{ |{ \delta}_{\hat I^s,i}|\leq   \eps_{\hat I^s}\right\}\right) \nonumber \\
	& \leq  \sum_{\{i,j\in V: j\in \an(i)\}}\mathbb{P}\left(\left\{ |{ \delta}_{\hat I^s,i}|\leq  \eps_{\hat I^s}\right\}\right)\overset{P}{\to}{0},\quad n\to\infty, \label{p2step_s1}
	\end{align}
	because ${ \delta}_{\hat I^s,i}\overset{P}{\to}c_i<0$.  %Taking the union over all nodes $i\notin V_0$, we see that 
	Hence, the probability of selecting a node $i$ with $\an(i)\neq \emptyset$ at step $s=1$ converges to zero.
	
	Suppose, by the inductive hypothesis, that we can consistently recover a causal ordering up to step $r-1\leq S-1$, that is,
	\begin{align}\label{p2inductohyp}
	\mathbb{P}\left(\underset{s\leq r-1}{\cup}E^s\right)\overset{P}{\to}{0},\quad n\to\infty.
	\end{align}
	
	Let $s=r$ and consider the event $E^r$, whose probability, given the recursive nature of Algorithm \ref{p2causordalg}, we write as
	\begin{align}\label{p2Edecomp}
\mathbb{P}(E^r)&=	\mathbb{P}\left(E^r\cap  \underset{s_1\leq r-1}{\cup}E^{s_1}\right)+\mathbb{P}\left(E^r \cap \left\{\underset{s_1\leq r-1}{\cup}E^{s_1}\right\}^c\right),\nonumber\\
	&=\mathbb{P}\left(E^r\cap  \underset{{s_1}\leq r-1}{\cup}E^{s_1}\right)+\mathbb{P}\left(E^r | \left\{\underset{{s_1}\leq r-1}{\cup}E^{s_1}\right\}^c\right)\mathbb{P}\left(\left\{\underset{{s_1}\leq r-1}{\cup}E^{s_1}\right\}^c\right).
	\end{align}
	The first term in the last line of \eqref{p2Edecomp} vanishes as $n\to\infty$ because of the inductive hypothesis \eqref{p2inductohyp}, i.e., $\mathbb{P}({\cup}_{{s_1}\leq r-1}E^{s_1})\overset{P}{\to}{0}$.
	The conditioning event $\{{\cup}_{{s_1}\leq r-1}E^{s_1}\}^c$ in the second term on the right-hand side of \eqref{p2Edecomp} satisfies the setting in Theorem \ref{p2descnodes}, that there are no ancestors of $I^s$ outside the set $I^s$, and, by \eqref{p2inductohyp}, occurs with probability approaching one. Due to the existence of a causal ordering of the nodes on a DAG, at every step $s$ we can identify a node $j_s$ with no ancestors in $V\setminus I^s$. Theorem \ref{p2descnodes} and consistency of the estimators imply that for such a node $\min_{i \in V\setminus I^s\cup \{j_s\}}\hat{\Delta}^s_{ij_s}\overset{P}{\to}0$, and therefore $\eps_{\hat I^s}\overset{P}{\to}0$ for $s=r$.
	
	Now consider the two nodes $i,j\in V\setminus I^r$, with $j\in\an(i)$.  %$\min_{i \in V\setminus\{j\}}\hat{\Delta}^s_{ij}\overset{P}{\to}0$, and hence $\hat{\eps}^s_{\Delta^s_I}\overset{P}{\to}0$. 
	By Theorem \ref{p2descnodes}, $\hat{\Delta}^r_{ji}\overset{P}{\to}c_{ji}<0$, which implies that $\min_{j \in V\setminus\{i\}}\hat{\Delta}^r_{ji}={ \delta}_{\hat I^r,i}\overset{P}{\to}c_{i}<0$.
	%Hence, we write the right hand side of \eqref{p2Edecomp} as
	%\begin{align*}
	%\mathbb{P}(E^r | \{\underset{s\leq r-1}{\cup}E^s\}^c)\mathbb{P}(\{\underset{s\leq r-1}{\cup}E^s\}^c),
	%\end{align*}
	We consider $\mathbb{P}(E^r | \{{\cup}_{{s_1}\leq r-1}E^{s_1}\}^c)$ and, similar to \eqref{p2step_s1}, we apply the upper bound
	\begin{align*}
	\mathbb{P}\left(E^r | \left\{\underset{{s_1}\leq r-1}{\cup}E^{s_1}\right\}^c\right)\leq  \sum_{\left\{\,i,j\,\in\, V\setminus I^r \,:\, j\,\in\, \an(i)\right\}}\mathbb{P}\left(\left\{ |{ \delta}_{\hat I^r,i}|\leq  \eps_{\hat I^{s_1}}\right\} | \left\{\underset{{s_1}\leq r-1}{\cup}E^{s_1}\right\}^c\right)\overset{P}{\to}{0}, \quad n\to\infty.
	\end{align*}
	
	This proves the inductive claim, that we consistently recover a correct ordering at every step of Algorithm \ref{p2causordalg}.
\end{proof}

\subsection{Standardisation of the margins}

The methodology of this paper requires that the vector $\boldsymbol{X}\in {\rm RV}^{d}_+(2)$ has unit scalings. In practice, one  may not know the index of regular variation $\alpha$ needed to standardise the margins. In both, the simulation study, where $\alpha\in\{2, 3\}$, and the data analysis, we transform the data to Fr\'echet margins using the empirical probability integral transform \citep[p. 338]{beirlant}.  Given original observations $\bs X^*_1,\ldots, \bs X^*_n$, we obtain the standardised margins~via
\begin{align}\label{p2empstand}
X_{\ell i}\coloneqq\Big\{-\log\Big(\frac{1}{n+1}\sum_{j=1}^{n}\mathds{1}_{\{X^*_{ji}\leq X^*_{\ell i}\}}\Big)\Big\}^{-1/2},\quad  \ell\in\{1,\dots,n\}.
\end{align}

EASE also applies the empirical integral transform to uniform margins, so this standardisation step does not affect its performance.

\newpage
	\section{Simulation results}\label{p2bplot5000}
	This section contains the results of the simulation study for the sample size $n=5000$, commented but not shown in Section~\ref{p2simres}.

	\begin{figure}[H]\centering
		\hspace{-0.4cm}\includegraphics[ height=5cm, width=13cm, clip]{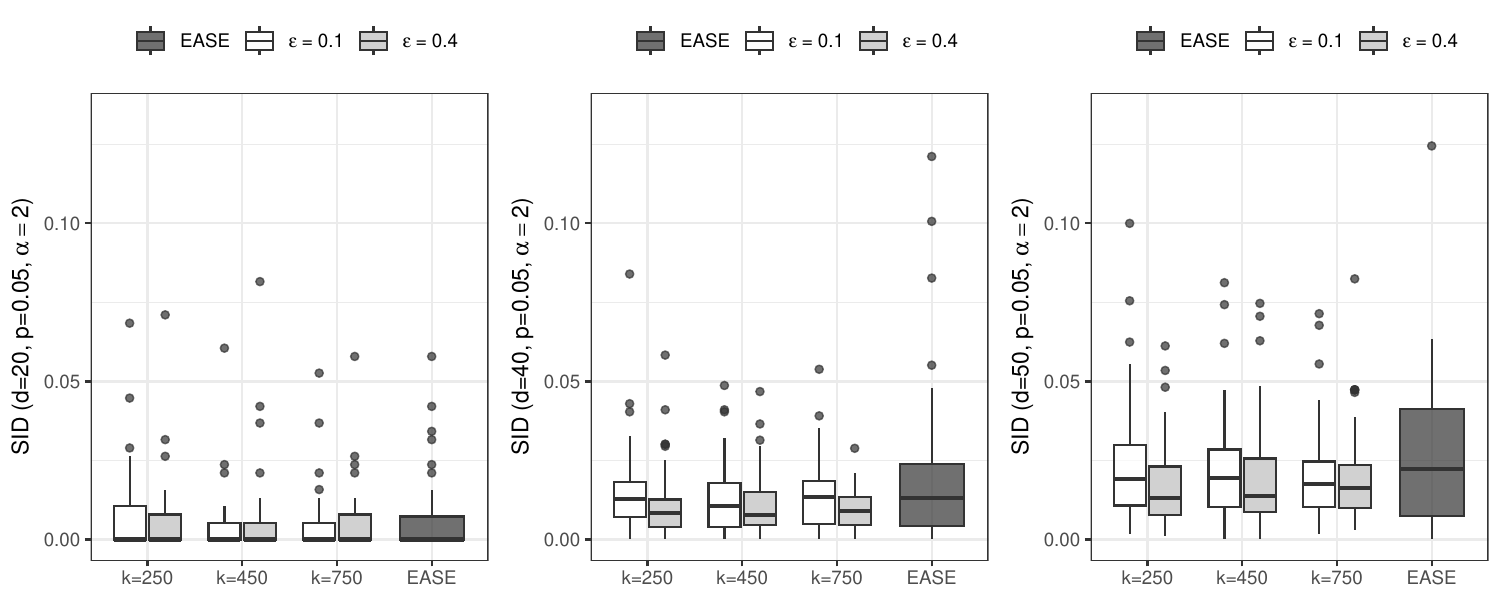}
		
		\hspace{-0.4cm}\includegraphics[ height=4.3cm, width=13cm, clip]{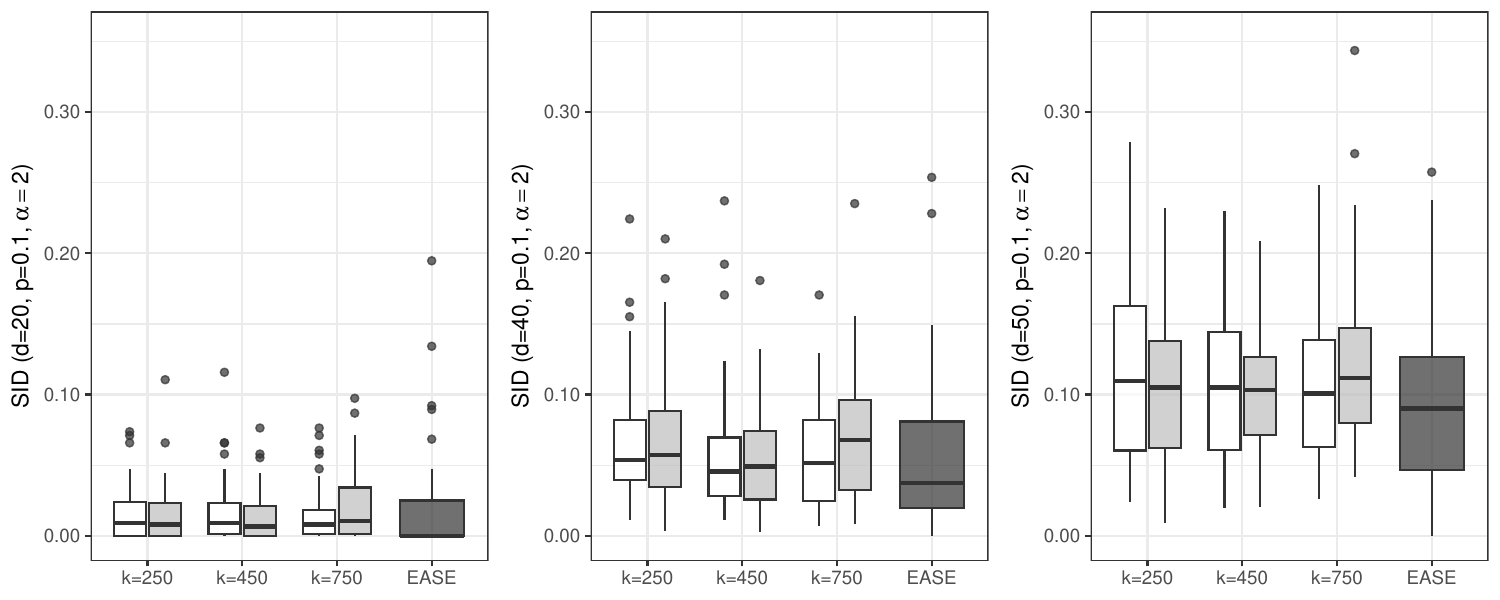}
		
		\hspace{-0.4cm}\includegraphics[ height=4.3cm, width=13cm, clip]{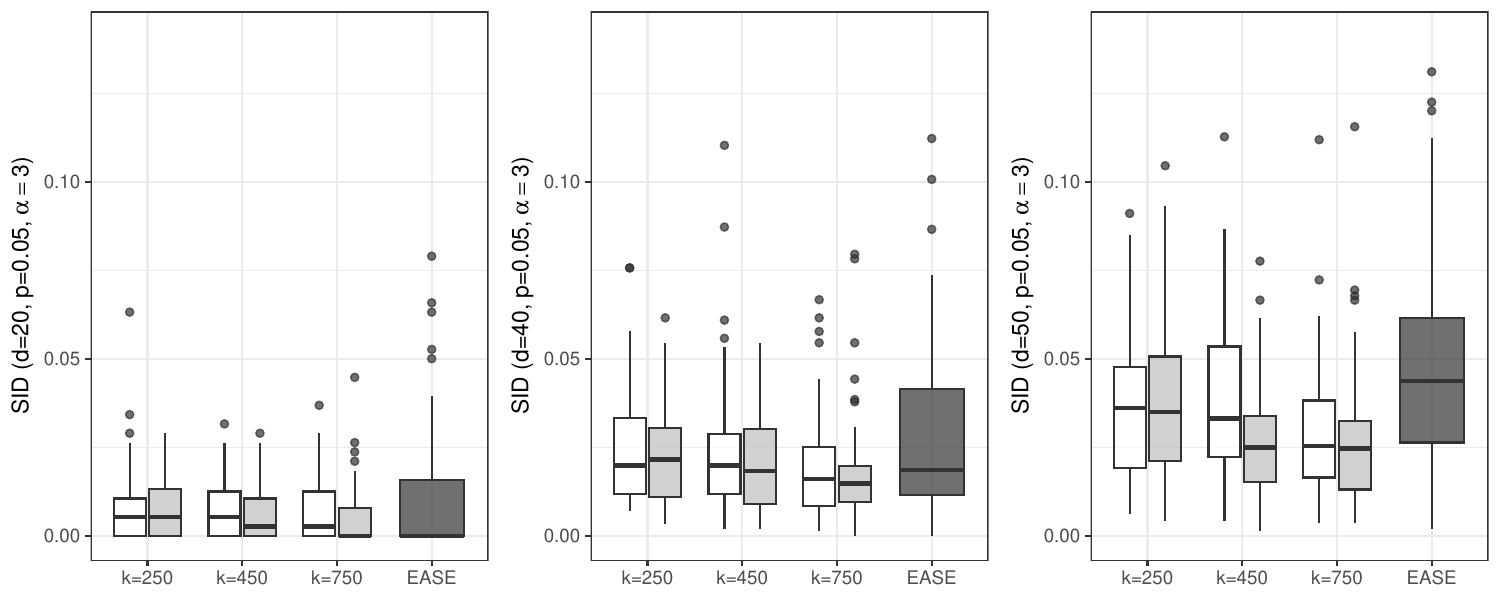}
		
		\hspace{-0.4cm}\includegraphics[ height=4.3cm, width=13cm, clip]{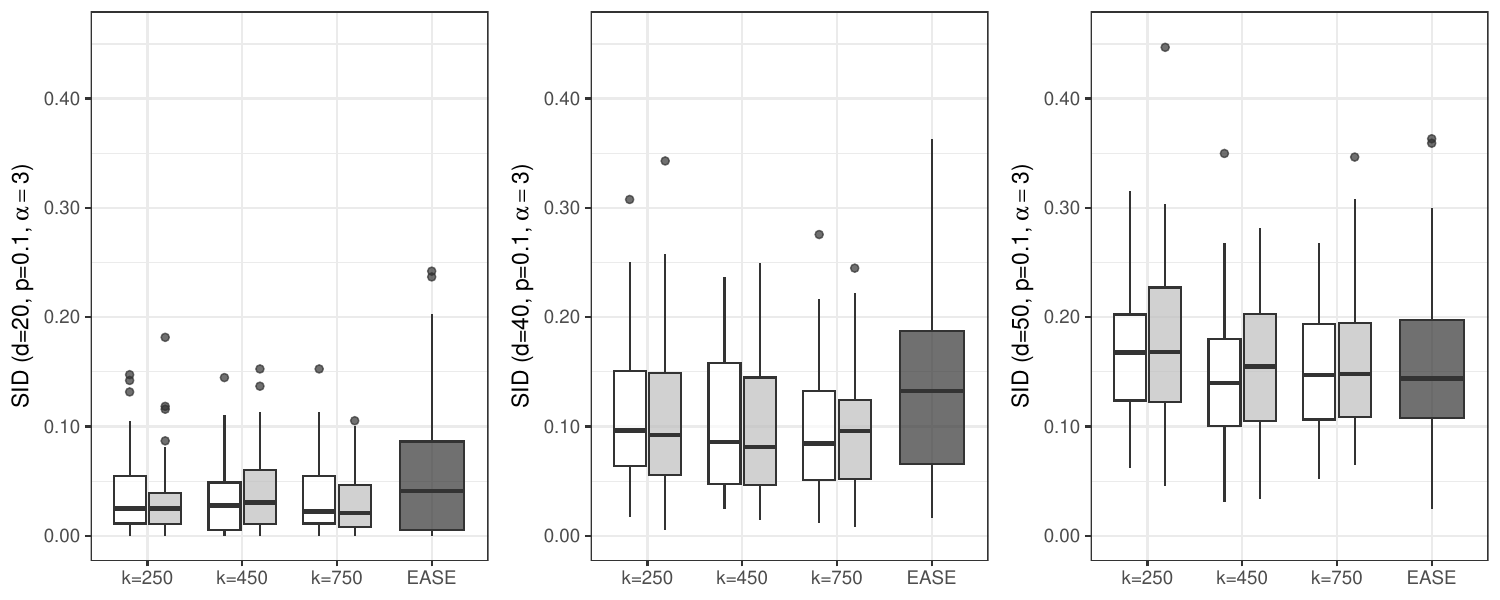}
		
		\caption{Boxplots of the SID of the causal orderings of Algorithm \ref{p2causordalg} and of EASE for 50 random DAGs for each configuration of $(d, p, \alpha)$ and for $n=5000$.} \label{p2runs5000alt}
	\end{figure}

	\begin{figure}[H]\centering
		\hspace{-0cm}\includegraphics[ height=4cm, width=12cm, clip]{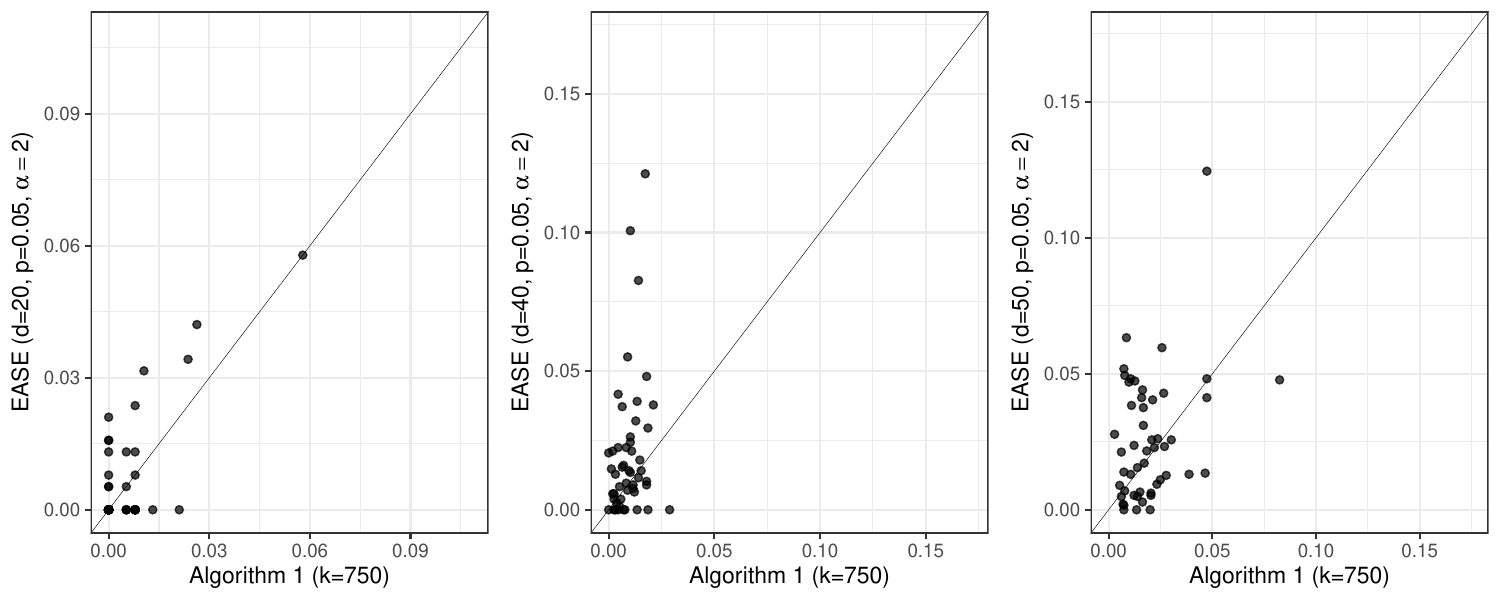}
		
		\hspace{-0cm}\includegraphics[ height=4cm, width=12cm, clip]{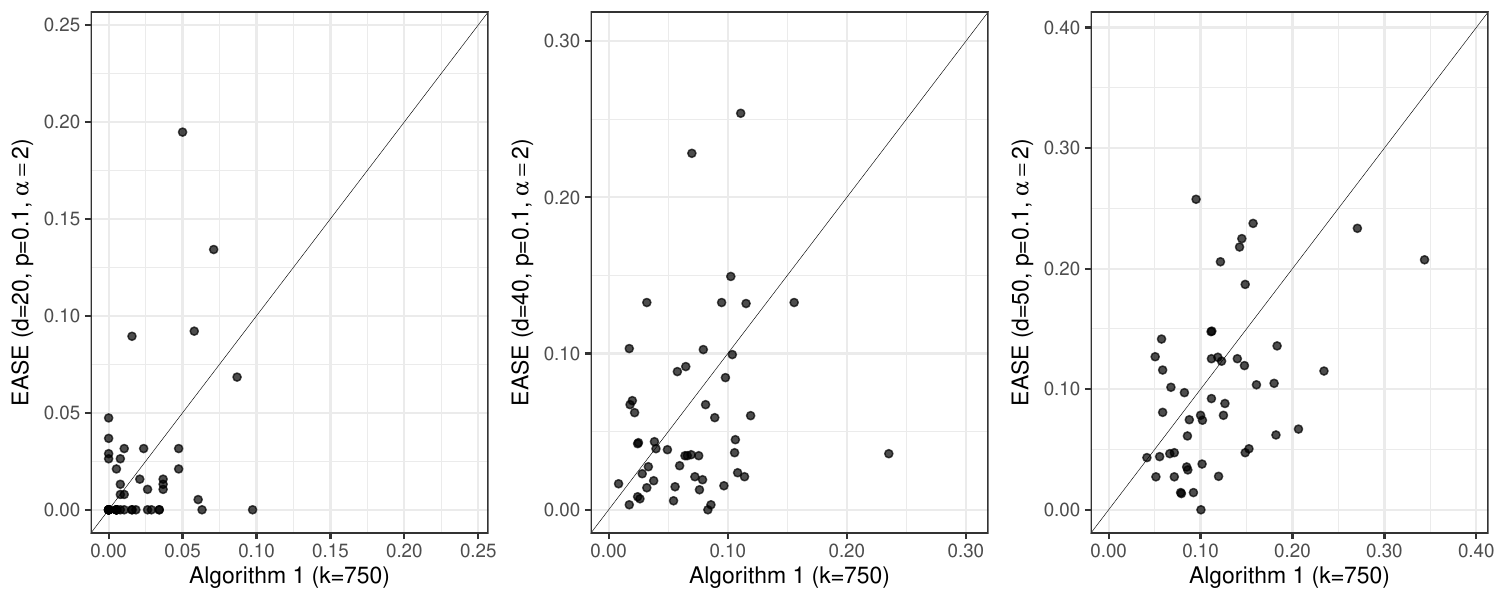}
		
		\hspace{-0cm}\includegraphics[ height=4cm, width=12cm, clip]{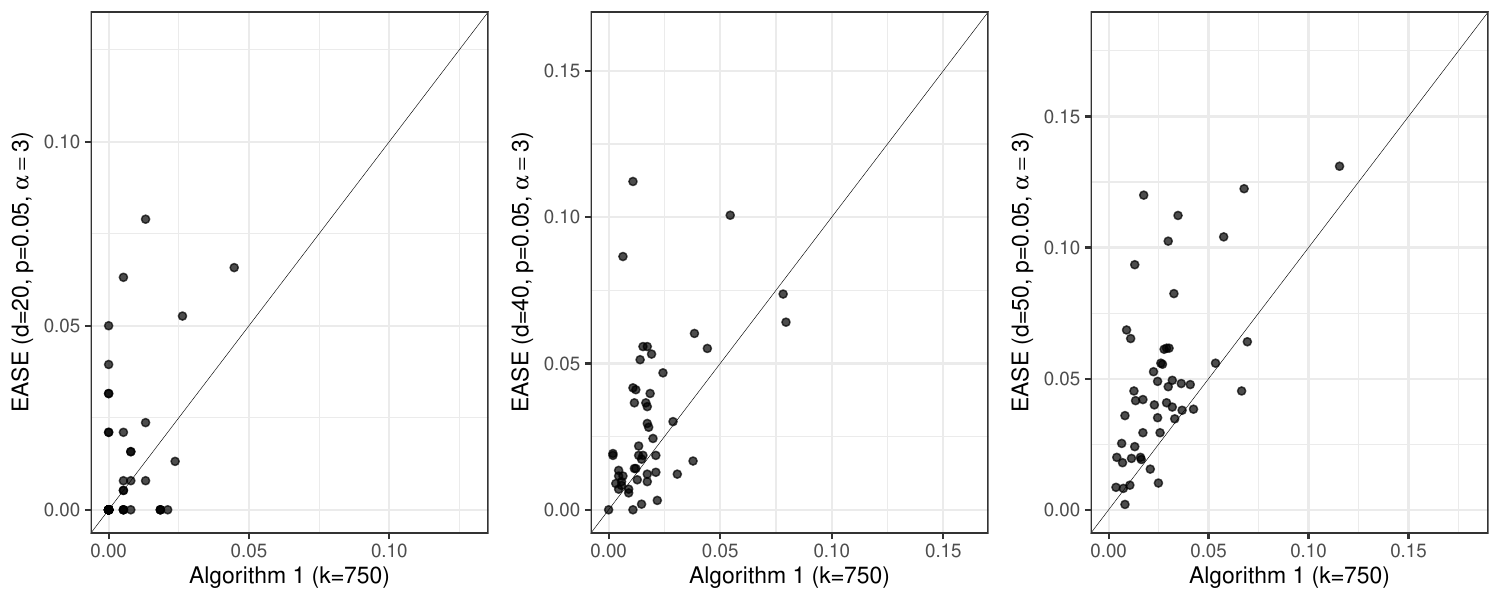}
		
		\hspace{-0cm}\includegraphics[ height=4cm, width=12cm, clip]{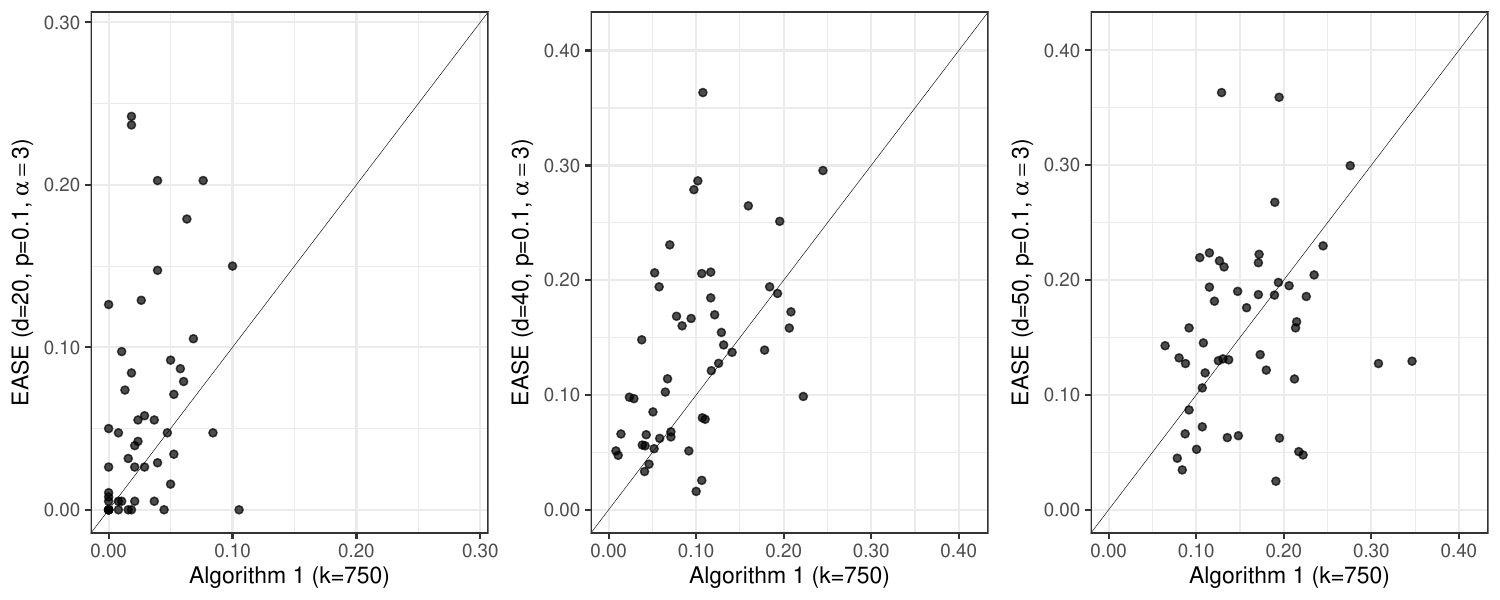}
		
		\caption{Scatterplots of the SID of the causal orderings of Algorithm~\ref{p2causordalg} and of EASE for 50 random DAGs for each configuration of $(d, p, \alpha)$ and for $n=5000$.} \label{p2vsruns5000}
	\end{figure}
	
	\subsection{Choice of the scalar $a$}\label{p2ascal}
	
	In this section we investigate the stability of Algorithm~\ref{p2causordalg} for the scalar $a$ for the smaller sample size $n=1000$.  Following the recipe outlined in Section \ref{p2simsetup}, we perform 20 simulations over random DAGs from configurations $(d, p, \alpha)$, where each parameter is randomly sampled from $d\in\{20, 40, 50\}$, $p\in\{0.05, 0.1\}$ and $\alpha\in\{2, 3\}$. To each DAG we apply Algorithm~\ref{p2causordalg} for each scalar $a\in\{1.0001, 1.15, 1.3, 1.5,2\}$ and estimate a causal ordering. In this simulation we fix $\eps=0.4$ and $k=250$. The resulting plots in Figure~\ref{p2sca} indicate better performance for values of $a$ larger than $1.0001$,  with $a=1.3$ giving the lowest SID when averaged over all DAGs. For the denser graphs with $p=0.1$ the same choice of $a$ yields a better and less variable performance.

	\begin{figure}[H]\centering
		\hspace{0cm}\includegraphics[ height=10cm, width=12cm, clip]{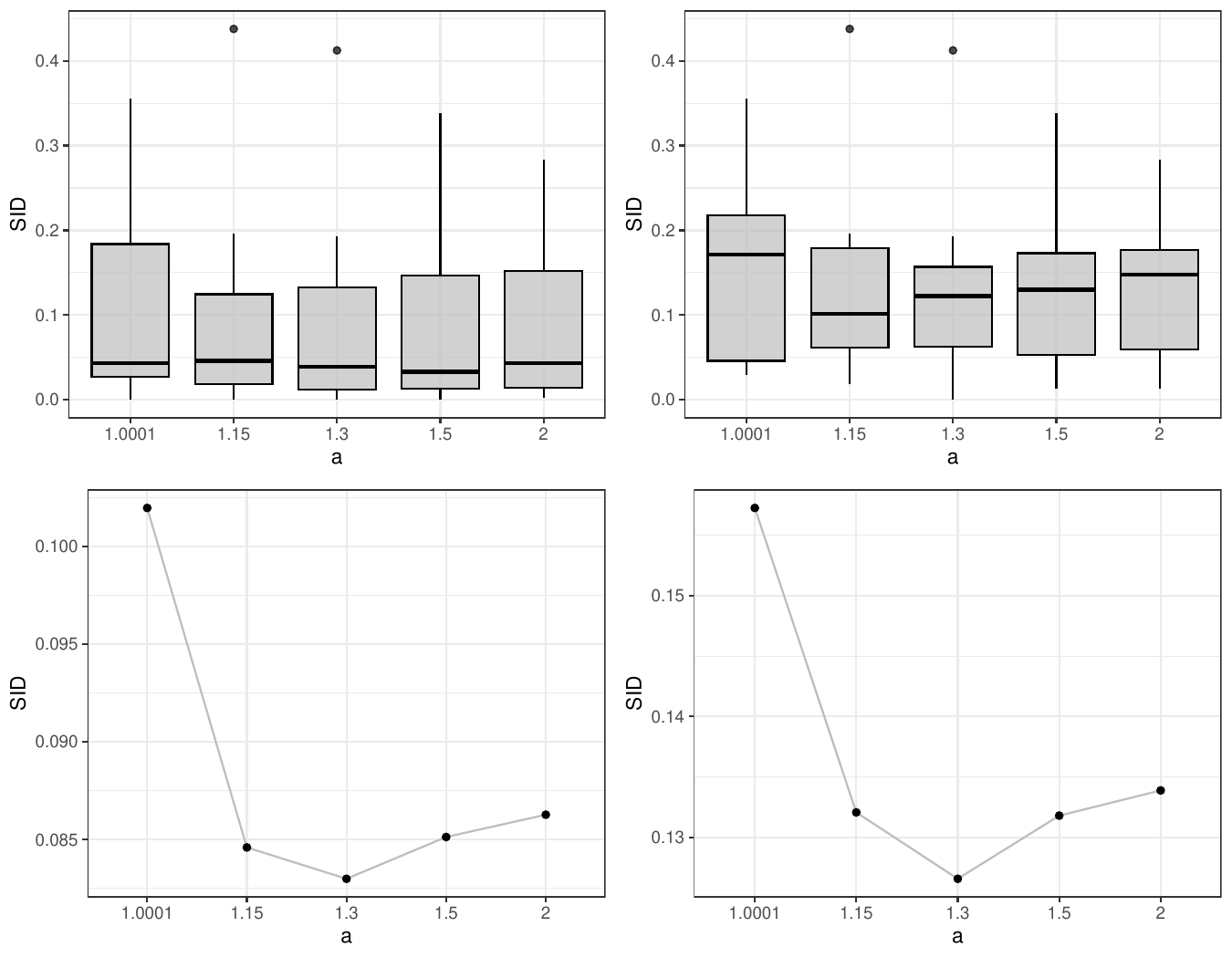}
		
		%\hspace{-0cm}\includegraphics[ height=4cm, width=10cm, clip]{means_a.pdf}
		
		\caption{Boxplots (top) and mean plots (bottom) of the SID of the causal orderings of Algorithm~\ref{p2causordalg} for 20 random DAGs for different choices of $a$. The left-hand panels show the results over all DAG configurations, and the right-hand panels show the results only for the denser DAGs  ($p=0.1$).} \label{p2sca}
	\end{figure}

	\newpage

\end{appendices}

\end{document}